\newtheorem{theorem}{Theorem}[section]
\newtheorem{lemma}[theorem]{Lemma}
\theoremstyle{definition}
\newtheorem{remark}[theorem]{Remark}
\newtheorem{definition}[theorem]{Definition}
\DeclareMathAlphabet{\bbold}{U}{bbold}{m}{n}
\newcommand{\id}{\ensuremath{\bbold{1}}}
\newcommand{\wh}{\widehat}
\newcommand{\eps}{\epsilon}
\newcommand{\N}{\mathcal{N}}
\newcommand{\R}{\mathbb{R}}
\newcommand{\mL}{\mathcal{L}}
\newcommand{\mH}{\mathcal{H}}
\newcommand{\mZ}{\mathcal{Z}}
\newcommand{\mC}{\mathcal{C}}
\newcommand{\mS}{\mathcal{S}}
\newcommand{\mT}{\mathcal{T}}
\newcommand{\mV}{\mathcal{V}}
\newcommand{\mR}{\mathcal{R}}
\renewcommand{\varepsilon}{\epsilon}
\renewcommand{\hat}{\wh}
\renewcommand{\bar}{\overline}
\renewcommand{\eps}{\epsilon}
\newcommand{\TV}{\mathsf{TV}}
\newcommand{\KL}{\mathsf{KL}}
\newcommand{\chance}{\mathsf{chance}}
\newcommand{\D}{\mathcal{D}}
\DeclareMathOperator*{\E}{\mathbb{E}}
\DeclareMathOperator{\ALG}{ALG}
\DeclareMathOperator{\poly}{poly}
\DeclareMathOperator*{\polylog}{polylog}
\DeclareMathOperator{\unif}{unif}
\title{Fast swap regret minimization and applications to approximate correlated equilibria}
\author{  
	Binghui Peng\footnote{Supported by NSF CCF-1703925, IIS-1838154, CCF-2106429, CCF-2107187, CCF-1763970, AF2212233, COLL2134095, COLL2212745} 
	\\ Columbia University \\  \texttt{bp2601@columbia.edu} 
	\and Aviad Rubinstein\footnote{Supported by NSF CCF-1954927, and a David and Lucile Packard Fellowship} 
	\\ Stanford University \\ \texttt{aviad@stanford.edu}
}
\date{}
\begin{document}
	\maketitle
	
	\begin{abstract}
		We give a simple and computationally efficient algorithm that, for any constant $\varepsilon>0$, obtains $\varepsilon T$-swap regret within only $T = \polylog(n)$ rounds; this is an exponential improvement compared to the super-linear number of rounds required by the state-of-the-art algorithm, and resolves the main open problem of~\cite{blum2007external}. Our algorithm has an exponential dependence on $\varepsilon$, but we prove a new, matching lower bound. 
		
		Our algorithm for swap regret implies faster convergence to $\varepsilon$-Correlated Equilibrium ($\varepsilon$-CE) in several regimes: For normal form two-player games with $n$ actions, it implies the first uncoupled dynamics that converges to the set of $\varepsilon$-CE in polylogarithmic rounds; a $\polylog(n)$-bit communication protocol for $\varepsilon$-CE in two-player games (resolving an open problem mentioned by~\cite{babichenko2017communication,ganor2018communication,goos2018near}); and an $\tilde{O}(n)$-query algorithm for $\varepsilon$-CE (resolving an open problem of~\cite{babichenko2020informational} and obtaining the first separation between $\varepsilon$-CE and $\varepsilon$-Nash equilibrium in the query complexity model).
		
		For extensive-form games, our algorithm implies a PTAS for  {\em normal form correlated equilibria}, a solution concept often conjectured to be computationally intractable (e.g.~\cite{von2008extensive, fujii2023bayes}). 
	\end{abstract}
	
	\setcounter{page}{0}
	\thispagestyle{empty}

	\newpage
	\section{Introduction}
\label{sec:intro}

We consider fundamental questions from online learning and game theory. 
%
In online learning, we seek algorithms that perform well in an unknown, dynamically changing environment. 
Specifically, we consider  algorithms that, on each day, select a (possibly mixed) strategy over $n$ available actions, and receive a reward for each chosen action; the rewards are dynamically adjusted by the unknown environment, possibly by an adaptive adversary who observes the history of the algorithm's actions on previous days.
The standard benchmark for this problem is the {\em external regret}, or the difference between the algorithm's cumulative reward and the single best-in-hindsight action; formally,
\begin{align*}
\texttt{external-regret}:=\max_{i^{*}\in [n]}\sum_{t\in [T]} r_{t}(i^{*}) - \sum_{t\in [T]} \langle p_t, r_t\rangle.
\end{align*}
Here, $T$ is the number of days, $r_t$ is the vector of reward for each action in day $t$, and $p_t$ is the algorithm's mixed strategy (or distribution over actions) in day $t$.
One of the most fundamental results in online learning is the existence of efficient algorithms that have vanishing external regret \cite{littlestone1994weighted,kalai2005efficient,arora2012multiplicative}.

While the bound on external regret is very important, it may be less attractive in highly dynamic environments where no single action performs well over the entire lifetime of the algorithm.
Our focus in this work is on {\em swap regret}%
\footnote{Sometimes also {\em internal regret}; see discussion in Appendix~\ref{app:internal-regret} for detailed discussion of terminology in the literature.},
introduced by~\cite{foster1998asymptotic} in the context of calibrated forecasting. In the forecasting game, a weather forecaster has to forecast the probability of rain on each day: a forecast is {\em calibrated}~\cite{dawid1982well} if, across all the days when the forecaster predicted rain probability $\pi$, the empirical proportion of rainy days indeed approaches $\pi$. If, on the other hand, the empirical proportion approaches $\rho \neq \pi$, then forecaster regrets not {\em swapping} $\pi \rightarrow \rho$. More generally, \cite{foster1998asymptotic}'s work extended the notion of regret to account for such swaps, aka 
compare the algorithm's strategy $p = (p_t)$ against all strategies that can be derived from $p$ by applying a swap function $\phi: [n] \rightarrow [n]$ to $p$'s choices. Formally, let $\Phi_n$ be all swap functions that map from $[n]$ to $[n]$; the swap regret measures the maximum gain one could have obtained when using a fixed swap function over its history strategies
\begin{align}\label{eq:swap-def}
\texttt{swap-regret}:=\max_{\phi \in \Phi_n}\sum_{t\in [T]} \sum_{i\in [n]}p_t(i)r_t(\phi(i)) - \sum_{t\in [T]} \langle p_t, r_t\rangle.
\end{align}


There has been extensive work on minimizing swap regret, e.g.~\cite{foster1998asymptotic, foster1999regret,hart2000simple,cesa2006prediction,stoltz2007learning,blum2007external,hart2013simple,chen2020hedging,anagnostides2022near,anagnostides2022uncoupled}. . 
But  all algorithms proposed to date do not guarantee diminishing regret before  a linear number of days ($T = \Omega(n)$)%
\footnote{In fact, to the best of our knowledge all algorithms proposed to date require a slightly super-linear $T = \Omega(n \log(n))$ number of days.}. For example,~\cite{cesa2006prediction} describe a reduction from external regret to swap regret by considering $n^n$ experts corresponding to each of the $n^n$ possible swap functions.
However, the exponential number of experts/swap functions implies that while simple algorithms can achieve $\varepsilon$-external regret in $\Theta(\log(n))$ days (for arbitrarily small constant $\varepsilon>0$), the algorithm from~\cite{cesa2006prediction}'s reduction requires $\Theta(\log(n^n)) = \tilde{\Theta}(n)$ days, namely exponentially slower. \cite{blum2007external,ito2020tight} show that the $\tilde{\Theta}(n)$ is in fact tight if we restrict the algorithm to pure strategies $p_t \in [n]$. \cite{blum2007external} asked whether the swap regret can be minimized in sublinear time using mixed strategies; to the best of our knowledge, despite its importance (see also applications to game theory below), no progress was made on this question.

Our main result resolves ``the key open problem'' from~\cite{blum2007external}, giving a simple algorithm that achieves $\varepsilon$-swap regret in exponentially faster.
\begin{restatable}[Swap regret minimization]{theorem}{multiMWU}
\label{thm:multi-scale}
Let $n \geq 1$ be the number of actions. 
For any $\eps > 0$, there is an algorithm that obtains at most $\eps$-swap regret in a sequence of $(\log(n)/\eps)^{O(1/\eps)}$ days. 
\end{restatable}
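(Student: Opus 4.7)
The plan is to build a recursive, multi-scale algorithm whose depth of recursion is $D = O(1/\epsilon)$ and whose runtime grows by a multiplicative $(\log n/\epsilon)^{O(1)}$ factor at each level. The base case is the standard multiplicative weights update (MWU), which achieves $\epsilon$-external regret on $n$ actions in $T_0 = O(\log n / \epsilon^2)$ rounds; external regret is precisely swap regret against constant swap functions, so this is ``depth-$0$ swap regret.'' I would then construct a family $\{\mathcal{A}_d\}_{d=0}^{D}$ where $\mathcal{A}_d$ achieves swap regret $O((1-d\epsilon)T)$ within $T_d = (\log n/\epsilon)^{O(d)}$ rounds, so the final algorithm $\mathcal{A}_D$ meets the theorem statement.

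The driving identity is the Blum--Mansour decomposition
\[
\texttt{swap-regret}(p) = \sum_{i\in [n]} \max_{j\in [n]} \sum_{t\in [T]} p_t(i)\bigl(r_t(j) - r_t(i)\bigr),
\]
which weights each action's contribution by the total mass the algorithm places on it. At every round, only the ``heavy'' set $H_t := \{i : p_t(i) \geq \epsilon\}$, of size at most $1/\epsilon$, contributes non-negligibly to swap regret; the remaining actions together contribute at most $\epsilon$ per round. The refinement step is to overlay an auxiliary MWU that, on top of the current sub-algorithm's play, reassigns mass within $H_t$ in order to drive down the per-heavy-action external regret. Because the heavy set has size $O(1/\epsilon)$, this auxiliary overlay costs only $(\log n/\epsilon)^{O(1)}$ extra rounds, and its guarantee shaves an additive $\epsilon$ off the swap regret bound inherited from the sub-algorithm.

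The main obstacle will be making the composition across $D = O(1/\epsilon)$ levels rigorous. The heavy set $H_t$ evolves across rounds, so an individual action may enter and leave the ``heavy'' zone repeatedly, and the auxiliary MWU instances at different levels interact through the evolving distribution. I anticipate handling this by maintaining one MWU instance per (action, level) pair, updated only on the subset of rounds when that action is heavy at that level, and controlling the total external regret via a potential-function argument that charges each round to the single level at which the relevant action first becomes heavy. A secondary technical subtlety is that the auxiliary processes must not create adversarial feedback into the outer distribution; I expect to handle this either by partitioning the $T$ rounds into independent blocks for each recursion level or by analyzing the system directly as a nested fixed point, in the spirit of Hart--Mas-Colell--style stationary-distribution arguments. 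Assembling these pieces yields the final bound $T_D = (\log n/\epsilon)^{O(1/\epsilon)}$.
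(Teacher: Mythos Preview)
Your central structural claim is false. You write that ``the remaining actions together contribute at most $\epsilon$ per round,'' but this is not true: at round $t$ the light actions contribute $\sum_{i:\,p_t(i)<\epsilon} p_t(i)\bigl(r_t(\phi(i))-r_t(i)\bigr)$, and the total light mass $\sum_{i:\,p_t(i)<\epsilon} p_t(i)$ can be as large as $1$ (take $p_t$ uniform on $n>1/\epsilon$ actions, so that every action is light). Thus the heavy/light split in the Blum--Mansour decomposition does not isolate a negligible term, and there is no mechanism in your proposal that forces $p_t$ to be supported on $O(1/\epsilon)$ actions. Without that, the ``auxiliary MWU on the heavy set'' has nothing to work with, and the promised $\epsilon$-per-level improvement does not follow. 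The later patches you sketch (per-(action,level) MWU instances, potential functions, Hart--Mas-Colell fixed points) do not address this gap: they all presuppose that only $O(1/\epsilon)$ actions matter, which is exactly what fails.

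The paper's argument avoids the per-action decomposition entirely. Its key observation is the crude bound
\[
\texttt{swap-regret}\ \le\ \sum_{t}\|r_t\|_\infty \;-\; \sum_t \langle p_t,r_t\rangle,
\]
together with the MWU external-regret guarantee $\sum_t\langle p_t,r_t\rangle \ge \bigl\|\sum_t r_t\bigr\|_\infty - O(B\sqrt{T\log n})$. This leaves the quantity $\sum_t\|r_t\|_\infty - \bigl\|\sum_t r_t\bigr\|_\infty$ to be controlled. The trick is to run $2^S$ copies of MWU at geometrically increasing time scales and average them uniformly: at each doubling of $S$, half the threads telescope the fine-scale $\|r_t\|_\infty$'s down to block-level $\|R_\tau\|_\infty$'s, and the other half telescope the blocks down further, so the residual term picks up a factor of $2^{-S}$. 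Taking $S=\Theta(\log(1/\epsilon))$ and block size $H=\Theta(\log(n)/\epsilon^2)$ gives the stated bound after $T=H^{2^S}=(\log(n)/\epsilon)^{O(1/\epsilon)}$ rounds. The recursion is over \emph{time scales}, not over a heavy-action hierarchy, and it never needs the distribution to be sparse.
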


While our result gives exponential improvement for constant $\varepsilon$, the dependence on $\varepsilon$ is exponential. We complement our algorithm with a matching lower bound.

\begin{restatable}[Lower bound]{theorem}{LB}
\label{thm:lower-oblivious}
Let $n$ be the number of actions, $T$ be the total number of days. There exists an oblivious adversary such that any online learning algorithm must have at least 
\[
\Omega\left(\min\left\{\frac{T}{\log (T)},  \sqrt{n^{1-o(1)}T}\right\}\right)
\]
expected swap-regret over a sequence of $T$ days.
\end{restatable}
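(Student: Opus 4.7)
The plan is to apply Yao's minimax principle: exhibit a distribution over oblivious reward sequences on which every deterministic algorithm has large expected swap regret. I will prove the two terms inside the $\min$ by two separate constructions, so that for any given $n,T$ the weaker of the two bounds is what is claimed.

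For the $\sqrt{n^{1-o(1)} T}$ term I would adapt the classical lower bound for swap regret: draw a uniformly random permutation $\pi \in S_n$ and a sequence of random pivots $i_1,\dots,i_T$. At round $t$ the reward vector is noisy with mean $\tfrac12 + \delta$ on coordinate $\pi(i_t)$ and $\tfrac12$ elsewhere, where $\delta = \Theta(\sqrt{n/T})$. The swap function $\phi = \pi$ gains $\Omega(\delta T) = \Omega(\sqrt{nT})$ in expectation, while a standard KL-divergence argument shows that no algorithm can identify $\pi$ accurately enough to close the gap in $T$ rounds. The $n^{1-o(1)}$ factor absorbs the mild logarithmic slack in the information-theoretic step and in passing from an adaptive to an oblivious adversary.

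For the $T/\log T$ term, the plan is a multi-scale construction that mirrors the algorithm of Theorem~\ref{thm:multi-scale}. Let $d = \Theta(\log T)$ and assume $n \ge 2d$ (otherwise the first term is already binding). Partition $[n]$ into $d$ disjoint groups $G_1,\dots,G_d$ and the $T$ rounds into $d$ equal blocks $B_1,\dots,B_d$. Within $B_i$, sample a uniform secret target $\tau_i \in G_i$ and play a reward sequence supported on $G_i$, with entry $\tau_i$ equal to $1$, randomized so that for the first $\Omega(|B_i|)$ rounds it is statistically indistinguishable from the corresponding sequence for any other $\tau_i' \in G_i$. Because the groups are disjoint, a single swap function $\phi$ can route every action in $G_i$ to $\tau_i$ for every $i$ simultaneously, picking up $\Omega(|B_i|)$ reward per block; the learner, unable to identify $\tau_i$ within $B_i$, collects only an $O(1/|G_i|)$-fraction. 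Summing over the $d$ blocks, the expected swap regret is $\Omega(T/\log T)$ after absorbing a standard ``learning-curve'' logarithmic loss.

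The hardest step will be the intra-block indistinguishability construction: per block $B_i$, I need to design, as a function of the secret $\tau_i$, a reward sequence whose first $\Omega(|B_i|)$ observations are information-theoretically nearly uninformative about $\tau_i$, yet against which the oracle swap $\phi$ extracts $\Omega(|B_i|)$ regret. A ``noisy matching-pennies''--style randomization over reward vectors on $G_i$ should suffice, but composing the per-block guarantees additively into a global lower bound is delicate; disjointness of the $G_i$'s is essential since it lets a single $\phi$ harvest regret from each block independently, and one must verify that weight the algorithm shifts \emph{outside} $G_i$ during $B_i$ does not accidentally help its swap-regret. Finally, Yao's principle converts the randomized hard distribution into a fixed oblivious sequence realizing the stated bound.
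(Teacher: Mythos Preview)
Both constructions in your plan have genuine gaps.

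For the $T/\log T$ term, the claim that the swap function picks up $\Omega(|B_i|)$ \emph{regret} per block cannot hold with what you describe. If $\tau_i$ has a deterministic reward advantage it is identified in one round and the learner matches it; if instead you randomize so that $\tau_i$ has gap $\delta$ (say $r_t(\tau_i)\sim B_{1/2+\delta}$ versus $B_{1/2}$ for the other actions in $G_i$), then indistinguishability over $|B_i|$ rounds forces $\delta = O(\sqrt{\log|G_i|/|B_i|})$, and the per-block swap regret is only $\delta|B_i| = O(\sqrt{|B_i|\log|G_i|})$. Summing over $d=\Theta(\log T)$ blocks gives $O(\sqrt{T\log T\,\log n})$, not $\Omega(T/\log T)$. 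The deeper issue is that within a single block the benchmark ``swap all of $G_i$ to $\tau_i$'' is an \emph{external}-regret benchmark on $G_i$, and no single-block construction can force more than $\tilde O(\sqrt{|B_i|})$ of that. A flat partition into $\log T$ blocks is not enough; the $T/\log T$ term genuinely requires a recursive, multi-level structure. You also never rule out the algorithm that simply refuses to put mass on $G_i$ during $B_i$, in which case the swap $G_i\to\tau_i$ contributes nothing.

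For the $\sqrt{nT}$ term, the permutation construction does not work as written: with $i_t$ uniform and the boosted coordinate equal to $\pi(i_t)$, one has $\E[r_t(\pi(i))-r_t(i)]=\delta\bigl(\Pr[i=i_t]-\Pr[i=\pi(i_t)]\bigr)=0$ for every $i$ and every fixed $\pi$, so the swap $\phi=\pi$ gains zero in expectation regardless of the algorithm. The known $\sqrt{nT}$ bounds of \cite{blum2007external,ito2020tight} are for adaptive adversaries against \emph{realized} actions; porting them to the oblivious, distributional setting is exactly the difficulty here, and ``absorbing the slack into $n^{1-o(1)}$'' is not an argument.

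The paper uses a \emph{single} construction for both regimes: a depth-first traversal of a $K$-ary tree of depth $L$ (with $n=2K^L$), where after each child the adversary skips the rest of the current subtree with probability $\Theta(1/K)$, and at each visited leaf plays a two-coin game of length $\Theta(1/\Delta^2)$. Rewards are stratified by level so that not-yet-visited subtrees look slightly worse than the current leaf, while already-visited actions get reward $-1$. The analysis splits the algorithm's total mass into mass placed on actions \emph{before} their leaf is reached, \emph{during} the leaf's visit, and \emph{after}. The first is punished by a swap exploiting the random skipping (contributing $\Omega(\E[T_{\ALG}]/(KL))$), the second by the two-coin indistinguishability ($\Omega(\E[T_{\ALG}]\Delta/L)$), and the third trivially by the $-1$ rewards. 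Tuning $K$, $L$, $\Delta$ across the regimes of $T$ versus $n$ then yields both terms of the theorem. The recursive tree with random pruning --- not parallel blocks --- is what makes the $T/\log T$ term go through.
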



\subsection*{Game Theory}
In game theory, instead of a single algorithm we study the dynamics between $m \ge 2$ selfish agents (henceforth ``players''). 
Nash's theorem~\cite{nash1950equilibrium,nash1951non} says that every finite game has a Nash equilibrium where players have no incentive to deviate.
However, it has been observed as early as \cite{robinson1951iterative,brown1951iterative} that even in very simple games, natural dynamics may not converge to a Nash equilibrium (see also e.g.~\cite{hart2003uncoupled,milionis2023impossibility}). 
A line of work from the past couple of decades on the complexity of computing (approximate) Nash equilibrium~\cite{daskalakis2009complexity, chen2009settling, chen2015well, rubinstein2015inapproximability, rubinstein2016settling, babichenko2016query, babichenko2017communication, goos2018near} extends these results by showing that {\em no efficient dynamics} can guarantee convergence to a Nash equilibrium.


Perhaps the most important alternative to Nash's equilibrium is Aumann's {\em correlated equilibrium}~\cite{aumann1974subjectivity} --- a relaxation of Nash equilibrium defined as follows:
Consider a trusted centralized {\em correlation device} that sends each player a recommended action in their action set, drawn from a joint distribution $\D$. We say that $D$ is an {\em $\eps$-correlated equilibrium} if no player can gain $\varepsilon$ (in expectation over $D$) by deviating from the correlating device's recommendations%
\footnote{Some authors only allow the player to deviate on a single recommended action; while the definitions coincide for exact correlated equilibrium, ours is stronger for approximate correlated equilibrium. In particular, as pointed by~\cite{ganor2018communication,babichenko2020informational} if each player mixes uniformly over their actions, we trivially obtain a $1/n$-approximate correlated equilibrium w.r.t.~the weaker notion that only considers deviating on a single recommended action. See also discussion of swap vs internal regret in Appendix~\ref{app:internal-regret}.}. Formally, for every player $i$ with action set $A_i$, and for any swap function $\phi_i: A_i \rightarrow A_i$, we have 
\begin{align*}
 \E_{a \sim \D}[u_i(a_i; a_{-i})] \geq \E_{a\sim \D}[u_i(\phi_i(a_i); a_{-i})] -\eps && \text{($\eps$-Correlated Equilibrium)}
\end{align*}
Fortunately, \cite{papadimitriou2008computing,jiang2015polynomial} give LP-based polynomial time algorithms that allow a centralized planner who knows all the players' payoff functions to compute correlated equilibria.
%
%

But what happens when you take away the omniscient centralized planner? Can natural, {\em uncoupled dynamics}%
\footnote{Formally, uncoupled dynamics require that each player chooses their strategy based on the history of play and their own payoff function, in particular they do not directly have access to other players' payoff functions.} 
between selfish agents converge to correlated equilibria?
It is known if every agent minimizes their own swap regret, the dynamics converge to the set of correlated equilibria~\cite{foster1997calibrated,fudenberg1999conditional, cesa2006prediction, blum2007external}; in particular, previous work implies convergence to $\eps$-approximate correlated equilibria in $\tilde{\Theta}(n)$. 
Plugging in our main result, we obtain exponentially faster convergence to the set of correlated equilibria (see open problems by e.g.~\cite{blum2007external,anagnostides2022near}).

\begin{restatable}[Uncoupled dynamics]{corollary}{uncoupledCE}
\label{cor:uncouple}
    Let $n$ be the number of actions. For any $\eps > 0$, there exists an uncoupled dynamic that converges to the set of $\eps$-approximate correlated equilibria of a multi-player normal-form game in $(\log(n))^{O(1/\eps)}$ iterations. 
\end{restatable}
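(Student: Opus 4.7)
The plan is to invoke the classical reduction from swap-regret minimization to correlated equilibria (\cite{foster1997calibrated,hart2000simple,blum2007external}) and plug in Theorem~\ref{thm:multi-scale} directly. Concretely, let every player $i \in [m]$ independently run the swap-regret algorithm of Theorem~\ref{thm:multi-scale} on their own action set $A_i$ (of size at most $n$) for $T = (\log n)^{O(1/\eps)}$ rounds. At each round $t$, player $i$ plays a mixed strategy $p_t^{(i)} \in \Delta(A_i)$, and then receives as feedback the expected reward vector
\[
r_t^{(i)}(a) \;=\; \E_{a_{-i} \sim \prod_{j \neq i} p_t^{(j)}}\bigl[u_i(a;\,a_{-i})\bigr], \qquad a \in A_i.
\]
This is an uncoupled dynamic: to compute $r_t^{(i)}$ player $i$ needs only their own payoff function $u_i$ together with the mixed strategies of the other players in round $t$ (which are part of the standard information available in uncoupled play).

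Next I would define the empirical joint distribution $\D_T \;=\; \frac{1}{T}\sum_{t=1}^{T} \prod_{i=1}^m p_t^{(i)}$ and verify the $\eps$-CE guarantee by a direct calculation. For any player $i$ and any swap function $\phi_i : A_i \to A_i$,
\[
\E_{a \sim \D_T}\bigl[u_i(\phi_i(a_i); a_{-i})\bigr] - \E_{a \sim \D_T}\bigl[u_i(a_i; a_{-i})\bigr]
\;=\; \frac{1}{T}\sum_{t=1}^{T} \sum_{a_i \in A_i} p_t^{(i)}(a_i)\bigl[r_t^{(i)}(\phi_i(a_i)) - r_t^{(i)}(a_i)\bigr],
\]
which is exactly $1/T$ times the swap regret of player $i$'s online algorithm against the comparator $\phi_i$. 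By Theorem~\ref{thm:multi-scale} this is at most $\eps$, so $\D_T$ is an $\eps$-correlated equilibrium.

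Finally, the round complexity is inherited from Theorem~\ref{thm:multi-scale}: each player's algorithm obtains $\eps$-swap regret within $(\log n /\eps)^{O(1/\eps)} = (\log n)^{O(1/\eps)}$ rounds (absorbing the $\eps$ factors into the exponent), giving the claimed bound. There is really no substantive obstacle here: the entire content of the corollary is the swap-regret bound of Theorem~\ref{thm:multi-scale}, and the only thing to check is that (i) the reduction is genuinely uncoupled, i.e., player $i$'s update depends only on $u_i$ and the history, and (ii) the swap-regret bound per player translates into the (stronger, $\phi$-for-all-$a_i$) notion of $\eps$-CE used in the paper, both of which follow from the display above.
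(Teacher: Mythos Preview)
Your proposal is correct and is exactly the paper's approach: the paper's proof is the single sentence ``a direct combination of Theorem~\ref{thm:multi-scale} and Lemma~\ref{lem:uncouple1},'' and you have faithfully unpacked that combination, including the verification that the dynamic is uncoupled and that the per-player swap-regret bound yields the $\eps$-CE guarantee for the empirical distribution.
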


The complexity of finding an approximate correlated equilibrium has also been studied in the {\em query complexity} model, where the algorithm has to access the agents' utility functions via an oracle, and the {\em communication complexity} model, where each agent knows their own utility function, and their goal is to jointly find an approximate correlated equilibrium. For a 2-player, $n$-action game, the previous state of the art protocols for $\varepsilon$-approximate correlated equilibrium have query complexity $\Theta(n^2)$ (brute-force) or communication complexity $\tilde{\Theta}(n)$ (based on~\cite{blum2007external}'s swap regret minimization). Using our main result we obtain optimal protocols in both models, resolving open problems by~\cite{babichenko2017communication,ganor2018communication,goos2018near,babichenko2020informational}. 
\begin{restatable}[Query complexity]{corollary}{queryCE}
\label{cor:query}
    Let $m$ be the number of players,  $n$ be the number of actions. There exists a randomized query algorithm that obtains an $\eps$-approximate correlated equilibrium using at most $m n (\log(mn))^{O(1/\eps)}$ payoff queries, with success probability $1- 1/(mn)^{\omega(1)}$.  
\end{restatable}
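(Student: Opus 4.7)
The plan is to run the swap-regret minimizer of Theorem~\ref{thm:multi-scale} for every player in parallel, but feed each algorithm \emph{sampled} estimates of its reward vector instead of exact ones. Recall the standard reduction (cf.~\cite{foster1997calibrated,blum2007external}): if every player attains swap regret at most $\eps T$, then the empirical joint distribution of play $\D:=\tfrac{1}{T}\sum_{t\in[T]}\prod_{k\in[m]} p_t^{(k)}$ is an $\eps$-approximate correlated equilibrium. The naive implementation fails because computing the true expected reward $r_t^{(k)}(a_k) = \E_{a_{-k}\sim p_t^{(-k)}}[u_k(a_k,a_{-k})]$ exactly would take $\Omega(n^{m-1})$ queries per $(k,a_k,t)$.

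At each round $t$, after every player $k$ produces the mixed strategy $p_t^{(k)}$, I would draw $K$ independent joint profiles $a_t^{(1)},\dots,a_t^{(K)}\sim\prod_k p_t^{(k)}$. For each player $k$ and each action $a_k\in[n]$, query $u_k(a_k,\, a_{t,-k}^{(s)})$ for all $s\in[K]$ and set
\begin{align*}
\hat r_t^{(k)}(a_k) \;:=\; \frac{1}{K}\sum_{s=1}^{K} u_k\bigl(a_k,\; a_{t,-k}^{(s)}\bigr),
\end{align*}
which is an unbiased estimator of $r_t^{(k)}(a_k)$ using $mnK$ queries per round. Feed $\hat r_t^{(k)}$ as the round-$t$ reward vector into player $k$'s swap-regret minimizer, and run for $T=(\log(n)/\eps)^{O(1/\eps)}$ rounds.

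For the analysis, Hoeffding's inequality combined with a union bound over all $mnT$ triples $(k,a_k,t)$ shows that choosing $K$ to be a sufficiently growing function of $\log(mn)/\eps^2$ (e.g., $K=\Theta(\log^2(mn)/\eps^2)$) yields $\|\hat r_t^{(k)}-r_t^{(k)}\|_\infty\le \eps/4$ for every $k$ and $t$ simultaneously, with failure probability $1/(mn)^{\omega(1)}$. On this good event, Theorem~\ref{thm:multi-scale} guarantees that each player's swap regret against $\hat r$ is at most $(\eps/2)T$; moreover, since $\|\hat r_t^{(k)}-r_t^{(k)}\|_\infty\le \eps/4$ uniformly, for \emph{every} swap function $\phi \in \Phi_n$,
\begin{align*}
\Bigl|\sum_{t}\sum_{a_k} p_t^{(k)}(a_k)\bigl[r_t^{(k)}(\phi(a_k))-\hat r_t^{(k)}(\phi(a_k))\bigr]\Bigr|\;\le\;(\eps/4)T,
\end{align*}
and the analogous bound holds for the identity swap. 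Consequently swap regret against the true rewards exceeds swap regret against $\hat r$ by at most $(\eps/2)T$, so every player has true swap regret at most $\eps T$, and the empirical distribution $\D$ is an $\eps$-CE. The total query cost is $T\cdot mnK \;=\; mn\cdot(\log(n)/\eps)^{O(1/\eps)}\cdot O(\log^2(mn)/\eps^2) \;=\; mn(\log(mn))^{O(1/\eps)}$.

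The step I expect to require the most care is transferring the regret guarantee from $\hat r$ back to $r$ without blowing up the query cost: a naive union bound over the $n^n$ functions in $\Phi_n$ would be ruinous. The uniform $\ell_\infty$ control on the reward estimates sidesteps this entirely, since a uniform $\eps/4$ perturbation of every reward vector induces a uniform additive perturbation of swap regret over \emph{all} $\phi$ at once, regardless of $|\Phi_n|$.
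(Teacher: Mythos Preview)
Your proposal is correct and follows essentially the same approach as the paper: it is exactly the protocol of Figure~\ref{fig:algo} analyzed via Lemma~\ref{lem:protocol}, with the same Hoeffding/union-bound argument over $(k,a_k,t)$ to get uniform $\ell_\infty$ control on the sampled rewards and the same transfer of the $\hat r$-swap-regret bound to the true $r$. The only cosmetic difference is your choice $K=\Theta(\log^2(mn)/\eps^2)$ versus the paper's $K=\Theta(\log^2(mn)/\eps^3)$, both of which are absorbed into the final $mn(\log(mn))^{O(1/\eps)}$ query count.
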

We note that this gives the first separation of query complexity of approximate correlated equilibrium and approximate Nash equilibrium (as even the communication complexity of approximate Nash equilibrium is near quadratic~\cite{goos2018near}).

\begin{restatable}[Communication complexity]{corollary}{communicationCE}
\label{cor:communication}
Let $n$ be the number of actions. For any $\eps > 0$, there exists a randomized communication protocol that obtains an $\eps$-approximate correlated equilibrium in a two-player $n$-action game using $(\log(n))^{O(1/\eps)}$ bits of communication, with success probability $1- 1/n^{\omega(1)}$.
\end{restatable}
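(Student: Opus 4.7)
The plan is to implement the uncoupled dynamics of Corollary~\ref{cor:uncouple} as a two-party communication protocol. Each player $i\in\{1,2\}$ internally executes the swap-regret algorithm of Theorem~\ref{thm:multi-scale} for $T = (\log(n)/\eps)^{O(1/\eps)}$ rounds. In round $t$, the algorithm produces a mixed strategy $p_t^i\in\Delta([n])$; the player draws a sample $a_t^i\sim p_t^i$ using private coins and transmits it to the opponent using $\lceil\log_2 n\rceil$ bits. Having received the opponent's sample $a_t^{-i}$, player $i$ uses their private payoff function to construct the reward vector $r_t^i(a) := u_i(a, a_t^{-i})$ and feeds it to the algorithm to obtain $p_{t+1}^i$.

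After $T$ rounds both players have observed the transcript $\{(a_t^1,a_t^2)\}_{t=1}^T$, and the protocol outputs the empirical joint distribution $\hat D := \frac{1}{T}\sum_{t=1}^T \delta_{(a_t^1,a_t^2)}$. The total communication is $2T\lceil\log_2 n\rceil = (\log n)^{O(1/\eps)}$ bits, matching the claim.

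It remains to show that $\hat D$ is an $\eps$-correlated equilibrium with probability at least $1-1/n^{\omega(1)}$. The pathwise swap-regret bound of Theorem~\ref{thm:multi-scale} gives, for each player $i$ and every swap function $\phi\colon [n]\to[n]$,
\[
\sum_t \langle p_t^i,\, r_t^i\circ\phi - r_t^i\rangle \;\le\; \tfrac{\eps}{2}\,T,
\]
which is precisely the player-$i$ CE deviation of the ``hybrid'' distribution $\frac{1}{T}\sum_t p_t^i\otimes\delta_{a_t^{-i}}$. Passing from this hybrid distribution to $\hat D$ by replacing the marginal $p_t^i$ with the sampled point mass $\delta_{a_t^i}$ introduces, for each $\phi$, a martingale sum $\sum_t Y_t^\phi$ whose increments are bounded in absolute value by $2$; Azuma--Hoeffding then gives $O(\sqrt{T\log(1/\delta)})$ deviation per fixed $\phi$ with probability at least $1-\delta$.

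The main technical obstacle is making this concentration uniform in $\phi$, since there are $n^n$ swap functions and a naive union bound would require $T = \tilde\Omega(n)$. I would resolve this by (i) observing that the empirical CE deviation depends on $\phi$ only through its restriction to the random sampled support $S := \{a_t^1,a_t^2\}_{t\in[T]}$, and (ii) exploiting the structure of the algorithm in Theorem~\ref{thm:multi-scale}---or, if necessary, inserting a deterministic truncation step that replaces $p_t^i$ by a $(\log n/\eps)^{O(1/\eps)}$-sparse approximation at the cost of an $O(\eps)$ additive loss in swap regret---to guarantee that the effective number of distinct restrictions of $\phi$ to $S$ is at most $n^{(\log n/\eps)^{O(1/\eps)}}$. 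Combining Azuma--Hoeffding with a union bound over this reduced class of swap functions yields total deviation $\ll \eps T$ simultaneously for all $\phi$ with failure probability $1/n^{\omega(1)}$, closing the argument.
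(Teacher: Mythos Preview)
Your symmetric protocol---both players run the swap-regret algorithm, exchange single samples, and output the empirical joint distribution---runs into exactly the obstacle you name, and your proposed fix does not close it. Truncating each $p_t^i$ to a $(\log n/\eps)^{O(1/\eps)}$-sparse vector is not innocuous for swap regret: the analysis of Theorem~\ref{thm:multi-scale} (the inductive telescoping over levels of the multi-scale MWU) uses the exact MWU distributions $q_{k,t}$, which have full support, and there is no reason the truncated mixture should inherit an $O(\eps T)$ swap-regret bound. Even granting the truncation, the sampled support $S$ over $T$ rounds has size up to $T\cdot(\log n/\eps)^{O(1/\eps)} = (\log n/\eps)^{O(1/\eps)}$, so the number of relevant swap restrictions is $n^{(\log n/\eps)^{O(1/\eps)}}$; plugging this into Azuma gives a deviation of order $\sqrt{T\cdot (\log n)^{O(1/\eps)}\log n}$, and you would need to check carefully that this is $o(\eps T)$ rather than merely assert it. As written, step (ii) is a hope, not an argument.

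The paper avoids the union bound entirely by breaking the symmetry. Only Alice runs multi-scale MWU; in each round she sends Bob $K = O(\log^2(n)/\eps^3)$ i.i.d.\ samples from $p_t$, and Bob replies with a single pure best response $j_t$ to the empirical distribution of those samples. The output is $\frac{1}{T}\sum_t p_t\otimes e_{j_t}$, which uses Alice's \emph{actual} mixed strategy. Alice's CE constraint is then literally her swap-regret bound---no concentration needed. Bob plays a pure action each round, so any swap function just replaces $j_t$ by some $\phi(j_t)$; since $j_t$ is a best response to $\widehat p_t$ and $\widehat p_t$ is close to $p_t$ (Chernoff over $K$ samples, union bound over $n$ actions and $T$ rounds---only $nT$ events, not $n^n$), $j_t$ is an $\eps$-best response to $p_t$, and Bob's CE constraint follows. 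The asymmetry is the whole point: it trades the $n^n$-swap-function union bound for an $n$-action one.
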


We also obtain a faster algorithm (in the standard computational model) for computing $\varepsilon$-approximate correlated equilibrium. 

\begin{restatable}[Computational complexity]{corollary}{computeCE}
\label{cor:computational-CE}
Let $m$ be the number of players, $n$ be the number of actions. For any $\eps > 0$, there exists a randomized algorithm that computes an $\eps$-approximate correlated equilibrium in time $mn (\log(mn))^{O(1/\eps)}$, with success probability at least $1 - 1/(mn)^{\omega(1)}$.
\end{restatable}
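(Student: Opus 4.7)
The plan is to instantiate the uncoupled dynamics from Corollary~\ref{cor:uncouple} as a centralized randomized algorithm, and bound both the number of iterations and the per-iteration computational cost. The centralized simulator runs $m$ copies of the swap regret algorithm from Theorem~\ref{thm:multi-scale} (one per player) for $T = (\log(mn)/\eps)^{O(1/\eps)}$ rounds; by the standard reduction from swap regret to correlated equilibrium, the empirical joint distribution of play is an $\eps$-approximate correlated equilibrium.

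The main subtlety is that, at round $t$, player $i$'s reward vector $r_i^t(a_i) = \E_{a_{-i} \sim \prod_{j \neq i} p_j^t}[u_i(a_i, a_{-i})]$ depends on all the other players' current strategies, and exact evaluation costs $\Omega(n^{m-1})$. I would sidestep this via Monte Carlo sampling: draw $s = O(\log(mnT)/\eps^2)$ joint profiles from the product distribution $\prod_{j \neq i} p_j^t$ and use the empirical utility at each of player $i$'s $n$ actions as an unbiased estimate. A Hoeffding bound together with a union bound over players, rounds, and actions yields $\eps/10$-accurate reward estimates with probability $1 - 1/(mn)^{\omega(1)}$. The swap regret guarantee of Theorem~\ref{thm:multi-scale} is robust to such additive noise in the reward sequence (the algorithm effectively plays against a slightly perturbed oblivious adversary), so after rescaling $\eps$ by an absolute constant the empirical joint distribution remains an $\eps$-CE.

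Finally, I would bound the per-iteration work. Sampling from $p_j^t$ takes $O(\log n)$ using a standard segment-tree or alias-table data structure maintained under the updates of the swap regret algorithm, and querying the $s$ sampled profiles at each of player $i$'s $n$ actions contributes $\tilde{O}(n)$ per player per round; summed over $m$ players and $T$ rounds this gives $mn \cdot (\log(mn))^{O(1/\eps)}$ total queries and work for the reward estimation. The main obstacle is to verify that the internal update step of the algorithm from Theorem~\ref{thm:multi-scale} can itself be implemented in $\tilde{O}(n)$ time per player per iteration---i.e., avoiding the $\Omega(n^2)$ stationary-distribution/fixed-point computation of Blum--Mansour-style reductions---so that the computational cost matches the query bound of Corollary~\ref{cor:query}. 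Once this is established, the total runtime is $mn \cdot (\log(mn))^{O(1/\eps)}$ and the success probability bound follows from the earlier union bound, completing the proof.
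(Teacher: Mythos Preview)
Your proposal is correct and follows essentially the same approach as the paper: run the protocol where each player executes multi-scale MWU, estimate reward vectors by Monte Carlo sampling from the other players' mixed strategies, and bound the total work by (number of players) $\times$ (actions) $\times$ (samples) $\times$ (rounds).

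The one point you flag as an ``obstacle'' is in fact immediate from the structure of Algorithm~\ref{algo:multi}: multi-scale MWU maintains $2^{S} = O(1/\eps)$ independent threads of ordinary MWU, each of which updates in $O(n)$ time per round (update the $n$ weights and normalize). Hence the per-player per-round update cost is $O(n/\eps)$, with no stationary-distribution computation as in Blum--Mansour. The paper's proof simply states this and multiplies out; your proposal would be complete once you make the same observation.
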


Beyond normal-form games, several extensions of correlated equilibria have been considered  for Bayesian games, where players have incomplete information about the state of the world, and more generally for extensive-form games, where they may also make decisions or learn information sequentially.
Normal-form correlated equilibria (NFCE) is arguably the simplest extension of correlated equilibria to Bayesian and extensive-form games:  the correlating device sends each player a single signal at the beginning of the game,  independent of state of nature or the Bayesian types of players.
This form of correlated equilibrium satisfies desirable game theoretic properties~\cite{fujii2023bayes} and only requires a single round of communication (see discussion in~\cite{celli19computing}), but computing it is a ``major open problem''~\cite{farina2023polynomial}. 
Much of the work on other notions of correlated equilibrium for Bayesian and extensive form games is inspired by the conjectured intractability of NFCE, e.g.~\cite{von2008extensive,fujii2023bayes}.

Here, we give a PTAS for finding NFCE. Moreover, our algorithm can be implemented as uncoupled dynamics by distributed players who each run (a variant of) our algorithm for minimizing swap regret.

\begin{restatable}[Extensive-form games]{corollary}{efgCE}
\label{cor:efg}
Let $m$ be the number of players, $n$ be the number of actions at an information set, $\Phi$ be the number of information sets of a player.
Let $\eps > 0$, there is a randomized uncoupled dynamics algorithm that runs in time $\poly(m, n)\cdot(\Phi\log(n))^{O(1/\eps)}$ and returns an $\eps$-approximate NFCE in an EFG, with success probability $1-1/(mn\Phi)^{\omega(1)}$.
\end{restatable}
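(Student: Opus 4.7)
The strategy is to reduce $\eps$-NFCE in an EFG to swap regret minimization over normal-form pure strategies, apply Theorem~\ref{thm:multi-scale}, and then show that the per-round work can be carried out in polynomial time despite the exponential size of the normal-form strategy set.

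First, I would cast the game in normal form. Each player $i$'s set of pure normal-form strategies $\Sigma_i$ consists of all assignments of an action in $[n]$ to each of their $\Phi$ information sets, so $N := |\Sigma_i| = n^\Phi$. By the standard swap-regret-to-correlated-equilibrium reduction that the paper invokes throughout, if every player runs an online algorithm whose swap regret against the rewards induced by the opponents' play is at most $\eps T$, then the empirical joint distribution $\bar D := \frac{1}{T}\sum_{t} \bigotimes_{i} p^{(i)}_t$ is an $\eps$-NFCE. Plugging $N = n^\Phi$ into Theorem~\ref{thm:multi-scale} bounds the required number of rounds by $T = (\log N/\eps)^{O(1/\eps)} = (\Phi\log n/\eps)^{O(1/\eps)}$, which already matches the target round count.

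Second, I would reduce the per-round cost to $\poly(m,n,\Phi)$. The key point is that EFG payoffs are multilinear in the players' behavioral strategies, so given the opponents' behavioral strategies in round $t$, player $i$'s expected payoff under any specific pure $\sigma_i \in \Sigma_i$ can be recovered by a single backward induction on the game tree. Symmetrically, a mixed strategy over $\Sigma_i$ needs only to be accessed through sampling a pure $\sigma_i$ and through evaluating expected payoffs against composed swap functions, both of which reduce to polynomial operations once the strategy is stored in behavioral or sequence form, thanks to Kuhn's theorem under perfect recall.

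The main obstacle, and the step requiring the most care, is opening up the algorithm of Theorem~\ref{thm:multi-scale} and showing that its internal state can be maintained in this compact form rather than as an explicit distribution over $n^\Phi$ actions. The plan is to argue that each of the algorithm's subroutines (the layered multiplicative-weights instances, their aggregation into a single mixed strategy, and the sampling step) can be carried out using per-information-set weights driven by counterfactual payoffs computed by game-tree traversals, so that each round costs $\poly(m,n,\Phi)$. Combined with the $T = (\Phi\log n/\eps)^{O(1/\eps)}$ round bound and a Chernoff-plus-union bound over the sampling steps, this yields the stated runtime $\poly(m,n)\cdot(\Phi\log n)^{O(1/\eps)}$ and success probability $1 - 1/(mn\Phi)^{\omega(1)}$.
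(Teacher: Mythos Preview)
Your overall plan matches the paper's: view the EFG in normal form with $N=n^{\Phi}$ actions per player, run the multi-scale MWU protocol of Figure~\ref{fig:algo} so that Lemma~\ref{lem:protocol} yields an $\eps$-NFCE after $T=(\Phi\log n)^{O(1/\eps)}$ rounds, and then argue that each round can be implemented in polynomial time.

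The gap is in the efficiency step. Each MWU thread must maintain and sample from the Gibbs distribution $p(s_i)\propto\exp\bigl(\eta\sum_t u_i(s_i,s_{-i,t})\bigr)$ over $n^{\Phi}$ pure strategies. You propose handling this with ``per-information-set weights driven by counterfactual payoffs'' and Kuhn's theorem, but neither of these directly yields a sampler. Kuhn's theorem only guarantees that a realization-equivalent behavioral strategy \emph{exists}; it does not tell you how to compute its marginals from exponentially many MWU weights. And counterfactual values in the CFR sense drive a different algorithm (CFR), not the normal-form MWU distribution required here. The paper's actual mechanism is a partition-function argument (Lemma~\ref{lem:efg-mwu}): because the cumulative utility $\bar u_i(s_i;h)$ decomposes additively over child information sets (Lemma~\ref{lem:efg2}), the partition function $V_i(h)=\sum_{s_i\in\mS_{i,h}}\exp(\bar u_i(s_i;h))$ factorizes and can be computed by a bottom-up recursion on the information-set tree (Lemma~\ref{lem:partition-recursive}); a top-down pass then samples $s_i(h)=a$ with probability $U_i(h.a)/V_i(h)$ (Lemma~\ref{lem:efg-sample}). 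This recursion is the substantive content behind the corollary, and your plan should name it explicitly rather than gesture at counterfactual quantities.

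A smaller point: you mention ``evaluating expected payoffs against composed swap functions.'' The algorithm never does this---the swap-regret bound comes from the analysis of multi-scale MWU, which only invokes the external-regret guarantee of each MWU thread; no swap function is ever computed at runtime.
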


\subsection{Related work}

\paragraph{Concurrent work} 
Concurrent and independent work by Dagan, Daskalakis, Fishelson, Golowich \cite{dagan2023external} discovered an algorithm very similar to our swap regret algorithm (Algorithm~\ref{algo:multi}), as well as an equivalent lower bound. Interestingly, they observe that in the same algorithm it is possible to replace the MWU sub-routines with any external regret algorithm; this implies {\em existence} of correlated equilibrium in certain infinite-action games, resolving open problems by Daskalakis and Golowich~\cite{daskalakis2022fast} and Assos et al~\cite{assos2023online}.

\paragraph{No-regret learning in games}
The study of no-regret dynamics in games has been a central topic in the literature of algorithmic game theory and computational learning theory. 
When the game is repeatedly played and each player has diminishing external regret, then the empirical distribution is known to converge to the set of coarse correlated equilibria \cite{foster1993randomization,littlestone1994weighted,freund1997decision,cesa1997use,freund1999adaptive}.
In a coarse correlated equilibrium, a player has no incentive to switch to a fixed action, regardless of the recommended action.
In order to approach the set of correlated equilibria, one has to obtain diminishing swap regret, a problem has been extensively studied in the literature \cite{foster1997calibrated,foster1998asymptotic,foster1999regret,hart2000simple,hart2001reinforcement,cesa2003potential,stoltz2005internal,blum2007external,stoltz2007learning,hart2013simple}.
In particular, the work of \cite{blum2007external} provides an black box reduction from swap regret to external regret, and gives an algorithm that has $O(\sqrt{n\log(n)/T})$ swap regret.
This bound is known to be optimal when the algorithm faces an {\em adaptive adversary} and {\em commits an action} at each round, a matching lower bound is given at \cite{blum2007external, ito2020tight}.
The major open question left by \cite{blum2007external} is whether there exists a faster algorithm that commits a distribution instead an action. We resolve this question.
We refer readers to the book \cite{nisan2007algorithmic,cesa2006prediction} for a general coverage for learning and games.

When all players use the same no-regret learning algorithm, the regret bound can be further improved by exploring the smooth predictable property \cite{daskalakis2011near,rakhlin2013online, rakhlin2013optimization, syrgkanis2015fast,foster2016learning,daskalakis2021near, farina2022kernelized, farina2022near, daskalakis2022fast, chen2020hedging, anagnostides2022near,anagnostides2022uncoupled}. This line of work is initiated by \cite{daskalakis2011near} for zero-sum games and \cite{anagnostides2022near, anagnostides2022uncoupled} provide algorithms obtaining $\tilde{O}(n/T)$ swap regret.
Nevertheless, these algorithms still take $\Omega(n)$ iterations (or even longer) to reach an approximate correlated equilibrium, and it is an open question whether there exists an uncoupled dynamic that leads to correlated equilibria in sublinear or polylogarithmic rounds. See the discussion section of \cite{anagnostides2022near} for a detailed treatment.

\paragraph{No swap regret learning in leader-follower games}
Motivated the attractiveness of online learning algorithms for strategic agents -both in theory and in practice- a recent line of works explores the potential of ``leaders'' who use adaptive strategies to manipulate ``followers'' running online learning algorithms with predictable structure~\cite{braverman2018selling,deng2019prior,deng2019strategizing,camara2020mechanisms,feng2021convergence,mansour2022strategizing,brown2023learning,haghtalab2023calibrated,cai2023selling}. It is known that while followers running naive (``mean-based'') no external regret algorithms are manipulable, followers who have no swap regret are robust to such manipulations~\cite{braverman2018selling,deng2019strategizing,mansour2022strategizing,haghtalab2023calibrated,brown2023learning}.

\paragraph{Query complexity} 
The query complexity of correlated equilibrium has been studied in the literature \cite{hart2018query, babichenko2015query,goldberg2016bounds}. 
The work of \cite{hart2010long, goldberg2016bounds} observes one can simulate the no-swap regret algorithm (e.g. \cite{blum2007external}) in the query model and finds an approximate correlated equilibrium. In particular, one needs $O(mn^2)\cdot \poly(1/\eps)$ queries to find an $\eps$-approximate correlated equilibrium in an $m$-player $n$-action game.
\cite{hart2018query} proves a query lower bound, showing an exponential number of queries are needed in multi-player games if (1) one wants to find an exact correlated equilibrium; or (2) one uses deterministic algorithm.
The query complexity of Nash equilibrium has been studied, and a query lower bound of $2^{\Omega(m)}$ is known for $m$-player binary action games \cite{babichenko2016query,chen2015well,rubinstein2016settling} and $\Omega(n^2)$ for two-player $n$-action games \cite{goos2018near}. It is an open question whether one can separate the query complexity of Nash and correlated equilibrium in two-player games \cite{babichenko2020informational}.

\paragraph{Communication complexity} The work of \cite{hart2010long} initiates the study of communication complexity of correlated equilibrium and propose to use communication as a complexity measure of uncoupled dynamics. \cite{hart2010long} observes one can use $\poly(n)$ bits of communication to simulate the ellipsoid algorithm of \cite{papadimitriou2008computing, jiang2015polynomial} and finds an exact correlated equilibrium.
\cite{ganor2018communication} gives an $\Omega(n)$ communication lower bound for finding an $1/\poly(n)$-approximate correlated equilibrium in two-player games.
The communication complexity of Nash equilibrium is well studied \cite{babichenko2017communication,goos2018near,roughgarden2016communication, ganor2021communication, babichenko2019communication, babichenko2020communication}. For $m$-player binary action games, the seminal work of \cite{babichenko2017communication} gives a communication lower bound of $\Omega(2^m)$ for finding $\eps$-approximate NE for some constant $\eps > 0$; for two-player $n$-action games, \cite{goos2018near} gives an $\Omega(n^{2-o(1)})$ communication lower bound for finding $\eps$-approximate NE. The communication complexity of correlated equilibrium is an open question repeatedly mentioned in the literature \cite{goos2018near,ganor2018communication,babichenko2020informational}.

We refer readers for the excellent survey of \cite{babichenko2020informational} for a general coverage on the information bounds (query and communication) of equilibria.

\paragraph{Computation of correlated equilibrium} 
For two-player games, an exact correlated equilibrium can be solved via linear programming \cite{hart1989existence}.
For multi-player succinct games, the linear program has exponential size but a correlated equilibrium can be found via ellipsoid methods \cite{papadimitriou2008computing, jiang2015polynomial}. 
The linear programming approach could find the exact (or high accuracy) equilibrium but the runtime is a large polynomial.
The algorithm of \cite{blum2007external} can be used to find an $\eps$-approximate correlated equilibrium in $\Theta(n^3) \cdot \poly(1/\eps)$ time, the qubic barrier comes from solving a linear system ($n^2$) for a total of $n$ iterations.

\paragraph{Extensive-form game and Bayesian games}
The Bayesian game extends the normal-form game by incorporating incomplete information. It is PPAD-hard even to find a constant approximate Bayesian Nash equilibrium in two-player games with $O(1)$ actions \cite{rubinstein2015inapproximability}. 
For correlated equilibria, there are different legitimate definitions for Bayesian games \cite{forges1993five}, see \cite{fujii2023bayes} for an excellent exposure.
Existing work provides uncoupled dynamics to coarse Bayesian correlated equilibrium \cite{hartline2015no} and communication correlated equilibrium \cite{fujii2023bayes}.
The strategic-form correlated equilibrium considered in this paper, is perhaps the most natural one -- it does not reveal any private information to a mediator, and satisfies strong properties such as strategic representability and incentive compatible with strategies. 
However, this comes at price, it is an open question whether one can efficiently find a strategic-form correlated equilibrium, due to the exponential size of the strategy space \cite{fujii2023bayes}. 
We positively answer this open question for arbitrarily small constant approximation.

The extensive-form games extend Bayesian games by incorporating sequential structure and it can be seen as a tree-like Bayesian game, it has, for example, important applications to games like Poker~\cite{brown2018superhuman,brown2019superhuman,brown2019deep}. 
The normal-form correlated equilibrium shares a similar fate as strategic-form correlated equilibrium; while it is natural and satisfies strong properties, it is unclear beforehand one can efficiently find one.
The extensive-form correlated equilibrium, introduced by \cite{von2008extensive}, circumvents the computation challenge by allowing the mediator to release the signal only when reaching the information sets. It admits polynomial time algorithm \cite{von2008extensive,huang2008computing, zhang2022polynomial} and uncoupled dynamics \cite{farina2022simple}. There is a long line of work on extensive-form correlated equilibrium \cite{zinkevich2007regret,lanctot2009monte, farina2019regret, farina2019efficient, farina2019optimistic, farina2022kernelized, zhang2022polynomial, zhang2022optimal, bai2022efficient, anagnostides2022faster,chhablani2023multiplicative,anagnostides2023near} and we refer interested readers to the recent work \cite{farina2023polynomial} for a general coverage.
In particular, our work provides efficient uncoupled dynamics to approximate normal-formed correlated equilibrium, which captures the most rational types of deviation, a major open question in the field, see \cite{farina2023polynomial} for a discussion.

	\section{Preliminary}
\label{sec:pre}
\paragraph{Notation} Let $[n] = \{1,2,\ldots, n\}$ and $[n_1: n_2] = \{n_1, n_1+1, \ldots, n_2\}$. Let $\Delta_n$ be all probability distributions over $[n]$, $1_{n}$ be the uniform distribution over $[n]$, $e_i$ ($i\in [n]$) be the one-hot vector that is $1$ on the $i$-th coordinate and $0$ elsewhere.
Given a vector $r\in \R^{n}$, we use $r(i)$ to denote its $i$-th entry and $\|r\|_{\infty}:= \max_{i \in [n]}|r(i)|$. 
We use $\langle p, r\rangle$ to denote the inner product of two vectors $p, r$. For any $\mu \in [0,1]$, let $B_\mu$ be the  Bernoulli distribution with mean $\mu$.

\subsection{Online learning}
We consider the standard adversarial online learning setting. 
Let $T$ be the total number of days, $n$ be the number of experts and $B > 0$ be the width of reward sequence.
There is a sequence of $T$ days and at each day $t \in [T]$, the algorithm plays a distribution $p_t \in \Delta_n$ over the set of action $[n]$. After that, the adversary selects a reward vector $r_t \in [0,B]^{n}$. The algorithm observes $r_t$ and receives reward $\langle p_t, r_t\rangle$.
At the end of sequence, the {\em external regret} measures the maximum gain one would have achieved when switching to a fixed action
\begin{align*}
\texttt{external-regret}:=\max_{i^{*}\in [n]}\sum_{t\in [T]} r_{t}(i^{*}) - \sum_{t\in [T]} \langle p_t, r_t\rangle.
\end{align*}
Let $\Phi_n$ be all swap functions that map from $[n]$ to $[n]$, the {\em swap regret} measures the maximum gain one could have obtained when using a fixed swap function over its history strategies
\begin{align*}
\texttt{swap-regret}:=\max_{\phi \in \Phi_n}\sum_{t\in [T]} \sum_{i\in [n]}p_t(i)r_t(\phi(i)) - \sum_{t\in [T]} \langle p_t, r_t\rangle.
\end{align*}

\begin{remark}[Model of adversary]
In the literature of online learning, an oblivious adversary (randomly) chooses the reward vector $r_1, \ldots, r_{T}$ at the beginning.
An adaptive adversary could choose the reward vector $r_t$ based on the algorithm's history strategy $p_1, \ldots, p_{t-1}$. 
A strong adaptive adversary could further observe the strategy $p_t$ of the current round.
Our algorithm holds against the strong adaptive adversary while our lower bound rules out better algorithms against oblivious adversary. 
We note that the adaptive adversary model is sufficient for applications on correlated equilibria.
\end{remark}

\subsection{Correlated equilibria and swap regret}
The most important application of swap regret minimization is its connection with the {\em correlated equilibrium} in game theory. In an $m$-player normal-form game, each player $i\in [m]$ has an action set $A_i$ ($|A_i| = n$). Given an action profile $(a_1, \ldots, a_m) \in A_1 \times \cdots \times A_m$, the $i$-th player receives utility $u_i(a_i; a_{-i})\in [0,1]$. A correlated equilibrium is a joint distribution over the action space such that no one has the incentive to deviate from its recommended action.

\begin{definition}[$\eps$-correlated equilibrium]
A joint probability distribution $\D$ over $A_1 \times \cdots \times A_m$ is an $\eps$-correlated equilibrium if for every player $i \in [m]$ and for any swap function $\phi_i: A_i \rightarrow A_i$, we have 
\[
\E_{a \sim \D}[u_i(a_i; a_{-i})] \geq \E_{a\sim \D}[u_i(\phi_i(a_i); a_{-i})] -\eps.
\]
\end{definition}


It is well-known that if every player locally runs a no-swap regret learning algorithm, then the empirical distribution converges to a correlated equilibrium. In particular, 
\begin{lemma}[Swap regret and correlated equilibrium \cite{foster1997calibrated, blum2007external}]
\label{lem:uncouple1}
If an $m$-player normal-form game is played repeatedly for $T$ days, and each player incurs no more than $R(T)$ swap regret over the $T$ days, then the empirical distribution of the joint actions by the players is an $R(T)/T$-correlated equilibrium. 
\end{lemma}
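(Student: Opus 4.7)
The plan is to construct the correlated equilibrium $\mathcal{D}$ explicitly as the empirical distribution of the players' mixed strategies, then show directly that the expected deviation gain for any player equals exactly $1/T$ times that player's swap regret against an appropriately chosen reward sequence.

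Concretely, let $p_{i,t} \in \Delta_{A_i}$ denote player $i$'s mixed strategy at day $t$, and define
\[
\mathcal{D} := \frac{1}{T}\sum_{t=1}^{T} p_{1,t}\otimes p_{2,t} \otimes \cdots \otimes p_{m,t}.
\]
(Equivalently, sample a uniformly random day $t$ and then have each player draw independently from $p_{i,t}$; this is the natural empirical distribution since players act independently conditional on the history.) For each player $i$ and day $t$, the reward vector that $i$'s online learning subroutine faces is
\[
r_{i,t}(a_i) \;=\; \mathbb{E}_{a_{-i}\sim \prod_{j\neq i} p_{j,t}}\!\bigl[u_i(a_i;a_{-i})\bigr],
\]
which lies in $[0,1]^{A_i}$ since utilities are bounded in $[0,1]$.

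Next I would unfold both sides of the correlated-equilibrium inequality against $\mathcal{D}$. For any swap function $\phi_i:A_i\to A_i$,
\[
\mathbb{E}_{a\sim\mathcal{D}}[u_i(\phi_i(a_i);a_{-i})] - \mathbb{E}_{a\sim\mathcal{D}}[u_i(a_i;a_{-i})]
= \frac{1}{T}\sum_{t=1}^{T}\Bigl(\sum_{a_i\in A_i} p_{i,t}(a_i)\,r_{i,t}(\phi_i(a_i)) - \langle p_{i,t},r_{i,t}\rangle\Bigr),
\]
where I use independence of the players at each round $t$ to decouple the expectation over $a_{-i}$ into $r_{i,t}$. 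The right-hand side is exactly $1/T$ times the swap-regret expression for player $i$ with strategy sequence $(p_{i,t})$ and reward sequence $(r_{i,t})$. By the hypothesis that player $i$ incurs swap regret at most $R(T)$, the inner sum is bounded by $R(T)$, giving a deviation gain of at most $R(T)/T$. Taking the maximum over $\phi_i$ and over $i\in[m]$ yields the $R(T)/T$-correlated-equilibrium condition.

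The argument is essentially a reindexing, so there is no serious obstacle; the only care-point is to be explicit that the swap regret is defined against the mixed-strategy reward $r_{i,t}$ obtained by averaging out the opponents' mixed strategies (rather than against realized actions), which is what allows the regret bound to translate cleanly into an inequality about expectations under $\mathcal{D}$. If instead one wished to state the lemma for the empirical distribution over the realized joint action profiles, a concentration argument (e.g.\ Azuma--Hoeffding over the martingale formed by sampled vs.\ expected per-round utilities) would be needed, but under the natural interpretation above the proof is a direct identity plus the hypothesis.
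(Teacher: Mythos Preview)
Your proof is correct and is the standard argument. The paper does not actually prove Lemma~\ref{lem:uncouple1}; it is stated with a citation to~\cite{foster1997calibrated,blum2007external} and no proof. That said, the paper does carry out essentially the same computation you describe inside the proof of Lemma~\ref{lem:protocol}: defining $r_{i,t}(j)=\E_{a_{-i}\sim p_{-i,t}}[u_i(j;a_{-i})]$ and writing $\E_{a\sim p}[u_i(\phi_i(a_i);a_{-i})]-\E_{a\sim p}[u_i(a_i;a_{-i})]=\tfrac{1}{T}\sum_t\sum_j p_{i,t}(j)(r_{i,t}(\phi_i(j))-r_{i,t}(j))$, exactly matching your identity. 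Your remark that the ``empirical distribution of joint actions'' should be read as $\tfrac{1}{T}\sum_t \bigotimes_i p_{i,t}$ (rather than realized action profiles) is also consistent with how the paper uses it in Figure~\ref{fig:algo} and Corollary~\ref{cor:uncouple}.
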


\subsection{Useful tools}
We make use of the classic algorithm of Multiplicative Weights Update (MWU). 

\begin{algorithm}[!htbp]
\caption{MWU}
\label{algo:mwu}
\begin{algorithmic}[1] 
\State {\bf Input parameters} $T$ (number of rounds),  $n$ (number of actions), $B$ (bound on payoff) 
\For{$t=1,2, \ldots, T$}
\State Compute $p_t \in \Delta_n$ over experts such that $p_t(i) \propto \exp(\eta \sum_{\tau=1}^{t-1}r_\tau(i))$ for $i\in [n]$
\State Play $p_t$ and observes $r_t \in [0, B]^n$
\EndFor
\end{algorithmic}
\end{algorithm}

MWU has small external regret against a strong adaptive adversary.
\begin{lemma}[\cite{arora2012multiplicative}]
\label{lem:mwu}
Let $n, T \geq 1$ and the reward $r_t \in [0,B]^n$ ($t\in [T]$). 
If one takes $\eta = \sqrt{\log (n)/T}/B$, then the MWU algorithm guarantees an external regret of at most
\begin{align*}
\max_{i^{*}\in [n]}\sum_{t \in [T]}r_t(i^{*}) - \sum_{t \in [T]} \langle p_t, r_t\rangle \leq \frac{\log(n)}{\eta} + \eta T B^2 \leq 2B\sqrt{T\log (n)}
\end{align*}
against a strong adaptive adversary.
\end{lemma}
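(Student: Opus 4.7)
The plan is to carry out the standard potential-function proof for MWU. I define the unnormalized weights $w_t(i) = \exp\!\bigl(\eta \sum_{\tau=1}^{t-1} r_\tau(i)\bigr)$, so that $p_t(i) = w_t(i)/\Phi_t$ where $\Phi_t = \sum_{i \in [n]} w_t(i)$ is the potential. The initial potential is $\Phi_1 = n$, and for any fixed comparator $i^* \in [n]$, the final potential satisfies $\Phi_{T+1} \geq \exp\!\bigl(\eta \sum_{t=1}^T r_t(i^*)\bigr)$. The goal is to sandwich $\log(\Phi_{T+1}/\Phi_1)$ between the comparator's cumulative reward and the algorithm's cumulative reward plus an $O(\eta T B^2)$ slack.

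The core per-step estimate is to bound $\Phi_{t+1}/\Phi_t = \sum_{i} p_t(i)\, e^{\eta r_t(i)}$. Since $r_t(i) \in [0, B]$ and we will choose $\eta$ so that $\eta B \leq 1$, the elementary inequality $e^x \leq 1 + x + x^2$ valid for $x \leq 1$ yields $e^{\eta r_t(i)} \leq 1 + \eta r_t(i) + \eta^2 r_t(i)^2 \leq 1 + \eta r_t(i) + \eta^2 B\, r_t(i)$. Taking the $p_t$-average gives $\Phi_{t+1}/\Phi_t \leq 1 + (\eta + \eta^2 B)\langle p_t, r_t\rangle \leq \exp\!\bigl((\eta + \eta^2 B)\langle p_t, r_t\rangle\bigr)$. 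Crucially, this estimate is valid against a strong adaptive adversary, because the inequality is established pointwise in $i$ before we commit to the distribution $p_t$ or the adversary's response $r_t$ — the only thing being averaged is the per-coordinate Taylor bound.

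Telescoping over $t = 1, \ldots, T$ and combining the two sandwich bounds gives
\begin{align*}
\eta \sum_{t=1}^T r_t(i^*) \;\leq\; \log n \;+\; (\eta + \eta^2 B)\sum_{t=1}^T \langle p_t, r_t\rangle.
\end{align*}
Rearranging and using $\langle p_t, r_t\rangle \leq B$ produces the claimed regret bound $\log(n)/\eta + \eta T B^2$. Finally, setting $\eta = \sqrt{\log(n)/T}/B$ balances the two terms and yields $2B\sqrt{T\log n}$; I should also verify that this choice of $\eta$ satisfies $\eta B \leq 1$ (which holds for $T \geq \log n$, and the bound is trivial otherwise since the regret is at most $TB$).

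The proof has no real obstacle; the only delicate point is ensuring the per-step bound is adversary-agnostic, which is automatic here because the Taylor estimate $e^{\eta r_t(i)} \leq 1 + \eta r_t(i) + \eta^2 B r_t(i)$ is a deterministic pointwise inequality that does not depend on how $r_t$ was generated, in particular allowing $r_t$ to depend arbitrarily on $p_t$.
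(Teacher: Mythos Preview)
Your proposal is correct and follows the standard potential-function argument for MWU. The paper itself does not prove this lemma; it simply states it as a known tool with a citation to \cite{arora2012multiplicative}, so there is no in-paper proof to compare against.
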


\section{Multi-scale MWU}

Our goal is to prove

\multiMWU*


Let $S := \log_2(1/\eps) + 1$, and let $H :=  4\log(n)2^{2S} = \Theta(\log(n)/\eps^2)$ be the block size. Algorithm \ref{algo:multi} runs MWU in multiple scales: It maintains $2^{S}$ threads of MWU over a sequence of $T = H^{2^{S}}$ days. 
The $k$-th thread ($k\in [2^{S}]$) restarts every $T/H^{k}$ days, and each restart lasts for $H^{k}$ days. 
During each restart, it views $H^{k-1}$ days as one ``meta day'' and executes MWU for $H$ steps (Line \ref{line:meta-day1} -- \ref{line:meta-day2}). 
The final algorithm aggregates $2^{S}$ threads by playing uniformly over them.

\begin{algorithm}[!htbp]
\caption{Multi-scale MWU}
\label{algo:multi}
\begin{algorithmic}[1]
\State {\bf Input parameters} $T$ (number of rounds),  $n$ (number of actions), $B$ (bound on payoff) 
\State {\bf Internal parameters} $H, S$ such that $T = H^{2^{S}}$ 

\For{$t = 1,2,\ldots, T$}
\State Let $q_{k, t} \in \Delta_{n}$ be the strategy of $\textsc{MWU}_k$ ($k \in [2^{S}]$), play uniformly over them 
\begin{align*}
p_{t} = \frac{1}{2^{S}}\sum_{k\in [2^{S}]}q_{k, t}
\end{align*}
\EndFor
\Procedure{$\textsc{MWU}_k$}{} \Comment{$k \in [2^{S}]$}
\For{$\ell = 1,2,\ldots, T/H^{k}$} \Comment{Restart every $H^{k}$ days}
\State Initiate MWU with parameters $H, n, H^{k-1}B$ \label{line:meta-day1}
\For{$h = 1,2,\ldots, H$}
\State Let $z_{\ell, h}\in\Delta_n$ be the strategy of MWU at the $h$-th round, play $z_{\ell, h}$ for $H^{k-1}$ days
\State Update MWU with the aggregated rewards of the last $H^{k-1}$ days 
\[
\left\{\sum_{\tau =(\ell-1) H^{k}+ (h-1) H^{k-1} + 1}^{(\ell-1) H^{k}+ h H^{k-1}} 
r_{\tau}(i)\right\}_{i \in [n]} \in [0, H^{k-1}B]^n
\]
\EndFor \label{line:meta-day2}
\EndFor
\EndProcedure
\end{algorithmic}
\end{algorithm}

\begin{proof}
Fix the block size $H$, and let $\delta = 2\sqrt{\log(n)/H}$. Let $T_{S} = H^{2^{S}}$, we prove that the total swap regret of Multi-scale MWU (Algorithm \ref{algo:multi}) over a sequence of $T_S$ days is at most 
\begin{align}
 2^{-S}\left(\sum_{t \in [T_{S}]} \|r_{t}\|_{\infty} - \Big\|\sum_{t\in [T_S]}r_{t}\Big\|_{\infty} \right) +  \delta T_{S} B. \label{eq:multi-scale-goal}
\end{align}
We prove Eq.~\eqref{eq:multi-scale-goal} by induction on $S$.
The base case of $S = 0$ holds due to the external regret guarantee of MWU. Concretely, for any swap function $\phi: [n]\rightarrow [n]$, the swap regret satisfies
\begin{align*}
\sum_{t \in [T_{0}]} \sum_{i\in [n]}p_{t}(i)r_{t}(\phi(i)) - \sum_{t \in [T_{0}]} \sum_{i\in [n]}p_{t}(i)r_{t}(i) \leq &~ \sum_{t \in [T_{0}]} \|r_t\|_{\infty} - \Big\|\sum_{t \in [T_{0}]} r_{t}\Big\|_{\infty} + 2\sqrt{\log(n)T_0} B\\
= &~ \left(\sum_{t \in [T_{0}]} \|r_{t}\|_{\infty} - \Big\|\sum_{t \in [T_{0}]} r_{t}\Big\|_{\infty}\right) +  \delta T_0 B.
\end{align*}
where the first step holds due to $r_t(\phi(i)) \leq \|r_t\|_{\infty}$ ($i \in [n]$) and the external regret guarantee of MWU. The second step follows from the definition of $\delta$.

Suppose the claim holds up to $S = s$, we prove that it continues to hold for $S = s+1$. 
We divide $[T_{s+1}]$ into $T_s = H^{2^{s}}$ intervals. For the $\tau$-th ($\tau \in [T_s]$) interval $[(\tau-1)T_{s}+1: \tau T_{s}]$, let $R_{\tau}(i)$ be the total reward of action $i \in [n]$, i.e.,
\[
R_{\tau}(i) := \sum_{t\in [(\tau-1) \cdot T_s + 1: \tau T_{s}]}r_{t}(i) \in [0, T_s B]
\]
For any swap function $\phi: [n]\rightarrow [n]$, we split the regret into two parts, one for threads $[2^{s}]$ and one for threads $[2^{s}+1:2^{s+1}]$
\begin{align}
&~ \sum_{t \in [T_{s+1}]} \sum_{i\in [n]}p_{t}(i)r_{t}(\phi(i)) - \sum_{t \in [T_{s+1}]} \sum_{i\in [n]}p_{t}(i)r_{t}(i)\notag \\
= &~  \frac{1}{2^{s+1}}\sum_{t \in [T_{s+1}]} \sum_{i\in [n]} \sum_{k \in [2^{s+1}]} q_{k, t}(i) (r_{t}(\phi(i)) - r_t(i)) \notag \\
= &~ \frac{1}{2^{s+1}} \sum_{t \in [T_{s+1}]} \sum_{i\in [n]}\sum_{k \in [2^{s}]}  q_{k, t}(i) (r_{t}(\phi(i)) - r_t(i)) \notag \\ 
&~ + \frac{1}{2^{s+1}}\sum_{t \in [T_{s+1}]} \sum_{i\in [n]} \sum_{k \in [2^{s}+1:2^{s+1}]}  q_{k, t}(i) (r_{t}(\phi(i)) - r_t(i)) \label{eq:multi-scale-mwu1}
\end{align}
Here the first step holds since the algorithm plays uniformly over $2^{s+1}$ threads, that is,  $p_{t} = \frac{1}{2^{s+1}}\sum_{k \in [2^{s+1}]}q_{k, t}$. 


We bound each of the two sums in Eq.~\eqref{eq:multi-scale-mwu1} separately. 
For the first $2^{s}$ threads, we have
\begin{align}
\sum_{t \in [T_{s+1}]} \sum_{i\in [n]} \sum_{k \in [2^{s}]}  q_{k, t}(i) (r_{t}(\phi(i)) - r_t(i))
= &~ \sum_{\tau \in [T_s]} \sum_{t\in [(\tau-1)T_s+1: \tau T_{s}]} \sum_{i\in [n]}\sum_{k \in [2^{s}]}  q_{k, t}(i) (r_{t}(\phi(i)) - r_t(i))\notag \\
\leq &~ \sum_{\tau \in [T_{s}]}\left( \left(\sum_{t\in [(\tau-1)T_s+1: \tau T_{s}]}\|r_t\|_{\infty} - \|R_{\tau}\|_{\infty} \right) + 2^{s} \cdot \delta T_{s} B\right)\notag \\
= &~ \left(\sum_{t\in [T_{s+1}]}\|r_t\|_{\infty} - \sum_{\tau\in [T_s]}\|R_\tau\|_{\infty}   \right) + 2^s \cdot \delta T_{s+1} B. \label{eq:multi-scale-mwu2}
\end{align}

In the first step, we split the swap regret into $T_{s}$ intervals. The second step follows from the inductive hypothesis. In particular, for each interval $\tau \in [T_s]$, playing uniformly over threads $[2^{s}]$ is equivalent to running multi-scale MWU for $T_{s}$ days with width $B$.

For each thread $k \in [2^{s}+1: 2^{s+1}]$, the strategy $q_{k, t} \in \Delta_{n}$ is fixed within each interval $\tau \in [T_{s}]$. That is, we can define 
\[
w_{k, \tau} := q_{k, (\tau-1)T_s+1} = \cdots =q_{k, \tau T_s} \quad \forall k \in [2^{s}+1: 2^{s+1}], \tau \in [T_{s}].
\]
Then, we have
\begin{align}
 &~\sum_{t \in [T_{s+1}]} \sum_{i\in [n]} \sum_{k \in [2^{s}+1:2^{s+1}]}  q_{k, t}(i) (r_{t}(\phi(i)) - r_t(i))  \notag \\
 =  &~  \sum_{\tau \in [T_{s}]} \sum_{i\in [n]}\sum_{k \in [2^{s} + 1: 2^{s+1}]}  w_{k, \tau}(i) (R_{t}(\phi(i)) -R_t(i))\notag \\
\leq &~ \left(\sum_{\tau \in [T_{s}]}\|R_\tau\|_{\infty} - \Big\|\sum_{\tau\in [T_s]} R_\tau \Big\|_{\infty}   \right) + 2^{s}\cdot \delta T_{s} \cdot (T_{s} B) \notag \\
= &~ \left(\sum_{\tau \in [T_{s}]}\|R_\tau\|_{\infty} - \Big\|\sum_{t\in [T]}r_t \Big\|_{\infty}   \right) + 2^{s}\cdot \delta T_{s+1} B. \label{eq:multi-scale-mwu3}
\end{align}
The first step follows from the definition of $w_{k, \tau}$ and $R_{\tau}$. 
The second step follows from the inductive hypothesis. In particular, by viewing each interval as one meta day, playing uniformly over threads $[2^{s}+1: 2^{s+1}]$ is equivalent to running multi-scale MWU for $T_{s}$ days with width $T_s B$. The last step follows from the definition of $R_{\tau}$.

Combining Eq.~\eqref{eq:multi-scale-mwu1}\eqref{eq:multi-scale-mwu2}\eqref{eq:multi-scale-mwu3}, we have
\begin{align*}
&~ \sum_{t \in [T_{s+1}]} \sum_{i\in [n]}p_{t}(i)r_{t}(\phi(i)) - \sum_{t \in [T_{s+1}]} \sum_{i\in [n]}p_{t}(i)r_{t}(i)\\
\leq &~ \frac{1}{2^{s+1}}\left(\sum_{t\in [T_{s+1}]}\|r_t\|_{\infty} - \sum_{\tau\in [T_s]}\|R_\tau\|_{\infty}   \right) + \frac{1}{2}\delta T_{s+1} B\\
&~ + \frac{1}{2^{s+1}}\left(\sum_{\tau \in [T_{s}]}\|R_\tau\|_{\infty} - \Big\|\sum_{t\in [T]}r_t \Big\|_{\infty}   \right) + \frac{1}{2}\delta T_{s+1} B\\
= &~ \frac{1}{2^{s+1}}\left(\sum_{t\in [T_{s+1}]}\|r_t\|_{\infty} - \Big\|\sum_{t\in [T]}r_t \Big\|_{\infty} \right) + \delta T_{s+1} B.
\end{align*}
This completes the induction and proves Eq.~\eqref{eq:multi-scale-goal}.

Now, by plugging $S = \log_2(1/\eps) + 1$ and $H = 4\log(n)2^{2S}$ into Eq.~\eqref{eq:multi-scale-goal}, the expected swap regret of multi-scale MWU is at most 
\begin{align*}
\E[\texttt{swap-regret}] \leq  2^{-S}\left(\sum_{t \in [T_{S}]} \|r_{t}\|_{\infty} - \Big\|\sum_{t\in [T_S]}r_{t}\Big\|_{\infty} \right) +  \delta T_{S} B \leq \frac{\eps}{2}\cdot T_{S} B + \frac{\eps}{2}\cdot T_{S} B = \eps  T_{S} B
\end{align*}
in a sequence of 
\[
T_{S} = H^{2^{S}} = (4\log(n)2^{2(\log_2(1/\eps) + 1}))^{2^{\log_2(1/\eps) + 1}} = (16\log(n)/\eps^2)^{2/\eps} =  (\log(n)/\eps)^{O(1/\eps)}
\]
days. 
\end{proof}

	\section{Applications}
\label{sec:application}

The multi-scale MWU obtains diminishing swap regret in the adversarial setting and has many implications for correlated equilibria. 
A direct corollary of Theorem \ref{thm:multi-scale} is the existence of uncoupled dynamics that converge to an approximate correlated equilibrium in polylogarithmic rounds. The proof is a direct combination of Theorem \ref{thm:multi-scale} and Lemma \ref{lem:uncouple1}.

\uncoupledCE*


For most applications appearing in this section, we use the protocol shown at Figure \ref{fig:algo}.
In the protocol, all players repeatedly play the game for $T$ days and each player runs the multi-scale MWU. Instead of calculating the exact reward at every day, each player constructs an approximate estimate of the reward by sampling from other players' mixed strategy.

\begin{figure}[!htbp]
\begin{tcolorbox}[standard jigsaw, opacityback=0,title=Protocol] 
\begin{itemize}
    \item Player $i$ ($i \in [m]$) runs multi-scale MWU (Algorithm~\ref{algo:multi}) for $T$ rounds
    \begin{itemize}
    \item At the $t$-th round ($t\in [T]$), it commits a strategy $p_{i, t} \in \Delta_{n}$ 
    \item It then samples $K = \Theta(\log^2(mn)/\eps^3)$ action profiles $a_{-i,t,1}, \ldots, a_{-i, t, K} \in A_{-i}$ from other players' strategy distribution $p_{-i,t} = \otimes_{i'\in [m]\setminus\{i\}}p_{i', t}$, and constructs the reward vector $\wh{r}_{i, t} \in [0,1]^{n}$:
    \begin{align*}
    \wh{r}_{i, t}(j) = \frac{1}{K}\sum_{k=1}^{K} u_{i}(j; a_{-i, t, k})  \quad \forall j \in A_i.
    \end{align*}
    \end{itemize}
    \item Output the empirical distribution $\frac{1}{T}\sum_{t\in [T]}p_{1, t}\otimes \cdots \otimes p_{m, t}$
\end{itemize}
\end{tcolorbox}
\caption{Protocol}
\label{fig:algo}
\end{figure}

In the rest of this section, we focus on the regime $\eps \leq 1/\log(n)$ -- for smaller approximation $\eps$, the dominant approach is the BM algorithm \cite{blum2007external}.
The following lemma uses the  swap regret guarantee to obtain convergence of the protocol in Figure \ref{fig:algo} to the set of approximate correlated equilibria.

\begin{lemma}
\label{lem:protocol}
Let $m$ be the number of players, $n$ be the number of actions. 
For any $\eps > 0$, suppose each player follows the protocol in Figure \ref{fig:algo} for $T = (\log(n))^{O(1/\eps)}$ days, then with probability at least $1-1/(mn)^{\omega(1)}$, the output is an $\eps$-approximate correlated equilibrium.
\end{lemma}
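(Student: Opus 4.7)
The plan is to combine the swap regret bound of multi-scale MWU (Theorem~\ref{thm:multi-scale}) with a Hoeffding-type concentration argument on the sampled rewards, and then apply the standard conversion from swap regret to approximate correlated equilibrium (Lemma~\ref{lem:uncouple1}).

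First, I would set up notation. For each player $i\in[m]$ and round $t\in[T]$, let $\bar{r}_{i,t}(j):=\E_{a_{-i}\sim p_{-i,t}}[u_i(j;a_{-i})]$ denote the \emph{true} expected reward of action $j$ given other players' mixed strategies at round $t$. The estimator $\wh{r}_{i,t}(j)$ constructed in the protocol is the average of $K$ independent samples, conditional on $p_{-i,t}$, of a $[0,1]$-valued random variable with mean $\bar{r}_{i,t}(j)$. Since each player runs multi-scale MWU with $T=(\log(n)/\eps)^{O(1/\eps)}$ rounds and width $B=1$ on the observed sequence $\{\wh{r}_{i,t}\}_{t\in[T]}$, Theorem~\ref{thm:multi-scale} gives, deterministically in any realization of $\{\wh{r}_{i,t}\}$,
\[
\max_{\phi_i\in\Phi_n}\sum_{t\in[T]}\sum_{j\in A_i}p_{i,t}(j)\wh{r}_{i,t}(\phi_i(j))-\sum_{t\in[T]}\langle p_{i,t},\wh{r}_{i,t}\rangle \;\leq\; \tfrac{\eps}{3}\,T.
\]
This uses that multi-scale MWU's guarantee holds against a strong adaptive adversary, which is essential since the reward $\wh{r}_{i,t}$ depends on the history (through $p_{-i,t}$) and on fresh randomness at round $t$.

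Next I would show that, with probability $1-1/(mn)^{\omega(1)}$, the estimated rewards approximate the true rewards uniformly:
\[
|\wh{r}_{i,t}(j)-\bar{r}_{i,t}(j)|\;\leq\;\tfrac{\eps}{3}\qquad\text{for all }i\in[m],\,j\in A_i,\,t\in[T].
\]
Conditioning on $p_{-i,t}$ (which determines the sampling distribution), $K\wh{r}_{i,t}(j)$ is a sum of $K$ independent $[0,1]$-bounded variables with mean $K\bar{r}_{i,t}(j)$, so Hoeffding's inequality gives failure probability $\leq 2\exp(-2K\eps^2/9)$ per triple $(i,j,t)$. Since $T=(\log n)^{O(1/\eps)}$, we have $\log(mnT)=O(\log(mn)/\eps)$, and plugging in $K=\Theta(\log^2(mn)/\eps^3)$ makes the per-triple failure $(mn)^{-\Omega(\log(mn)/\eps)}$; a union bound over $mnT$ triples yields the claimed uniform bound.

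Conditioning on the concentration event, for any swap function $\phi_i$ I can replace $\wh{r}_{i,t}$ by $\bar{r}_{i,t}$ on both sides of the swap regret inequality at additive cost at most $\tfrac{2\eps}{3}T$, giving
\[
\sum_{t\in[T]}\sum_{j\in A_i}p_{i,t}(j)\bar{r}_{i,t}(\phi_i(j))-\sum_{t\in[T]}\sum_{j\in A_i}p_{i,t}(j)\bar{r}_{i,t}(j)\;\leq\;\eps\,T.
\]
Expanding $\bar{r}_{i,t}$ and dividing by $T$, the left-hand side becomes exactly $\E_{a\sim\bar{\D}}[u_i(\phi_i(a_i);a_{-i})]-\E_{a\sim\bar{\D}}[u_i(a_i;a_{-i})]$, where $\bar{\D}=\tfrac{1}{T}\sum_{t}p_{1,t}\otimes\cdots\otimes p_{m,t}$ is the empirical product-mixture. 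Taking a union bound over the $m$ players' concentration events, $\bar{\D}$ is therefore an $\eps$-approximate correlated equilibrium with probability $1-1/(mn)^{\omega(1)}$.

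The only slightly delicate step is the concentration argument: I need to make sure the Hoeffding bound can be applied even though $p_{-i,t}$ is random and adapted to the history. This is handled by conditioning on the sigma-algebra generated by all strategies $p_{\cdot,t}$ and previous samples, under which the $K$ samples at round $t$ for player $i$ are conditionally i.i.d.\ with the right mean; the failure probability bound then holds conditionally and hence unconditionally.
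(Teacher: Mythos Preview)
Your proposal is correct and follows essentially the same approach as the paper: define the true expected rewards, use Hoeffding/Chernoff plus a union bound over all $(i,j,t)$ to get uniform closeness of $\wh r_{i,t}$ to the true rewards, invoke Theorem~\ref{thm:multi-scale} on the estimated rewards, and then translate back to the true rewards at an additive $O(\eps T)$ cost. Your explicit remarks on why the strong adaptive adversary guarantee is needed and on the conditional-i.i.d.\ structure of the samples are a nice addition that the paper leaves implicit.
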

\begin{proof}
Let $p_t =p_{1,t}\otimes \cdots \otimes p_{m,t}$ be the empirical mixed strategy at day $t \in [T]$.
For any player $i\in [m]$, day $t \in [T]$, let $r_{i, t} \in [0,1]^{n}$ be the expected reward of player $i$, given other players' strategy $p_{-i,t}$, i.e.
\begin{align*}
r_{i, t}(j) = \E_{a_{-i} \sim p_{-i, t}}[u_{i}(j; a_{-i})] \quad \forall j \in A_i.
\end{align*}
By Chernoff bound, for any action $j \in A_j$, we have
\begin{align*}
\Pr\left[|\hat{r}_{i, t}(j) - r_{i, t}(j)| \geq \frac{\eps}{4}\right] \leq 2\exp(-\eps^2 K /32) \leq (mn)^{-\Omega(\log(mn)/\eps)}. 
\end{align*}
Taking a union bound over $j \in [n], t\in [T], i \in [m]$, with probability at least $1- 1/(mn)^{\omega(1)}$, we have 
\begin{align}
\label{eq:query1}
\left|\wh{r}_{i, t}(j) - r_{i, t}(j)\right| \leq \eps/4 \quad \forall i\in [m], t\in [T], j \in [n].
\end{align}

For any player $i \in [m]$, consider any swap function $\phi_i$, we have
\begin{align*}
\E_{a\sim p}[u_i(\phi_i(a_i); a_{-i})] - \E_{a \sim p}[u_i(a_i; a_{-i})] = &~ \frac{1}{T}\sum_{t\in [T]} \E_{a\sim p_t} \left[u_{i}(\phi(a_i); a_{-i}) - u_{i}(a_i; a_{-i}) \right]\\
= &~ \frac{1}{T}\sum_{t\in [T]}\sum_{j\in [n]} p_{i, t}(j) r_{i, t}(\phi_i(j)) - p_{i, t}(j) r_{i, t}(j) \\
\leq &~ \frac{1}{T}\sum_{t\in [T]}\sum_{j\in [n]} p_{i, t}(j) \wh{r}_{i, t}(\phi_i(j)) - p_{i, t}(j) \wh{r}_{i, t}(j) + \eps/2\\
\leq &~ \eps/2 + \eps/2 = \eps.
\end{align*}
The first step follows from the definition of output distribution $p = \frac{1}{T}\sum_{t\in [T]}p_{t}$, the second step follows from the definition of $r_{i, t}$. The third step holds due to the approximation guarantee of $\wh{r}_{i, t}$ (see Eq.~\eqref{eq:query1}), and the last step holds due to the swap regret guarantee of multi-scale MWU (see Theorem \ref{thm:multi-scale}). 
\end{proof}

\subsection{Query complexity of correlated equilibria}
The first application is for finding an approximate correlated equilibrium using nearly linear number of queries.
Here we consider the standard payoff query model: The utility matrices (tensors for multiplayer games) are unknown but the algorithm can query their entries.
\queryCE*
\begin{proof}
By Lemma \ref{lem:protocol}, the protocol in Figure \ref{fig:algo} is guaranteed to output an $\eps$-approximate correlated equilibrium, with probability at least $1- 1/(mn)^{\omega(1)}$. It remains to bound the total number of queries. For each player $i \in [m]$ and each day $t\in [T]$, it needs $K = O(\log^2(mn)/\eps^3)$ queries to construct one entry of the reward vector $\wh{r}_{i, t}$, and therefore, the total number of query needed is $O(mnTK) = m n (\log(mn))^{O(1/\eps)}$. We complete the proof here.
\end{proof}

\subsection{Communication complexity of correlated equilibrium}
The multi-scale MWU algorithm also gives a communication protocol for finding approximate correlated correlated in two-player normal-form game, using only {\em polylogarithmic} number of bits.
Recall in the communication model, each player knows its own utility, but not others' utility. 
The goal is to output an (approximate) correlated equilibrium with small amount of communication. 

\communicationCE*
\begin{proof}
Consider the following communication protocol.
Alice runs the multi-scale MWU for $T = (\log(n))^{O(1/\eps)}$ days.
At day $t \in [T]$, Alice commits a strategy $p_t \in \Delta_n$. 
Alice samples a multi-set of $K = O(\log^2(n)/\eps^3)$ actions $i_{t,1}, \ldots, i_{t, K}$ from $p_t$ and sends it to Bob.
Bob plays the best response $j_t \in [n]$ to the uniform strategy $\unif(\{i_{t, k}\}_{k\in [K]})$ and sends $j_{t}$ to Alice.
Alice constructs the reward vector as $r_{t}(i) = u_{A}(i; j_t)$ for all $i\in [n]$.
The communication protocol proceeds in $T$ rounds, and at the end, Alice reports the empirical distribution $p = \frac{1}{T}\sum_{t\in [T]}p_t \otimes e_{j_t}$.

We first prove the empirical distribution $p$ is an $\eps$-approximate correlated equilibrium. For Alice, its swap regret is at most $\eps$. Hence, for any swap function $\phi_A: [n] \rightarrow [n]$, one has
\begin{align*}
\E_{a\sim p}[u_A(\phi_A(a_A); a_{B})] - \E_{a \sim p}[u_A(a_A; a_{B})] = \frac{1}{T}\sum_{t\in [T]}\sum_{i\in [n]}p_{t}(i)r_t(\phi_A(i))- p_{t}(i)r_t(i) \leq \eps.
\end{align*}
For Bob, let $\wh{p}_t \in \Delta_n$ be the uniform distribution $\unif(\{i_{t, k}\}_{k\in [K]})$. For any action $j\in [n]$, by Chernoff bound, we have
\begin{align}
\Pr\left[\left|\sum_{i\in [n]}\wh{p}_{t}(i) u_{B}(j; i) - \sum_{i\in [n]}p_{t}(i) u_{B}(j; i)\right| \geq \eps/2 \right] \leq 2\exp(-K\eps^2/8) \leq n^{\Omega(-\log(n)/\eps)}.  \label{eq:communication-approx}
\end{align}
We take an union bound over all actions $j\in [n]$ and days $t \in [T]$, and condition on this event. For any swap function $\phi_B:[n]\rightarrow [n]$, one has
\begin{align*}
\E_{a\sim p}[u_B(\phi_B(a_B); a_{A})] - \E_{a \sim p}[u_B(a_B; a_{A})] = &~ \frac{1}{T}\sum_{t\in [T]}\sum_{i\in [n]}p_{t}(i) u_{B}(\phi(j_t); i) - p_{t}(i) u_{B}(j_t; i)\\
\leq &~ \frac{1}{T}\sum_{t\in [T]}\sum_{i\in [n]}\wh{p}_{t}(i) u_{B}(\phi(j_t); i) - \wh{p}_{t}(i) u_{B}(j_t; i) + \eps \\
\leq &~ \eps.
\end{align*}
The first step follows from the definition of the protocol, the second step follows from Eq.~\eqref{eq:communication-approx}, the third step holds since Bob plays the best response for $\wh{p}_t = \unif (\{i_{t, b}\}_{b\in [B]})$.

The communication complexity of the above protocol is $O(T K \log(n)) = \log(n)^{O(1/\eps)}$. 
\end{proof}




The communication protocol of Corollary \ref{cor:communication} only allows Alice to output the correlated equilibrium. If the goal is a sparse approximate correlated equilibrium that both parties can output, then we can use the following sparisification procedure. The proof can be found at Appendix \ref{sec:application-app}.

\begin{lemma}[Sparsification of correlated equilibrium]
\label{lem:sparsification}
Suppose $p \in \Delta_{n\times n}$ is an $\eps$-approximate correlated equilibrium and its column support has size $S$, i.e., $|\{j: \exists i \in [n], p_{i, j} > 0\}| = S$. Then there is a randomized algorithm that outputs an $(\eps+\delta)$-approximate correlated equilibrium $p'$ that has row support size $O(S^2\log(n)/\delta^2)$ and column support size $S$, without looking at the utility matrices of the game, and with success probability at least $1-1/n^{\omega(1)}$.
\end{lemma}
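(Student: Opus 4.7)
The sparsification is a row-importance-sampling scheme, oblivious to the utility matrices. Let $p(i) = \sum_j p_{ij}$ denote the row marginal, fix $R = \Theta(S^2 \log(n)/\delta^2)$, and set $q_i = \min(1, R\cdot p(i))$. Independently for each $i$, include row $i$ in the support of $p'$ with probability $q_i$; if included, set $p'_{ij} = p_{ij}/q_i$ for every $j$, and $0$ otherwise. ``Heavy'' rows with $q_i = 1$ satisfy $p(i) \ge 1/R$ and therefore number at most $R$; the expected number of included ``light'' rows is $\sum_{\text{light}} q_i = R\sum_{\text{light}} p(i) \le R$. A standard Chernoff bound then gives total row support $O(R) = O(S^2 \log(n)/\delta^2)$ with probability $1 - 1/n^{\omega(1)}$, and the column support of $p'$ is at most $S$ because $p'_{ij} > 0$ requires $p_{ij} > 0$. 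Since $\E[p'_{ij}] = p_{ij}$, for any swap $\phi$ the expected CE-violation of $p'$ equals that of $p$, which is at most $\eps$; the whole task is to establish concentration around this expectation, simultaneously for all swap functions.

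Alice's side is where the argument is cleanest. For any swap $\phi_A:[n]\to[n]$, the CE-violation of $p'$ decomposes as $\sum_i (Z_i/q_i)\, h_i^A$, where $Z_i\in\{0,1\}$ indicates whether row $i$ was sampled and $h_i^A = \sum_j p_{ij}[u_A(\phi_A(i),j)-u_A(i,j)]$. Since $Z_i/q_i\ge 0$ and $h_i^A$ depends only on $\phi_A(i)$, the maximum over $\phi_A$ decouples row by row:
\[
\max_{\phi_A}\sum_i (Z_i/q_i)\, h_i^A \;=\; \sum_i (Z_i/q_i)\, g_i^\ast,
\]
where $g_i^\ast := \max_{i'}\sum_j p_{ij}[u_A(i',j)-u_A(i,j)] \in [0,p(i)]$ is \emph{deterministic}. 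The supremum over the $n^n$ swap functions thus collapses into a single scalar random variable, and no union bound over $\phi_A$ is required. This scalar has mean $\sum_i g_i^\ast \le \eps$ (Alice's swap-regret under $p$), per-summand bound $g_i^\ast/q_i \le 1/R$ on light rows (and $0$ on heavy rows), and total variance $\sum_i (g_i^\ast)^2(1-q_i)/q_i \le \sum_i g_i^\ast/R \le \eps/R$ using $g_i^\ast \le p(i)$. Bernstein then yields deviation at most $\delta$ with probability $1-\exp(-\Omega(R\delta^2/(\eps+\delta)))$, so in fact $R=\Omega(\log(n)/\delta^2)$ already suffices for Alice.

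Bob's side is more delicate because $\phi_B$ couples rows: the swap acts on columns, not rows. Writing the CE-violation as $\sum_{s=1}^{S}[B_s(\phi_B(j_s))-B_s(j_s)]$ with $B_s(j') = \sum_i (Z_i/q_i)p_{i,j_s}u_B(i,j')$ and $\E[B_s(j')] = \sum_i p_{i,j_s}u_B(i,j')$, each $B_s(j')$ has variance at most $q_s/R$ (using $p_{i,j_s}^2/p(i)\le p_{i,j_s}$) and per-summand bound $1/R$. Bernstein combined with a union bound over $j'\in[n]$ gives $\max_{j'}|B_s(j')-\E[B_s(j')]| = O(\sqrt{q_s\log(n)/R})$ with high probability; summing over the $S$ columns in the support and applying Cauchy--Schwarz ($\sum_s\sqrt{q_s}\le\sqrt{S}$) yields total deviation $O(\sqrt{S\log(n)/R})$, which is at most $\delta$ for $R=\Omega(S\log(n)/\delta^2)$. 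The choice $R=\Theta(S^2\log(n)/\delta^2)$ in the lemma comfortably accommodates both sides, and a routine Chernoff step at the end handles the small normalization discrepancy $|\sum_{ij}p'_{ij}-1|=O(\delta)$, absorbed into $\delta$ upon renormalizing $p'$. The only subtle step is the row-by-row decoupling for Alice, which is what makes the union bound over the $n^n$ Alice-swaps evaporate into a single scalar Bernstein estimate; everything else is routine.
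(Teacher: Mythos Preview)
Your proof is correct and follows essentially the same architecture as the paper's: sample rows with probability proportional to their marginal mass, observe for Alice that row-rescaling leaves the optimal swap unchanged (your ``decoupling'' is exactly the paper's remark that ``$\phi$ remains the optimal swap function''), and for Bob do a per-column-pair concentration bound followed by a union bound over $j,j'\in[n]$ and a sum over the $S$ supported columns.

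The differences are implementation details. The paper draws $D$ rows i.i.d.\ from the row marginal (so $p'$ is automatically a distribution and the row support is trivially $\le D$), whereas you use independent Bernoulli inclusion with probabilities $q_i=\min(1,Rp(i))$; this is why you need the extra renormalization step and the Chernoff bound on the row-support size. For Bob, the paper applies Hoeffding to get per-entry deviation $O(\sqrt{\log(n)/D})\le\delta/(2S)$ and then sums $S$ such terms, which is the source of the $S^2$ in the statement; your Bernstein step exploits the variance bound $q_s/R$ and the Cauchy--Schwarz step $\sum_s\sqrt{q_s}\le\sqrt{S}$, so you actually only need $R=\Omega(S\log(n)/\delta^2)$, a factor-$S$ improvement over what the lemma asks for. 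Both arguments are sound; yours is slightly sharper for Bob at the cost of the renormalization housekeeping.
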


\subsection{Computational complexity of correlated equilibrium}
Our no-swap regret algorithm gives a nearly linear time algorithm for computing an approximate correlated equilibrium. 
Note that this is {\em sublinear} in the size of description of the game (which is roughly $n^m$).
\computeCE*
\begin{proof}
By Lemma \ref{lem:protocol}, the protocol in Figure \ref{fig:algo} is guaranteed to output an $\eps$-approximate correlated equilibrium, with probability at least $1- 1/(mn)^{\omega(1)}$. It remains to bound the computation cost. For each player $i \in [m]$ and each day $t\in [T]$, it needs to draw $K = O(\log^2(mn)/\eps^3)$ action profiles to construct the reward vector $\hat{r}_{i,t}$. The sampling step takes $O(mnK)$ time for each player. Nevertheless, note these samples can be shared across players, so the total cost for sampling remains $O(mnK)$. The construction of reward vector takes $O(nK)$ time per player, and $O(mnK)$ in total. To maintain the multi-scale MWU, the cost per day equals $O(n/\eps)$ since there are $O(1/\eps)$ threads of MWU. Hence, the total computation cost equals $O(mnKT) = mn(\log(mn))^{O(1/\eps)}$. 
\end{proof}

\subsection{Polynomial time approximation scheme for extensive-form game}

We next give an example showing that the multi-scale MWU can be used to derive polynomial time algorithms for finding approximate correlated equilibrium in large action games. In particular, we present the first polynomial time approximation scheme (PTAS) for computing normal-form correlated equilibrium (NFCE, also known as strategic-form correlated equilibrium) of an extensive-form game (EFG). 
The idea is to use the protocol in Figure \ref{fig:algo} and let each player perform multi-scale MWU over its {\em strategy space}. The strategy space has exponential size but we show that it allows efficient computation.

\paragraph{Extensive-form game}
In an $m$-player extensive-form game, there is a directed game tree $\Gamma$. Let $\N$ be all nodes of $\Gamma$ and $\mZ$ be all terminal nodes.
The non-terminal nodes of the game tree are partitioned into decision nodes and chance nodes $\N \backslash \mZ = \N_{1} \cup \cdots \N_{m} \cup \N_{\chance}$. Here $\N_{i}$ ($i \in [m]$) is the set of nodes where player $i$ takes the action and $\N_{\chance}$ are chance nodes.
The function of a chance node is to assign an outcome of a chance event, and each outgoing edge represents one possible outcome of that chance event as well as the probability of the event. At a decision node, the edges represent actions and successor states that result from the player taking those actions.
The decision nodes of $\N_i$ are further partitioned into {\em information sets} $\mH_i$, and for each information set $h \in \mH_i$, let $A_h$ be all actions available to player $i$. 
The action set $A_h$ is the same for all nodes in $h$,  and it is wlog to assume the action sets $\{A_h\}_{h \in \mH_i}$ are disjoint.
For any information set $h \in \mH_i$, let $\sigma_i(h)$ be the sequence of actions taken by player $i$, from the root to $h$ (it does not include the action taken at $h$).
We assume each player has {\em perfect recall}, i.e., the sequence $\sigma_i(h)$ is the same for every node in the information set $h$.
For terminal nodes, player $i$ receives the reward $\gamma_i(z) \in [0, 1]$ at a terminal node $z \in \mZ$.
The set of pure strategies for player $i \in [m]$ is $\mS_i = \prod_{h \in \mH_i} A_{h}$ and the entire strategy space is $\mS = \prod_{i\in [m]}\mS_i$. 
For simplicity, we assume each player has $\Phi$  information sets, and each information set has $n$ actions.

\paragraph{Notation} For any node $\nu_1, \nu_2 \in \N$, we write $\nu_1 \preceq \nu_2$ if $\nu_1$ is a predecessor of $\nu_2$. Given a strategy profile $s \in \mS$, for each node $\nu \in \N$, let $\pi(s; \nu)$ be the probability of visiting node $\nu$ if players use strategy $s$. Let $u_{i}(s; \nu)$ be the expected utility of player $i$ if it visits node $\nu$, i.e., $u_{i}(s; \nu) := \sum_{z \in \mZ, \nu \preceq z}\pi(s; z) \cdot \gamma_i(z)$.
We use $u_i(s)$ to denote the expected utility of player $i$ at the root. Given an information set $h \in \mH_i$, we write $\nu \in h$ if the decision node $\nu$ is in the information set $h$, let $u_{i}(s; h)$ be the total utility of nodes in $h$, i.e., $u_{i}(s; h) := \sum_{\nu \in h}u_{i}(s; \nu)$.

An $\eps$-approximate NFCE of EFG is a distribution $\sigma \in \Delta(\mS)$ over the strategy space, such that no player can gain $\eps$ more utility (in expectation) by deviating from its recommended strategy.
\begin{definition}[$\eps$-approximate NFCE of EFG]
Let $\eps > 0$, $\sigma \in \Delta(\mS)$ is an $\eps$-approximate normal-form correlated equilibrium of an $m$-player extensive-form game, if for any player $i\in [m]$ and any swap function $\phi: \mS_i \rightarrow \mS_i$, 
\begin{align*}
\E_{s \sim \sigma}[u_i(s_i, s_{-i})] \geq \E_{s \sim \sigma}[u_i(\phi(s_i), s_{-i})] -\eps.
\end{align*}
\end{definition}

The key observation is that one can efficiently implement MWU for extensive-form games.
\begin{lemma}[Efficient implementation of MWU for EFGs]
\label{lem:efg-mwu}
Let $T$ be a positive integer and $\eta > 0$ be the step size. Given strategies $s_{-i, 1}, \ldots, s_{-i, T}\in \mS_{-i}$ of players $[m]\setminus \{i\}$, one can sample from the following distribution in polynomial time 
\begin{align}
p(s_i) \propto \exp\left(\eta \sum_{t \in [T]} u_{i}(s_i, s_{-i, t})\right) \quad \forall s \in \mS_i. \label{eq:mwu-distribution}
\end{align}
\end{lemma}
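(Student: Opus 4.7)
The plan is to exploit the tree structure of the game together with perfect recall to sample from the exponentially-weighted distribution over the exponentially-large space $\mS_i$ in time polynomial in $n$, $\Phi$, $|\N|$, and $T$. The crux will be a dynamic program on player $i$'s information-set tree; setting it up correctly is the only real content.

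First I would rewrite the cumulative reward in a structured form. For each terminal $z\in\mZ$ the reach probability factors as $\pi(s_i, s_{-i,t};z)=\pi_i(s_i;z)\cdot \pi_{-i}(s_{-i,t};z)$, where $\pi_i(s_i;z)\in\{0,1\}$ equals $1$ exactly when $s_i$ plays the player-$i$ actions on the root-to-$z$ path. Setting
\[
w(z) \;:=\; \gamma_i(z)\sum_{t\in[T]}\pi_{-i}(s_{-i,t};z),
\]
which is computable in $\poly(|\N|,T)$ time by one bottom-up pass per $t$, we obtain
\[
\sum_{t\in[T]} u_i(s_i,s_{-i,t}) \;=\; \sum_{z\in\mZ} w(z)\,\pi_i(s_i;z),
\]
so the MWU weight $\exp(\eta \sum_t u_i(s_i,s_{-i,t}))$ is a product of factors indexed by indicators on finite chains of player-$i$ decisions.

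Next I would build player $i$'s information-set tree $\mT_i$. Perfect recall makes $\sigma_i(h)$ well defined, so each non-root $h\in\mH_i$ has a unique parent (the last info set in $\sigma_i(h)$), and its children are indexed by $(h,a)$ with $a\in A_h$, containing all $h'$ with $\sigma_i(h')=\sigma_i(h)\cdot(h,a)$. Bucketing terminals by the last player-$i$ decision on their path and letting $\tilde w(h,a):=\sum_{z}w(z)$ over that bucket (with a constant $w_0$ for terminals below no player-$i$ node) yields the decomposition
\[
\sum_{t\in[T]} u_i(s_i,s_{-i,t}) \;=\; w_0 + \sum_{h\in\mH_i}\tilde w(h,s_i(h))\cdot \mathds{1}[s_i\text{ reaches }h].
\]
The key structural consequence is that no terminal's path visits info sets lying in two different subtrees of $\mT_i$, so the MWU weight factorizes over the subtrees of $\mT_i$ once we condition on the actions chosen at reached info sets. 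From this I run a leaves-to-root dynamic program computing
\[
F(h) \;:=\; \sum_{a\in A_h}\exp\!\bigl(\eta\,\tilde w(h,a)\bigr)\prod_{h'\text{ child of }(h,a)} F(h'),
\]
with empty products equal to $1$. Sampling proceeds top-down on $\mT_i$: at each reached $h$ draw $a\in A_h$ proportional to $\exp(\eta\tilde w(h,a))\prod_{h'}F(h')$ and recurse into its newly reached children, while info sets that remain unreached are assigned arbitrary actions (they contribute nothing to the reward, so any fixed tie-breaking leaves the marginal on $\mS_i$ proportional to the target). A straightforward induction on the depth of $\mT_i$ confirms correctness, and the total running time is polynomial.

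The main obstacle is the subtree-factorization step: the reason the exponential sum over $\mS_i$ collapses into a tree DP is that perfect recall forces the player-$i$ info sets encountered on any root-to-terminal path to form a chain in $\mT_i$, so contributions from sibling subtrees of $\mT_i$ are independent. Without perfect recall this chain property fails and the partition function need not factor at all, so any polynomial-time exact sampler would be non-obvious. Once the factorization is identified, the rest is standard tree dynamic programming plus a one-time preprocessing of the weights $w(z)$.
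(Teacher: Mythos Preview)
Your overall plan---reorganize the cumulative payoff along player~$i$'s information-set tree, compute a partition function by a leaves-to-root dynamic program, and then sample top-down---is exactly the approach the paper takes. The paper likewise defines recursive partition functions $V_i(h)$ and $U_i(h.a)$ on the info-set tree (Lemma on recursive computation) and samples by drawing the action at each reached $h$ proportionally to $U_i(h.a)/V_i(h)$, filling in unreached info sets uniformly (Lemma on sampling with partition function).

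There is, however, a concrete gap in your recursion. Your $F(h)=\sum_{a}\exp(\eta\,\tilde w(h,a))\prod_{h'\in\mC_{h.a}}F(h')$ is the partition function over \emph{reduced} strategies (one summand per assignment of actions to \emph{reachable} info sets in the subtree), not over $\mS_i$. The target $p(s_i)\propto\exp(\eta\sum_t u_i(s_i,s_{-i,t}))$ puts equal mass on every full strategy, so two reduced strategies that leave different numbers of info sets unreached carry different multiplicities in the true partition function. Concretely, take a root info set $h_1$ with actions $\{a_1,a_2\}$ and a single child $h_2$ (with two actions) under $(h_1,a_1)$ only. Under $p$, the event $\{s_i(h_1)=a_2\}$ has mass $2\exp(\eta\tilde w(h_1,a_2))/Z$ with $Z=\exp(\eta\tilde w(h_1,a_1))F(h_2)+2\exp(\eta\tilde w(h_1,a_2))$, whereas your top-down rule assigns it mass $\exp(\eta\tilde w(h_1,a_2))/F(h_1)$ with $F(h_1)=\exp(\eta\tilde w(h_1,a_1))F(h_2)+\exp(\eta\tilde w(h_1,a_2))$; these are not proportional, regardless of how you fill in $h_2$ afterwards. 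So the sentence ``any fixed tie-breaking leaves the marginal on $\mS_i$ proportional to the target'' is false.

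The fix is small and is precisely what the paper does: when action $a$ is taken at $h$, multiply in the count $\prod_{h''}|A_{h''}|$ over all info sets in the subtrees hanging off the \emph{other} actions $a'\neq a$ (the unreached part), and then sample unreached info sets uniformly. With that correction your induction goes through verbatim. Alternatively, you could note that your $F$ implements MWU over the reduced normal form and argue that this suffices for the downstream application, but that is a different statement than the lemma as written.
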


The proof can be found at Appendix \ref{sec:application-app}. Now, we have
\efgCE*
\begin{proof}
We apply the protocol in Figure \ref{fig:algo} to the strategy space $\mS = \mS_1\times \cdots \times \mS_{m}$. By Lemma \ref{lem:protocol}, the empirical distribution converges to an $\eps$-approximate NFCE in $T = (\log(|\mS_i|))^{O(1/\eps)} = (\Phi\log(n))^{O(1/\eps)}$ days. It remains to demonstrate the computational efficiency. This comes from the fact that each player runs multiple threads of MWU in the protocol, and by Lemma \ref{lem:efg-mwu}, MWU can be efficiently implemented for EFGs. 
\end{proof}

	\section{Lower bound}
\label{sec:lower}
We aim to prove the following lower bound on the swap regret.

\LB*


The lower bound construction is in Section \ref{sec:hard-seq} and its analysis is presented in Section \ref{sec:lower-analysis}.

\subsection{Hard sequence} 
\label{sec:hard-seq}

Let $K, L$ and $\Delta \in (0,1/20]$ be the input parameters.

\paragraph{$K$-ary Tree} The hard sequence goes over all actions $[n]$ via a depth-first search over a $K$-ary tree. 
The tree has $L+1$ levels and each internal node has $K$ child nodes. 
The root is at level $L$ and the leaves are at level $0$. 
Let $\mT_\ell = [0:K-1]^{L-\ell}$ be all nodes at level $\ell \in [0:L]$ and $\mT = \cup_{\ell \in [0:L]}\mT_\ell$ be all nodes in the tree.
We write $a = a_L\ldots a_{\ell+1} \in \mT_{\ell}$ to denote the $a$-th node at level $\ell$, where $a_{\ell+1}, \ldots, a_{L} \in [0:K-1]$.
We write $a.k$ to denote the $k$-th ($k\in [0:K-1]$) child node of $a$.


There are $K^{L}$ leaf nodes in total and each leaf node $a \in \mT_{0}$ maps to two actions $2a + 1, 2a+2$. 
Here we slightly abuse notation and also view $a$ as a natural number in base $K$.
The action set $\N_a$ of an internal node $a \in \mT_{\ell}$ is the union of its descendants' actions. It has size $n_{\ell} = 2K^{\ell}$ and satisfies 
\[
\N_a := \left[\sum_{\ell'= L}^{\ell+1}a_{\ell'} n_{\ell'-1} + 1: \sum_{\ell'= L}^{\ell'+1}a_{\ell'} n_{\ell'-1} + n_\ell\right].
\]
Let $n = 2K^{L}$ be the total number of actions, the root node includes the entire action set $[n] = [2K^{L}]$.


\paragraph{Reward sequence} The reward sequence is formally depicted in Algorithm \ref{algo:hard-sequence-oblivious}.
Nature visits all leaf nodes in order, but randomly skips some of them.
The visit is constructed recursively. Nature starts from the root node, and at each internal node $a \in \mT_{\ell}$ ($\ell \in [L]$) it visits, Nature goes through the $K$ child nodes in order. 
After completing the visit of each child node, Nature has some chance (w.p. $q =\frac{1}{2K}$) to skip the rest of $a$'s sub-tree (Line \ref{line:skip}).
When Nature visits a leaf node $a \in \mL_0$, it constructs the reward sequence for the next $H = \frac{1}{400\Delta^2}$ days as follow.
For nodes that have already been passed, the reward is set to $-1$, i.e., $r_{i} = -1$ for $i \in [2a]$ (Line \ref{line:reward-played1} and Line \ref{line:reward-played2}). For actions $2a+1, 2a+2$, one draws reward from $\frac{L}{16(L+1)} + \frac{1}{16(L+1)} B_{1/2 + \Delta}$ and the other draws reward from $\frac{L}{16(L+1)} + \frac{1}{16(L+1)} B_{1/2}$.
For the rest of action $i \in [2a+3: n]$, consider the path from root to leaf $a$, and suppose $i \in \N_{a'}$ for node $a' \in \mT_\ell$ in the path (if there are multiple such nodes, take the lowest one), then the reward is set to $\frac{L-\ell}{16(L+1)}$ (Line \ref{line:reward-level}).

\begin{algorithm}[!htbp]
\caption{$\textsc{HardSeq}(\ell, a)$ \Comment{Level $\ell \in [0:L]$, node $a \in \mT_\ell$}}
\label{algo:hard-sequence-oblivious}
\begin{algorithmic}[1]
\If{$\ell= 0$} \Comment{Leaf node}
\State Sample $i^{*}(a)\sim \{1,2\}$ 
\State Update reward of action $2a+1, 2a+2$
\begin{align*}
r_{2a + i^{*}(a)} \leftarrow &~ \frac{L}{16(L+1)} + \frac{1}{16(L+1)} B_{1/2 + \Delta} \\
r_{2a + 3 - i^{*}(a)} \leftarrow &~ \frac{L}{16(L+1)} + \frac{1}{16(L+1)} B_{1/2} 
\end{align*}
\State Play for $H = \frac{1}{400\Delta^2}$ days \label{line:two-coin2}
\State Update reward $r_{2a+1}\leftarrow -1,  r_{2a+2} \leftarrow -1$ \label{line:reward-played1}
\Else \Comment{Internal node}
\State Update reward $r_i \leftarrow \frac{L-\ell}{16(L+1)}$ for all actions $i \in \N_a$  \label{line:reward-level}
\For{$k = 0,1,\ldots, K-1$} 
\State $\textsc{HardSeq}(\ell-1, a.k)$ \Comment{Visit the $k$-th child node}
\State {\bf with probability} $q = \frac{1}{2K}$ {\bf do} \Comment{Skip rest of the sub-tree}  \label{line:skip}
\State \indent Update reward $r_{i} \leftarrow -1$ for all action $i\in \N_a$ \label{line:reward-played2}
\State \indent {\bf break} 
\EndFor
\EndIf
\end{algorithmic}
\end{algorithm}

\subsection{Analysis}
\label{sec:lower-analysis}

We analyse the expected swap regret under the reward sequence constructed by Algorithm \ref{algo:hard-sequence-oblivious}. Let $T_{\ALG}$ be the total number of days of Algorithm \ref{algo:hard-sequence-oblivious}, our goal is to prove
\begin{lemma}
\label{lem:lower-goal}
Suppose the reward sequence is constructed as in Algorithm \ref{algo:hard-sequence-oblivious}, then any algorithm has expected swap regret at least
\begin{align}
\label{eq:lower-goal}
\E[\texttt{swap-regret}] \geq \min \left\{ \frac{\E[T_{\ALG}]}{KL}, \frac{\E[T_{\ALG}]\Delta}{L} \right\}.
\end{align}
\end{lemma}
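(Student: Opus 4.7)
The plan is to lower bound the expected swap regret achieved by the natural ``leaf-wise'' swap function $\phi^*(i) := 2a + i^*(a)$, where $a$ is the unique leaf with $i \in \N_a = \{2a+1, 2a+2\}$. The first step is to exploit the tree structure of the reward sequence to show that the cross-leaf contributions to $\sum_{t,i} p_t(i)(r_t(\phi^*(i)) - r_t(i))$ vanish identically: at a day $t$ inside leaf $a$'s block, if $i \in \N_b$ with $b$'s subtree already visited or skipped, both $r_t(i)$ and $r_t(\phi^*(i))$ equal $-1$; whereas if $b$ is still ``future'' at time $t$, then $\phi^*(i) \in \N_b$ and both $r_t(\phi^*(i))$ and $r_t(i)$ equal $(L-\ell)/(16(L+1))$, where $\ell$ is the level of the LCA of $a$ and $b$. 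Thus the swap regret collapses to the per-leaf expression $\sum_a \sum_{t \in a} p_t(2a+3-i^*(a))\cdot\big(r_t(2a+i^*(a)) - r_t(2a+3-i^*(a))\big)$, whose expectation equals $\tfrac{\Delta}{16(L+1)} \sum_a \sum_{t\in a} \E[p_t(2a+3-i^*(a))]$ because $p_t$ is history-measurable and independent of the fresh Bernoulli draw at day $t$.

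The second step is an information-theoretic lower bound on $\E[p_t(2a+3-i^*(a))]$ in terms of the algorithm's total mass $q_t(a) := p_t(2a+1)+p_t(2a+2)$ on the current pair. Within the $H = 1/(400\Delta^2)$ days of leaf $a$, the algorithm's observation distributions under $i^*(a)=1$ versus $i^*(a)=2$ accumulate KL divergence at rate $O(\Delta^2)$ per round, so Pinsker's inequality gives $\TV_t \le O(\Delta\sqrt{t - t_{\mathrm{start}(a)}})$, which stays $\le O(1)$ throughout the block. Combining the identity $\E[q_t(a)] = \E[p_t(2a+i^*(a))] + \E[p_t(2a+3-i^*(a))]$ with the data-processing bound $\E[p_t(2a+i^*(a))] - \E[p_t(2a+3-i^*(a))] \le \TV_t$, I would deduce $\E[p_t(2a+3-i^*(a))] \ge \tfrac{1}{2}\E[q_t(a)] - \tfrac{1}{2}\TV_t$. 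Summing inside a single leaf uses $\sum_t \TV_t = O(\Delta H^{3/2}) = O(H)$ to obtain $\E[\sum_{t\in a} p_t(2a+3-i^*(a))] \ge \tfrac{1}{2}\E[M_a] - O(H)$ with $M_a := \sum_{t\in a} q_t(a)$; summing across the visited leaves yields
\[
\E[\texttt{swap-regret}] \;\ge\; \Omega\!\left(\tfrac{\Delta}{L}\right) \Big(\tfrac{1}{2}\,\E\big[\textstyle\sum_a M_a\big] - c\,\E[T_{\ALG}]\Big)
\]
for a small absolute constant $c$.

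The proof then splits on how the algorithm allocates its mass. In the ``concentrated'' regime $\E[\sum_a M_a] \ge \Omega(\E[T_{\ALG}])$ the display above directly certifies $\Omega(\E[T_{\ALG}]\Delta/L)$ swap regret, and since $\min\{\E[T_{\ALG}]/(KL),\,\E[T_{\ALG}]\Delta/L\} \le \E[T_{\ALG}]\Delta/L$ the lemma is immediate. Otherwise most of the algorithm's mass sits off the current leaf pair. Mass on already-visited or skipped actions has reward $-1$, so any fraction exceeding $\Omega(1/(KL))$ immediately yields external (and hence swap) regret $\ge \Omega(\E[T_{\ALG}]/(KL))$ via a single fixed-action swap $\phi(i) \equiv j^*$ to an appropriately chosen $j^* = 2a^* + i^*(a^*)$. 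The remaining ``diffuse'' case, in which most mass lies on not-yet-visited future actions, is the main obstacle; to handle it one needs an auxiliary fixed (not time-varying) swap function that converts future mass into regret using the $K$-ary tree structure: the reward gap of $1/(16(L+1))$ per level between the current leaf pair and a sibling subtree, combined with the random skip probability $q = 1/(2K)$ (which prevents the algorithm from confidently singling out the next-visited child among $K$ siblings), should yield $\Theta(1/K)$ regret per unit of anticipatory mass at each of the $L$ levels. Tracking this $1/K$ factor through the recursion on levels, and aggregating the local regrets into a single global swap function, is the most delicate part of the argument and is what produces the $\Omega(\E[T_{\ALG}]/(KL))$ lower bound in this regime.
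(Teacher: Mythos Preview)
Your overall architecture matches the paper's: a three–way case split according to whether the algorithm's mass sits on actions whose leaf has already been passed ($Z$), the currently visited leaf ($Y$, your $M_a$), or future leaves ($X$). Your ``present'' case (Case~2 in the paper) is correct and in fact slightly cleaner than the paper's: you compare the observation laws under $i^*(a)=1$ versus $i^*(a)=2$ directly via Pinsker to get $\E[p_t(2a+3-i^*(a))]\ge \tfrac12\E[q_t(a)]-\tfrac12\TV_t$, whereas the paper routes both through $B_{1/2}^h$ and packages the result as a standalone ``two-coin'' lemma. Both give the same $\Omega(\E[T_{\ALG}]\Delta/L)$ once $\sum_a M_a\ge \tfrac13\E[T_{\ALG}]$.

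There is a genuine gap in your ``diffuse'' (future-mass) case, which you yourself flag as incomplete. Your stated intuition---that the skip probability ``prevents the algorithm from confidently singling out the next-visited child among $K$ siblings''---is off: the children are traversed in deterministic order $0,1,\dots,K-1$, so there is no uncertainty about \emph{which} child is next. The uncertainty is about \emph{when the visit terminates}: after finishing each child, all remaining siblings are skipped with probability $1/(2K)$. The paper's construction is a single (random, but not time-varying) swap: for every node $a$ that was visited while its next sibling $a+1$ was skipped, map all of $\N^+(a)$ (the union of $a$'s older-sibling subtrees) to the last action of $\N_a$. Two ingredients you are missing are (i) the verification that these per-node rules are consistent, i.e.\ define a single function $\phi$, and (ii) the key \emph{independence} step: conditional on $a$ being visited, the mass $M_a:=\sum_{t\in[S_a:E_a]}\sum_{i\in\N^+(a)}p_t(i)$ is fixed before the post-$a$ skip coin is tossed, hence independent of the event $\{a+1\notin\mV_\ell\}$. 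This is what turns the $1/(2K)$ skip probability into a clean multiplicative $1/(2K)$ on $\E[M_a]$. The paper then closes the loop with the identity $\sum_{\ell}\sum_{a\in\mT_\ell}M_a=\sum_i X_i$ (its Lemma~\ref{lem:alternative}).

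A smaller correctness issue in your ``past'' case: the threshold $\Omega(1/(KL))$ is not enough. From the constant-swap $\phi\equiv n$ one only gets $\texttt{swap-regret}\ge \tfrac{17}{16}\sum_i Z_i-\tfrac{1}{16}T_{\ALG}$, because the non-$Z$ mass can earn up to $1/16$ per unit; this is positive only once $\sum_i Z_i>\tfrac{1}{17}T_{\ALG}$. You therefore need the constant-fraction threshold (the paper uses $1/3$), which is fine since $\sum_i X_i+\sum_i Y_i+\sum_i Z_i=T_{\ALG}$.
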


\begin{proof}
For any node $a \in \mT$, let $S_a \in [T_{\ALG}]$ be the first time that Nature visits $a$ and $E_a \in [T_{\ALG}]$ be the last time that Nature visits $a$.
If Nature never visits node $a$, then $S_a$ is defined as the time that Nature skips $a$, and $E_a = S_a -1$.
For any action $i\in [n]$, let $a(i) := \lfloor \frac{i-1}{2}\rfloor$ be the leaf node of $i$.
Define 
\begin{align*}
X_i = \sum_{t \in [S_{a(i)} - 1]}p_t(i); \qquad \quad Y_i = \sum_{t \in [S_{a(i)}: E_{a(i)}]}p_t(i); \qquad \quad 
Z_i = \sum_{t \in [E_{a(i)} + 1: T_{\ALG}]}p_t(i).
\end{align*}
That is, $X_i$ is the total probability mass that the algorithm places on $i$ before Nature visits the leaf node $a(i)$; $Y_i$ is the probability mass when Nature visits the leaf node $a(i)$; and $Z_i$ is the probability mass after visiting the leaf node $a(i)$.
By the definition, the total mass placed on action $i$ equals $X_i + Y_i + Z_i$ and one has $\sum_{i \in [n]}X_i + Y_i + Z_i = T_{\ALG}$.

We divide into three cases based on the value of $\sum_{i\in [n]}\E[X_i], \sum_{i\in [n]}\E[Y_i]$ and $\sum_{i\in [n]}\E[Z_i]$.

{\bf Case 1.} Suppose $\sum_{i \in [n]}\E[X_i] \geq \frac{1}{3}\E[T_{\ALG}]$. That is, the algorithm places large mass on actions before visiting their leaf nodes.

We first give an alternative way of computing the mass $\sum_{i \in [n]}X_i$. At level $\ell \in [0:L-1]$ and node $a \in \mT_\ell$, let $\N^{+}(a)$ contain all actions in the older siblings of $a$, i.e.,
\begin{align*}
\N^{+}(a) := \N_{a_{L}\ldots a_{\ell+2} (a_{\ell+1}+1)} \cup \cdots \cup \N_{a_{L}\ldots a_{\ell+2}K-1}.
\end{align*}
Note if $a$ is the oldest child node, i.e., $a_{\ell+1} = K-1$, then $\N^{+}(a) = \emptyset$.
Define
\begin{align}
M_a := \sum_{t \in [S_a: E_a]}\sum_{i \in \N^{+}(a)} p_t(i). \label{eq:m-a}
\end{align}
That is, $M_a$ is the total probability mass placed on $\N^{+}(a)$ (actions of older siblings of $a$) during the visit of node $a$. 
We make the following claim, whose proof can be found at Appendix \ref{sec:lower-app}.
\begin{lemma}
\label{lem:alternative}
We have $\sum_{i\in [n]}X_i = \sum_{\ell \in [0:L-1]}\sum_{a \in \mT_\ell}M_a$.
\end{lemma}
Let $\mV_\ell \subseteq \mT_\ell$ be the set of visited nodes at level $\ell$. 
Consider the following swap function $\phi$: For each level $\ell \in [0:L-1]$ and for each node $a \in \mT_{\ell}$ in level $\ell$, suppose (1) $a \in \mV_{\ell}$ has been visited and (2) its older sibling $a + 1 \notin \mV_{\ell}$ has been skipped, then the swap function maps actions in $\N^{+}_{a}$ to the last action in $\N_{a}$. 
It is easy to check that for every action $i \in [n]$, $\phi(i)$ is uniquely defined.

We can bound the swap regret as follow.

\begin{align}
\texttt{swap-regret} \geq &~ \sum_{i \in [n]} \sum_{t\in [T]}p_t(i)(r_t(\phi(i)) - r_t(i)) \notag \\
= &~ \sum_{\ell \in [0:L-1]}\sum_{a \in \mV_{\ell} \wedge (a+1) \notin \mV_{\ell}} \sum_{i \in \N^{+}(a)}\sum_{ t\in [T]}p_t(i)(r_t(\phi(i)) - r_t(i))\notag\\
= &~ \sum_{\ell \in [0:L-1]}\sum_{a \in \mV_{\ell} \wedge (a+1) \notin \mV_{\ell}} \sum_{i \in \N^{+}(a)}\sum_{t\in [S_a : E_a]}p_t(i)(r_t(\phi(i)) - r_t(i))\notag\\
\geq &~ \frac{1}{16(L+1)} \sum_{\ell \in [0:L-1]} \sum_{a \in \mV_{\ell} \wedge (a+1) \notin \mV_{\ell}} \sum_{i \in \N^{+}(a)}\sum_{t\in [S_a : E_a]}p_t(i)\notag\\
= &~ \frac{1}{16(L+1)} \sum_{\ell \in [0:L-1]}\sum_{a \in \mV_{\ell} \wedge (a+1) \notin \mV_{\ell}} M_a. \label{eq:case1-swap}
\end{align}

The second step holds since the swap function only changes actions in 
$
\bigcup_{\ell \in [0:L-1]}\bigcup_{a \in \mV_{\ell} \wedge (a+1)\notin \mV_{\ell}} \N^{+}(a).
$
The third step holds since the actions $i$ and $\phi(i)$ ($i\in \N^{+}(a)$) have different rewards only when Nature visits node $a$. The fourth step holds since 
\[
r_t(\phi(i)) - r_t(i) \geq \frac{L - \ell}{16(L+1)} - \frac{L - \ell -1}{16(L+1)} = \frac{1}{16(L+1)} \quad \forall t \in [S_a: E_a]
\]
according to the definition of $\phi$ and the reward sequence.
The last step holds by the definition of $M_a$ (see Eq.~\eqref{eq:m-a}).

For each level $\ell \in [0:L-1]$, we have
\begin{align}
\E\left[\sum_{a \in \mV_{\ell} \wedge (a+1) \notin \mV_{\ell}} M_a\right] = &~  \sum_{a \in \mT_\ell}\E\left[ M_a \cdot \mathsf{1}\{a \in \mV_{\ell} \wedge (a+1)\notin \mV_{\ell}\}\right]\notag \\
= &~ \sum_{a \in \mT_\ell}\E[ M_a | a \in \mV_{\ell} \wedge (a+1)\notin \mV_{\ell}] \cdot \Pr[a \in \mV_{\ell} \wedge (a+1)\notin \mV_{\ell}]\notag \\
= &~ \sum_{a \in \mT_\ell}\E[ M_a | a \in \mV_{\ell}] \cdot \Pr[a \in \mV_{\ell}] \cdot \frac{1}{2K}\notag  \\
= &~ \frac{1}{2K} \sum_{a \in \mT_\ell}\E[M_a] . \label{eq:case1-swap2}
\end{align}
The first step follows from the linearity of expectation and the second step follows from the law of expectation. 
The third step holds since for any node $a \in \mT_{\ell}$, condition on $a \in \mV_{\ell}$, the mass $M_a$ is independent of whether $(a+1)$ is skipped or not, and the node $(a + 1)$ is skipped with probability $q = \frac{1}{2L}$.
The fourth step holds since $\E[M_a | a \notin \mV_{\ell}] = 0$.

Taking an expectation over both sides of Eq.~\eqref{eq:case1-swap}, we have
\begin{align*}
\E[\texttt{swap-regret}] \geq &~ \frac{1}{16(L+1)} \E\left[\sum_{\ell\in [0:L-1]}\sum_{a \in \mV_{\ell} \wedge (a+1) \notin \mV_{\ell}} M_a\right]\\
\geq &~ \frac{1}{32K(L+1)} \sum_{\ell\in [0:L-1]}\sum_{a \in \mT_\ell}\E[M_a]  \\
= &~\frac{1}{32K(L+1)} \sum_{i \in [n]}\E[X_i]  = \Omega\left(\frac{\E[T_{\ALG}]}{KL}\right).
\end{align*}
The second step follows from Eq.~\eqref{eq:case1-swap2}, the third step follows from Lemma \ref{lem:alternative} and the last step follows from the assumption of the first case.

{\bf Case 2.} Suppose $\sum_{i \in [n]}\E[Y_i] \geq \frac{1}{3}\E[T_{\ALG}]$. That is, the algorithm spends a lot of time playing actions of the leaf node during its visit. 

Consider the following swap function $\phi$. For each leaf node $a \in [0:n/2-1]$, the swap function switches actions $2a+1, 2a+2$ to $2a + i^{*}(a)$, i.e., the action that draws reward from $\frac{L}{16(L+1)} + \frac{1}{16(L+1)} B_{1/2 + \Delta}$.

To bound the swap regret, we have
\begin{align}
\texttt{swap-regret} \geq &~ \sum_{t \in [T_{\ALG}]}\sum_{i\in [n]}p_t(i)r_t(\phi(i)) - p_t(i)r_t(i)\notag \\
= &~ \sum_{t\in [T_{\ALG}]}\sum_{a\in [0:n/2-1]}(p_t(2a+1) + p_t(2a+2))r_t(2a+i^{*}(a))\notag\\
&~ - \sum_{t\in [T_{\ALG}]}\sum_{a\in [0:n/2-1]} p_t(2a+1) r_t(2a+1) + p_t(2a+2)r_t(2a+2) \notag \\
= &~ \sum_{a \in [0:n/2-1]}\sum_{t \in [S_a: E_a]} (p_t(2a+1) + p_t(2a+2))r_t(2a+i^{*}(a))\notag \\
&~ - p_t(2a+1) r_t(2a+1) + p_t(2a+2)r_t(2a+2))\label{eq:case2-1}
\end{align}
The second step follows from the definition of our swap function, the third step holds since the actions $2a+1, 2a+2$ have the same reward except $t \in [S_a: E_a]$.

\paragraph{Technical component: Lower bound for two-coin game} 
In order to bound the RHS of Eq.~\eqref{eq:case2-1}, we consider an abstract problem which we call the {\em two-coin game}.
Let $\Delta \in (0,1/20), H = 1/400\Delta^2$ be input parameters.
In a two-coin game, there are two coins, one draws from the Bernoulli distribution $B_{1/2}$ and the other draws from $B_{1/2 + \Delta}$. The biased coin $i^{*}$ is chosen uniformly at random and it is not known to the player.

The two-coin game is repeatedly played for $H$ days.
At each day $h \in [H]$, the player commits a distribution $p_h \in \Delta_3$ over coin $1$, coin $2$ and {\em a dummy action}.
The dummy action is interpreted as an outside option, aka not playing among the two coins. 
It then samples from the two coins and observes the reward $r_h\in \{0, 1\}^2$.
The following Lemma bounds the regret of switching between two coins and its proof is deferred to Appendix \ref{sec:lower-app}.

\begin{lemma}[Lower bound for two-coin game]
\label{lem:tech}
In a two-coin game, the expected swap regret of switching between two coins satisfy 
\begin{align*}
&~ \E\left[\sum_{h \in [H]} (p_h(1) + p_h(2))r_{h}(i^*) - \sum_{h\in [H]}(p_{h}(1)r_{h}(1) + p_{h}(2)r_{h}(2))\right]\\
\geq &~ \Delta \cdot \left(\frac{1}{2}\sum_{h\in [H]}\E[p_h(1) + p_h(2)] - \frac{3}{20}H\right).
\end{align*}
Here the expectation is taken over the randomness of the reward and the algorithm.
\end{lemma}

Now we are about to use Lemma \ref{lem:tech}. For each leaf node $a \in [0:n/2-1]$, if Nature visits leaf $a$, then during the time $[S_a: E_a]$, one can view Nature and the algorithm play a two-coin game, where the two coins are $2a+1, 2a+2$ and the dummy action includes the rest of actions in $[n]\backslash \{2a+1, 2a + 2\}$. They are the same up to a common offset of $\frac{L}{16(L+1)}$ and a scaling factor of $\frac{1}{16(L+1)}$. Hence, for a fixed leaf node $a$, we have
\begin{align}
&~ \E\left[\sum_{t \in [S_a: E_a]} (p_t(2a+1) + p_t(2a+2))r_t(2a+i^{*}(a)) - p_t(2a+1) r_t(2a+1) - p_t(2a+2)r_t(2a+2)\right]\notag \\
\geq &~\frac{1}{16(L+1)} \cdot \Delta \left(\frac{1}{2} \E[Y_{2a+1} + Y_{2a+2}] - \frac{3}{20} \E[\mathsf{1}\{a \in \mV_{0}\}] \cdot H\right) \label{eq:case2-2}
\end{align}
where we apply Lemma \ref{lem:tech}.

Combining Eq.~\eqref{eq:case2-1}\eqref{eq:case2-2}, the expected swap regret is at least
\begin{align*}
\E[\texttt{swap-regret}] \geq &~ \frac{\Delta}{16(L+1)}\left(\frac{1}{2}\sum_{a\in [0:n/2-1]}\E[Y_{2a+1}+ Y_{2a+2}] - \frac{3}{20}\sum_{a\in [0:n/2-1]}\E[\mathsf{1}\{a \in \mV_{0}\}] \cdot H\right)\\
= &~ \Omega\left(\frac{\E[T_{\ALG}]\Delta}{L}\right).
\end{align*}
Here we use the fact that $\sum_{a\in [0:n/2-1]}\E[\mathsf{1}\{a \in \mV_{0}\}]\cdot H = \E[T_{\ALG}]$ and our assumption $\sum_{i\in [n]}\E[Y_i]\geq \frac{1}{3}\E[T_{\ALG}]$.

{\bf Case 3.} Suppose $\sum_{i \in [n]}\E[Z_i] \geq \frac{1}{3}\E[T_{\ALG}]$. That is, the algorithm spends a lot of time playing actions that have already been visited. In this case, it suffices to switch to the fixed action $n$. 
\begin{align*}
\texttt{swap-regret} \geq &~ \sum_{t\in [T_{\ALG}]}r_t(n) - \sum_{i\in [n]} \sum_{t\in [T_{\ALG}]}p_t(i)r_t(i) \\
= &~ \sum_{t\in [T_{\ALG}]}r_t(n) - \sum_{i\in [n]}\left(\sum_{t \in [E_{a(i)}]} p_t(i)r_t(i) + \sum_{t \in [E_{a(i)} + 1: T_{\ALG}]} p_t(i)r_t(i)\right) \\
\geq &~ 0 - \sum_{i\in [n]}\left((X_i + Y_i)\cdot \frac{1}{16} + Z_i \cdot (-1)\right)\\
\geq &~ \frac{17}{16}\sum_{i\in [n]}Z_i - \frac{1}{16}T_{\ALG}.
\end{align*}
The third step follows from the maximum reward is $\frac{1}{16}$ and the reward of action $i$ is $-1$ after $E_{a(i)}$

Taking an expectation, the expected swap regret is at least $\frac{1}{4}\E[T_{\ALG}]$ in Case 3.

Combing the above three cases, we have finish the proof of Lemma \ref{lem:lower-goal}.
\end{proof}

The sequence length $T_{\ALG}$ is a random variable, and its expectation satisfies
\begin{lemma}
\label{lem:basic-expecation}
Let $C_{K} = \sum_{k=0}^{K-1}(1-\frac{1}{2K})^{k}$.
We have
\[
\E[T_{\ALG}] = H \cdot (C_K)^{L}  \geq 2^{-L} \cdot \frac{K^{L}}{400\Delta^{2}}.
\]
\end{lemma}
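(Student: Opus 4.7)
The plan is to reduce the computation of $\E[T_{\ALG}]$ to counting the expected number of leaves that Nature actually visits. By inspection of Algorithm~\ref{algo:hard-sequence-oblivious}, the only lines that consume days are at the leaf case (Line~\ref{line:two-coin2}, which plays for exactly $H$ days), so $T_{\ALG} = H \cdot N$, where $N$ is the (random) number of leaves of $\mT$ that Nature visits. Thus $\E[T_{\ALG}] = H \cdot \E[N]$, and it remains to show $\E[N] = (C_K)^L$.

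I would set up a simple recursion. Let $N_\ell$ be the number of leaves visited in the recursive call $\textsc{HardSeq}(\ell, a)$ at a node at level $\ell$; by symmetry $\E[N_\ell]$ does not depend on $a$. The base case is $\E[N_0] = 1$. For the inductive step, note that at an internal node, the $k$-th child (for $k = 0, 1, \ldots, K-1$) is visited if and only if none of the $k$ independent Bernoulli($q$) skip-coins from the previous iterations triggered, which happens with probability $(1-q)^k$ with $q = \tfrac{1}{2K}$. By linearity of expectation,
\begin{align*}
\E[N_\ell] = \sum_{k=0}^{K-1}(1-q)^k \cdot \E[N_{\ell-1}] = C_K \cdot \E[N_{\ell-1}].
\end{align*}
Iterating gives $\E[N] = \E[N_L] = (C_K)^L$, which yields the equality $\E[T_{\ALG}] = H \cdot (C_K)^L$.

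For the inequality, I would rewrite the geometric sum as
\begin{align*}
C_K = \frac{1-(1-q)^K}{q} = 2K \left(1 - \left(1 - \tfrac{1}{2K}\right)^K\right),
\end{align*}
and then use the standard bound $(1 - \tfrac{1}{2K})^K \le e^{-1/2} < 3/4$ (valid for every $K \ge 1$), giving $C_K \ge 2K \cdot \tfrac{1}{4} = K/2$. Substituting back with $H = \tfrac{1}{400\Delta^2}$,
\begin{align*}
\E[T_{\ALG}] = H \cdot (C_K)^L \ge \frac{1}{400\Delta^2} \cdot \left(\frac{K}{2}\right)^L = 2^{-L}\cdot \frac{K^L}{400\Delta^2},
\end{align*}
completing the proof.

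No step is really an obstacle here — the proof is a short linearity-of-expectation recursion combined with the elementary inequality $(1-\tfrac{1}{2K})^K \le e^{-1/2}$. The only subtlety worth pointing out carefully is that, although the events ``child $k$ is visited'' and the sizes of the sub-trees visited below each child are correlated through the skip-coin at line~\ref{line:skip}, linearity of expectation lets us ignore this correlation: the indicator that child $k$ is visited depends only on the first $k$ skip-coins, and conditional on child $k$ being visited, the sub-tree below it is processed by a fresh independent copy of $\textsc{HardSeq}(\ell-1, \cdot)$, so its conditional expectation is exactly $\E[N_{\ell-1}]$.
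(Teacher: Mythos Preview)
Your proof is correct and follows essentially the same approach as the paper: both set up the same recursion over levels (the paper phrases it as $\E[T_a \mid a \in \mV_\ell] = H C_K^{\ell}$, you equivalently phrase it as $\E[N_\ell] = C_K^{\ell}$), and both finish with $C_K \ge K/2$. If anything, you give slightly more detail on the inequality step, explicitly justifying $(1-\tfrac{1}{2K})^K \le e^{-1/2} < 3/4$, which the paper simply asserts via $C_K \in (K/2, K)$.
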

The proof can be found at Appendix \ref{sec:lower-app} and we can now prove Theorem \ref{thm:lower-oblivious}.
\begin{proof}[Proof of Theorem \ref{thm:lower-oblivious}]
Recall the parameters $K, L$ are chosen such that the number of actions $n = 2K^{L}$. 
For any fixed constant $\delta > 0$, we take $K = 2^{1/\delta}$ and $L = \delta(\log_2(n) - 1)$. We prove the expected swap regret over $T$ days is at least 
\[
\E[\texttt{swap-regret}] \geq \Omega\left(\min\left\{\frac{T}{\log(T)},   \sqrt{n^{1-8\delta} T} \right\}\right)
\]
Taking $\delta \rightarrow 0$ would be sufficient for our proof.

First, if $T \geq n$, then take $\Delta = \sqrt{n/400T} \leq \frac{1}{20}$ and consider the hard sequence of Algorithm \ref{algo:hard-sequence-oblivious}. Note the maximum sequence length $T_{\ALG} \leq \frac{K^{L}}{400\Delta^2} = T/2$, and for the last $T - T_{\ALG}$ days, the reward vector is taken to be all $0$. By Lemma \ref{lem:lower-goal}, the total regret is at least 
\begin{align}
\E[\texttt{swap-regret}] \geq &~ \min \left\{ \frac{\E[T_{\ALG}]}{KL}, \frac{\E[T_{\ALG}]\Delta}{L} \right\} \geq 2^{-L} \frac{K^{L}}{400\Delta^2} \min\left\{\frac{1}{KL}, \frac{\Delta}{L} \right\}\notag\\
\geq &~ \Omega(n^{1/2-3\delta/2}T^{1/2}).  \label{eq:lower-large}
\end{align}
The second step follows from Lemma \ref{lem:basic-expecation} and the last step follows from the choice of parameters.

Second, if $T \in [n^{1-2\delta}, n]$, then we claim the swap regret has to be least $n^{1-4\delta} \geq \sqrt{n^{1-8\delta}T}$. Otherwise, consider the algorithm that restarts every $T$ days, its swap regret over $n$ rounds is at most $n^{1-4\delta} \cdot \lceil n/T \rceil \leq n^{1-2\delta}$, this contradicts with Eq.~\eqref{eq:lower-large}. 

Third, if $T = n^{1-2\delta}$, the we prove the swap regret is at least $\Omega(n^{1-2\delta}/\log(n)) = \Omega(T/\log(T))$. We prove by contradiction. Suppose there is an algorithm that has swap regret at most $o(n^{1-2\delta}/\log(n))$ over $n^{1-2\delta}$ days. Then for any $T'$, there is an algorithm that has swap regret at most 
$
\lceil T'/n^{1-2\delta} \rceil \cdot o(n^{1-2\delta}/\log(n)) \leq o(T'/\log(n)) + n^{1-2\delta}
$
over $T'$ days (without knowing $T'$ in advance), as one can always restart the algorithm every $T = n^{1-2\delta}$ days.
Applying this algorithm to the hard sequence with $\Delta = 1/20$, its swap regret is at most $o(\E[T_{\ALG}]/\log(n)+n^{1-2\delta}) = o(\E[T_{\ALG}]/\log(n))$.  
However, by Lemma \ref{lem:lower-goal}, any algorithm must suffer swap regret at least
$\min \left\{ \frac{\E[T_{\ALG}]}{KL}, \frac{\E[T_{\ALG}]\Delta}{L} \right\} = \Omega(\E[T_{\ALG}]/\log(n))$.
This reaches a contradiction.

Finally, if $T \leq n^{1-2\delta}$. One can merge $n/T^{1/(1-2\delta)}$ actions into one action by assigning the same reward to them. Then the swap regret is at least $T/\log(T)$ by the third case. We complete the proof here.
\end{proof}

	\bibliographystyle{alpha}
	\bibliography{ref}

	\newpage
	\appendix
	\section{A historical remark on internal vs swap regret} \label{app:internal-regret}
Our notion of swap regret can equivalently be written as:
\begin{align*}
\texttt{swap-regret} \text{ (equivalent)}:= \sum_{i\in [n]}\max_{\phi(i) \in [n]}\sum_{t\in [T]} \E_{i \sim p_t} [r_t(\phi(i)) - r_t(i)].
\end{align*}

The origin of the closely related ``internal regret'' is usually cited to~\cite{foster1998asymptotic}. Their notion of regret is almost identical to our swap regret, except that they take the expectation over the algorithm's randomness outside the summation. For oblivious adversaries this is equivalent, but~\cite{blum2007external,ito2020tight} prove lower bounds on this notion of swap/internal regret using an adaptive adversary that makes the algorithm regret its realized actions. Our work shows that these lower bounds do not extend to the distributional setting.
\begin{align*}
\texttt{internal-regret} \text{~\cite{foster1998asymptotic}}:= \E_{\text{algorithm's randomness}}[\sum_{i\in [n]}\max_{\phi(i) \in [n]}\sum_{t\in [T]} \id_{\{p^t = i\}}\cdot (r_t(\phi(i)) - r_t(i))].
\end{align*}
Interestingly, the term ``internal regret'' does not actually appear in~\cite{foster1998asymptotic}.

\cite{foster1999regret} use a stricter definition of internal regret that replaces the $\max_{\phi(i) \in [n]}$ with a $\sum_{\phi(i) \in [n]}\max\{0,\cdot\}$: 
\begin{align*}
\texttt{internal-regret} \text{~\cite{foster1999regret}}:= \E_{\text{algorithm's randomness}}[\sum_{i\in [n]}\sum_{\phi(i) \in [n]}\max\{0,\sum_{t\in [T]} \id_{\{p^t = i\}}\cdot (r_t(\phi(i)) - r_t(i))\}].
\end{align*}

Finally, most authors today use a more lenient definition of internal regret than our swap regret, that replaces $\sum_{i\in [n]}$ with $\max_{i\in [n]}$. 
\begin{align*}
\texttt{internal-regret} \text{~\cite{stoltz2005internal}}:= \max_{i\in [n]}\max_{\phi(i) \in [n]}\sum_{t\in [T]} \E_{i \sim p_t} [r_t(\phi(i)) - r_t(i)].
\end{align*}

Note that all of those notions are equivalent up to $\poly(n)$ factors, i.e.~if one approaches zero they all approach zero. However, in this work we focus on obtaining $\eps T$ regret for constant $\eps >0$, so these notions are not equivalent. In particular, using the common notion of internal regret, playing a uniformly random strategy trivially obtains $T/n$ regret. (A similar issue arises for $\eps$-approximate correlated equilibrium; see discussion in the introduction and also in~\cite{ganor2018communication}.)
	\section{Missing proof from Section \ref{sec:application}}
\label{sec:application-app}

We first provide the proof of Lemma \ref{lem:sparsification}.
\begin{proof}[Proof of Lemma \ref{lem:sparsification}]
Let $R_{i} = \sum_{j \in [n]}p_{i, j} \in [0, 1]$ be the marginal distribution of the $i$-th row and $C_{j} = \sum_{i\in [n]}p_{i, j}$ be the marginal distribution of the $j$-th row. Let $r_{i} = (p_{i, 1}, \ldots, p_{i, n}) \in [0,1]^{n}$ be the $i$-th row of $p$.

We sample $D = O(S^2\log(n)/\delta^2)$ rows $i_{1}, \ldots i_{D}$ from the distribution $\{R_{i}\}_{i \in [n]}$ and set
\begin{align}
p' = \frac{1}{D} \sum_{d=1}^{D} \frac{1}{R_{i_d}} e_{i_d} \otimes r_{i_{d}}. \label{eq:sparse1}
\end{align}
That is, $p'$ is obtained from $p$ by sampling $D$ rows and proper normalization.

It is clear that $p'$ has row support size at most $D = O(S^2\log(n)/\delta^2)$, and column support size at most $S$. We prove $p'$ is an $(\eps + \delta)$-approximate correlated equilibrium with high probability.
We first verify the row player. Let $\phi$ be the swap function that obtains the maximum utility under $p$. For $p'$, the distribution of each row either becomes $\mathbf{0}$ or gets scaled, and therefore, $\phi$ remains the optimal swap function. 
Hence, with probability $1 - 1/n^{\omega(1)}$, we have
\begin{align*}
&~ \E_{(a_1, a_2) \sim p'}[u_{1}(\phi(a_1); a_{2})] - \E_{(a_1, a_2) \sim p'}[u_{1}(a_1; a_{2})]\\
= &~ \frac{1}{D} \sum_{d=1}^{D}  \E_{a_2 \sim r_{i_d}/R_{i_d}} [u_{1}(\phi(i_{d}); a_{2}) -  u_{1}(i_d; a_{2})]\\
= &~ \E_{(a_1, a_2) \sim p}[u_{1}(\phi(a_1), a_2) - u_{1}(a_1; a_{2})] + O\left(\sqrt{\log(n)/D}\right)
\leq \eps + \delta.
\end{align*}
The first step follows the definition of $p'$ (see Eq.~\eqref{eq:sparse1}), the second step follows from Chernoff bound and the third step holds since $p$ is an $\eps$-approximate correlated equilibrium. 

We next verify the column player. Fix any column $j, j'\in [n]$, with probability $1 - 1/n^{\omega(1)}$, we have
\begin{align}
\E_{(a_1, a_2)\sim p',a_2 = j}[u_2(j'; a_1) ] = &~ \frac{1}{D} \sum_{d=1}^{D} \frac{p_{i_d, j}}{R_{i_d}}u_{2}(j'; i_{d})\notag \\
= &~ \sum_{i\in [n]}p_{i, j}u_{2}(j'; i) \pm O\left(\sqrt{\log(n)/D}\right)\notag \\
=&~ \E_{(a_1, a_2)\sim p, a_2 = j}[u_2(j'; a_1)] \pm \delta/2S.\label{eq:sparse2}
\end{align}
Here, the first step follows from the definition of $p'$ (see Eq.~\eqref{eq:sparse1}), the second step follows from Chernoff bound and the last follows from the choice of parameters.
We taking an union bound over all $j', j \in [n]$ and condition on this event in the rest of the proof. 

Let 
\[
C(p) := \{j: \exists i\in [n], p_{i, j}\neq 0\} \subseteq [n]
\]
be column supports of $p$. For any swap function $\phi$, we have
\begin{align*}
&~ \E_{(a_1, a_2)\sim p'}[u_2(\phi(a_2); a_1)] - \E_{(a_1, a_2)\sim p'}[u_2(a_2; a_1)] \\
= &~ \sum_{j \in C(p)}\E_{(a_1, a_2)\sim p', a_2 = j}[u_2(\phi(a_2); a_1) - u_2(a_2; a_1)]\\
\leq &~ \sum_{j\in C(p)}\E_{(a_1, a_2)\sim p, a_2 = j}[u_2(\phi(a_2); a_1) - u_2(a_2; a_1)] +  \delta/S \\
= &~ \E_{(a_1, a_2)\sim p}[u_2(\phi(a_2); a_1) - u_2(a_2; a_1)] + \delta  = \eps + \delta.
\end{align*}
The second step follows from Eq.~\eqref{eq:sparse2}, the third step holds since $|C(p)| = S$, the last step holds since $p$ is an $\eps$-approximate correlated equilibrium.
We complete the proof here.
\end{proof}

We next provide the proof of Lemma \ref{lem:efg-mwu} and give the efficient implementation of MWU for EFGs. The overall idea is simple and we sample the strategy according to a partition function, which can be recursively computed. The idea has been exploited for complete information game \cite{chhablani2023multiplicative,farina2022kernelized} and we provide a proof for incomplete information game (i.e., with chance player).

\paragraph{Additional notation} We fix strategies  $s_{-i, 1}, \ldots, s_{-i, T}$ of players $[m]\backslash\{i\}$ in the rest of proof. 
For any nodes $\nu \in \N_i$, let $\bar{u}_i(s_i; \nu)$ be the average utility of player $i$ if it visits node $\nu$, i.e., 
$\bar{u}_i(s_i; \nu) := \eta \sum_{t \in [T]} u_{i}(s_i, s_{-i, t}; \nu).$
We use $\bar{u}_i(s_i)$ to denote the utility at the root node and one has $p(s_i) \propto \exp(\bar{u}_i(s_i))$.
For any information set $h \in \mH_i$, define $\bar{u}_i(s_i; h)$ be the total utility of player $i$ if it visits the information set $h$, i.e.,
$\bar{u}_i(s_i; h) := \sum_{\nu \in h} \bar{u}_i(s_i; \nu).$

The information sets of player $i$ form a directed tree. Given an information set $h \in \mH_i$, let $\mT_{h} \subseteq \mH_i$ contain all information sets in the subtree rooted at $h$. Let $\mC_h$ contain all child information sets of $h$.
For any action $a \in A_{h}$, let $\mC_{h.a}$ contain all child information sets of $h$ that could be reached when player $i$ takes action $a$ at $h$. 
Slightly abuse of notation, we view all terminal nodes directly reachable from $h.a$ (i.e. not through other information set) as an information set of player $i$, and its action set contains only a dummy action $\emptyset$. 
The sets $\{\mC_{h.a}\}_{a \in A_h}$ form a partition of $\mC_{h}$, i.e., $\mC_h = \cup_{a \in A_h} \mC_{h.a}$.

We first make a few simple observations. 
\begin{lemma}
\label{lem:efg1}
For any strategy $s_i, s_i' \in \mS_i$ and any information set $h \in \mH_i$, if $s_i$ and $s_i'$ use the same actions for subtree $\mT_{h}$ and along the root path to $h$, then $\bar{u}_i(s_i; h) = \bar{u}_i(s_i'; h)$.
\end{lemma}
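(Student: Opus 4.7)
My plan is to reduce the claim to a per-node statement about the utility at a fixed $\nu \in h$ and a fixed round $t$, and then argue by factoring the reach probability of each relevant terminal. First, I would unfold $\bar{u}_i(s_i; h) = \eta \sum_{t \in [T]} \sum_{\nu \in h} u_i((s_i, s_{-i,t}); \nu)$ and observe that the opponent strategies $s_{-i,t}$ are identical on both sides. It therefore suffices to prove $u_i((s_i, s_{-i,t}); \nu) = u_i((s_i', s_{-i,t}); \nu)$ for every $\nu \in h$ and every $t \in [T]$; with this reduction the lemma becomes a purely structural claim about the game tree.

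Next, I would fix such a $\nu$ and $t$, and expand $u_i((s_i, s_{-i,t}); \nu) = \sum_{z \in \mZ,\ \nu \preceq z} \pi((s_i, s_{-i,t}); z)\,\gamma_i(z)$. For each terminal $z$ with $\nu \preceq z$, the reach probability factors as a product over the edges on the root-to-$z$ path, and I would split this path into the root-to-$\nu$ segment and the $\nu$-to-$z$ segment. The $s_i$-dependent edges on the first segment come from player $i$'s own information sets along the path to $h$; by perfect recall this list of information sets, and the actions taken at each of them, is exactly encoded by $\sigma_i(h)$ and is identical for every $\nu \in h$. The $s_i$-dependent edges on the second segment come from information sets of player $i$ strictly below $\nu$, and perfect recall guarantees that every such descendant information set belongs to $\mT_h$.

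Finally, I would invoke the two hypotheses: agreement of $s_i$ and $s_i'$ along the root path to $h$ keeps the first-segment product unchanged under the switch $s_i \mapsto s_i'$, agreement on $\mT_h$ keeps the second-segment product unchanged, and the chance-node and opponent-action contributions do not depend on $s_i$ at all. Multiplying these equal factors yields $\pi((s_i, s_{-i,t}); z) = \pi((s_i', s_{-i,t}); z)$ for every relevant $z$, which immediately gives the per-node equality and hence the lemma. I do not anticipate a serious obstacle; the only points requiring a moment's care are the two perfect-recall facts (uniqueness of $\sigma_i(h)$ across $\nu \in h$, and closure of $\mT_h$ under taking descendant information sets of nodes in $h$), both of which are immediate from the setup in Section~\ref{sec:application}.
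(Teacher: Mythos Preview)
Your proposal is correct and follows essentially the same approach as the paper: both unfold $\bar{u}_i(s_i;h)$ down to $\sum_{\nu\in h}\sum_{t}\sum_{z\succeq \nu}\pi(s_i,s_{-i,t};z)\gamma_i(z)$ and argue that $\pi(s_i,s_{-i,t};z)=\pi(s_i',s_{-i,t};z)$ because $s_i,s_i'$ agree on all player-$i$ actions along the root-to-$z$ path. Your version is simply more explicit about the path-factoring and the two perfect-recall facts that justify this equality, whereas the paper states it in one line.
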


\begin{proof}
We have
\begin{align*}
\bar{u}_i(s_i; h) = &~ \sum_{\nu \in h}\bar{u}_{i}(s_i; \nu)  = \eta \sum_{\nu \in h}\sum_{t\in [T]}u_i(s_i, s_{-i, t};\nu) = \eta \sum_{\nu \in h}\sum_{t\in [T]}\sum_{z \in \mZ: \nu \preceq z} \pi(s_i, s_{-i, t}; z) \cdot \gamma_i(z)\\
=&~ \eta \sum_{\nu \in h}\sum_{t\in [T]}\sum_{z \in \mZ: \nu \preceq z} \pi(s_i', s_{-i, t}; z) \cdot \gamma_i(z) = \eta \sum_{\nu \in h}\sum_{t\in [T]}u_i(s_i', s_{-i, t};\nu) \\
= &~ \sum_{\nu \in h}\bar{u}_{i}(s_i'; \nu) = \bar{u}_i(s_i'; h).
\end{align*}
Here the first three steps are due to the definitions of $\bar{u}_i(s_i; h)$, $\bar{u}_i(s_i; \nu)$ and $u_{i}(s_i, s_{-i, t}; \nu)$.
The fourth step holds since for any terminal node $z$, one has $\pi(s_i, s_{-i, t}; z) = \pi(s_i', s_{-i, t}; z)$ if $s_i, s_i'$ use the same actions along the root path to $z$. The last three steps follow from the definition of $u_i(s_i',s_{-i,t}; h)$, $\bar{u}_i(s_i'; \nu)$, $\bar{u}_i(s_i'; h)$. We complete the proof here.
\end{proof}

\begin{lemma}
\label{lem:efg2}
For any strategy $s_i \in \mS_i$ and information set $h \in \mH_i$, we have 
\begin{align*}
\bar{u}_i(s_i, h) = \sum_{h' \in \mC_h}\bar{u}_i(s_i, h').
\end{align*}
\end{lemma}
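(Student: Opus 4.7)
The plan is to prove the identity termwise in $t$. By the definition $\bar u_i(s_i, h) = \eta\sum_{t\in[T]}\sum_{\nu\in h} u_i(s_i, s_{-i,t}; \nu)$ and linearity, it suffices to show that for any single profile $s=(s_i, s_{-i,t})$,
\begin{align*}
\sum_{\nu \in h} u_i(s; \nu) = \sum_{h' \in \mC_h}\sum_{\nu' \in h'} u_i(s; \nu').
\end{align*}
Unrolling $u_i(s;\nu) = \sum_{z \in \mZ,\, \nu \preceq z} \pi(s;z)\gamma_i(z)$, both sides become a sum of terms of the form $\pi(s;z)\gamma_i(z)$ indexed by certain pairs $(\nu, z)$ or $(\nu', z)$. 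So the whole proof reduces to producing a bijection between those two index sets.

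First I would proceed by ``walking down the tree.'' Starting at any $\nu \in h$ and any terminal $z$ with $\nu \preceq z$, consider the shortest initial segment of the path from $\nu$ to $z$ that ends at either (i) a player-$i$ decision node in some information set of $\mC_h$, or (ii) a terminal reached from $h$ through some action $a \in A_h$ without crossing any other player-$i$ information set. By the paper's convention, case (ii) terminals are themselves packaged as a dummy information set in $\mC_{h.a}$, so in both cases we produce a unique $\nu'$ lying in some $h' \in \mC_h$ with $\nu \preceq \nu' \preceq z$. This defines the forward map $(\nu,z) \mapsto (\nu',z)$.

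The main step, and the only place the hypothesis is really used, is the reverse direction: given $h' \in \mC_h$ and $\nu' \in h'$, I need to recover a unique $\nu \in h$ with $\nu \preceq \nu'$. This uses perfect recall, which makes $\sigma_i(\nu')$ well-defined on all of $h'$; since $h' \in \mC_h$ is a direct child of $h$ in the information-set tree, $\sigma_i(\nu')$ extends $\sigma_i(h)$ by the single action taken at $h$, pinning down a unique ancestor of $\nu'$ that lies in $h$. Combined with the disjoint partition $\mC_h = \bigcup_{a \in A_h}\mC_{h.a}$, this shows the map from $(\nu,z)$ to $(\nu',z)$ is a bijection, so each $\pi(s;z)\gamma_i(z)$ is counted exactly once on either side.

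Equivalently, one can phrase the whole argument inductively via $u_i(s;\nu) = \sum_{\nu'' \text{ child of }\nu} u_i(s;\nu'')$, valid for non-terminal $\nu$, iterating until every branch terminates at either a node in some $h' \in \mC_h$ or at a terminal (which by convention also lives in $\mC_h$); this is just the recursive form of the bijection above. Multiplying the resulting identity by $\eta$ and summing over $t \in [T]$ gives $\bar u_i(s_i, h) = \sum_{h' \in \mC_h} \bar u_i(s_i, h')$. The main delicacy, and what I expect to spend the most care on in the writeup, is the bookkeeping around the dummy ``terminal information sets'' in $\mC_{h.a}$: without them, paths from $h$ that end in the subtree without visiting another player-$i$ information set would fall through the cracks and the decomposition would be incomplete.
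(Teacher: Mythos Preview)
Your proposal is correct and follows essentially the same route as the paper: unroll the definitions down to sums of $\pi(s;z)\gamma_i(z)$ over pairs $(\nu,z)$, then regroup the terminals by the child information sets $h'\in\mC_h$. The paper compresses your bijection argument into the single phrase ``rearranges all terminal nodes in subtrees rooted at $h$''; your explicit treatment of the forward/backward maps, the use of perfect recall to recover the unique ancestor in $h$, and the bookkeeping for the dummy terminal information sets are exactly the details that make that phrase rigorous.
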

\begin{proof}
For the first claim, we have
\begin{align*}
\bar{u}_i(s_i, h) =&~ \sum_{\nu \in h} \bar{u}_i(s_i, \nu) = \sum_{\nu \in h}\sum_{t\in [T]}\eta u_{i}(s_i, s_{-i, t}; \nu)\\
=&~ \sum_{\nu \in h}\sum_{t\in [T]}\sum_{z \in \mZ: \nu \preceq z}\eta \pi(s_i, s_{-i, t}; z) \cdot \gamma_i(z)\\
= &~\sum_{h' \in \mC_h}\sum_{\nu \in h'}\sum_{z \in \mZ: \nu \preceq z}\sum_{t\in [T]}\eta \pi(s_i, s_{-i, t}; z) \cdot \gamma_i(z)\\
= &~ \sum_{h' \in \mC_h}\sum_{\nu \in h'} \sum_{t\in [T]} \eta u_i(s_i, s_{-i,t}; \nu)\\
= &~ \sum_{h' \in \mC_h}\sum_{\nu\in h'}\bar{u}_i(s_i, h') =  \sum_{h' \in \mC_h}\bar{u}_i(s_i, h').
\end{align*}
Here the first three steps are due to the definitions of $\bar{u}_i(s_i, h)$, $\bar{u}_i(s_i, \nu)$ and $u_{i}(s_i, s_{-i, t}; \nu)$. The fourth step rearranges all terminal nodes in subtrees rooted at $h$. The last three steps are due to the definitions of $u_{i}(s_i, s_{-i, t}; \nu)$,  $\bar{u}_i(s_i, \nu)$ and  $\bar{u}_i(s_i, h')$.
\end{proof}

\paragraph{Equivalent class}
Given an information set $h \in \mH_i$, we write $s_{i} \stackrel{h}{\sim} s_{i}'$ if strategies $s_i$ and $s_i'$ use the same actions over information sets in the subtree $\mT_h$, and we say $s_i, s_i'$ are in the same equivalent class of $h$. 
Given an information set $h$, the strategy set $\mS_{i, h}\subseteq \mS_i$ takes exactly one strategy $s_i$ from each equivalent class of $h$, and $h$ is reachable under this strategy $s_i$ (i.e., $s_i$ uses the same the actions as $\sigma(h)$ along the root path to $h$).


We can now define the partition function over information sets.
\begin{definition}[Partition function]
The partition function $V_{i}:\mH_i\rightarrow \R$ is defined over each information set $h \in \mH_i$, such that
\begin{align*}
V_i(h) := \sum_{s_i\in \mS_{i, h}}\exp(\bar{u}_i(s_i; h)).
\end{align*}
The partition function $U_{i}: \mH_i \times A_i \rightarrow \R$ is defined over an information set $h\in \mH_i$ and action $a \in A_{h}$ pair, such that
\begin{align*}
U_i(h.a) := \sum_{s_i\in \mS_{i, h} \wedge s_i(h) = a}\exp(\bar{u}_i(s_i; h)).
\end{align*}
\end{definition}
As a simple corollary of Lemma \ref{lem:efg1}, the value of partition function $V_i(h)$ and $U_i(h.a)$ does not depend on the exact choice of strategy from each equivalent class.

\paragraph{Utility of terminal nodes} When the information set $h$ is made up of terminal nodes, i.e., $h \subseteq \mZ$, then $\mS_{i, h}$ contains only one strategy $s_{i}$ according to our definition. Let $\Lambda_{i}(h) = \bar{u}_i(s_i; h) = \eta\sum_{t\in [T]}u_i(s_i, s_{-i, t}; h)$ be the utility of player $i$ at $h$, for any $h \subseteq \mZ$.  
It is common in the literature (e.g. \cite{zinkevich2007regret}) to assume the utility of information sets can be computed efficiently for any fixed strategy $s \in \mS$, so does the value $\Lambda_{i}(h)$. In the case that the exact utility of an information set can not be computed efficiently, one can draw $O(\log(|\mS|)/\eps^2) = O(m\Phi \log(n)/\eps^2)$ EFGs (without chance nodes) from the prior distribution of chance nodes and compute the average utility on these EFGs, it gives good approximation for all strategies $\mS$. We omit the details here.

The partition function can be computed efficiently via recursion.
\begin{lemma}[Recursive computation of partition function]
\label{lem:partition-recursive}
The partition function can be computed recursively
\begin{align*}
V_{i}(h) = 
\left\{
\begin{matrix}
\exp(\Lambda_i(h)) & h \subseteq \mZ\\
\sum_{a \in A_h} U_i(h.a) & \text{otherwise}
\end{matrix}
\right.
\end{align*}
and 
\begin{align*}
U_{i}(h.a) = \prod_{h' \in \mC_{h.a}}V_i(h') \cdot \prod_{h'' \in \mT_h\setminus (\mC_{h.a} \cup \{h\})}|A_{h''}|.
\end{align*}
\end{lemma}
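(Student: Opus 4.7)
The plan is to prove the three displayed identities separately. The base case and the decomposition $V_i(h) = \sum_{a} U_i(h.a)$ follow directly from unwinding the definitions, while the formula for $U_i(h.a)$ requires some careful factoring. For $h \subseteq \mZ$, the convention introduced after Lemma \ref{lem:efg2} gives each terminal ``information set'' only the dummy action $\emptyset$, so $\mS_{i,h}$ contains a unique strategy $s_i$ with $\bar{u}_i(s_i; h) = \Lambda_i(h)$, and thus $V_i(h) = \exp(\Lambda_i(h))$. For an internal $h$, partition $\mS_{i,h}$ by the value of $s_i(h) \in A_h$; each part is exactly the summation range of the corresponding $U_i(h.a)$, and summing over $a$ yields $V_i(h) = \sum_{a\in A_h} U_i(h.a)$.

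The substantive claim is the formula for $U_i(h.a)$. First I would apply Lemma \ref{lem:efg2} to get $\bar{u}_i(s_i;h) = \sum_{h'\in \mC_h} \bar{u}_i(s_i;h')$, and then observe that whenever $s_i(h) = a$ and $h'\in \mC_h \setminus \mC_{h.a}$, the subtree rooted at $h'$ is unreachable, so $\pi(s_i, s_{-i,t}; \nu) = 0$ for every $\nu \in h'$ and every $t$; this makes $\bar{u}_i(s_i;h') = 0$ and collapses the sum to $\bar{u}_i(s_i;h) = \sum_{h'\in \mC_{h.a}} \bar{u}_i(s_i;h')$. Now parametrize strategies in $\{s_i \in \mS_{i,h} : s_i(h) = a\}$ by independent choices of an action at each information set in $\mT_h \setminus \{h\}$. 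By Lemma \ref{lem:efg1}, the value $\bar{u}_i(s_i;h')$ depends only on actions chosen inside $\mT_{h'}$, so the exponential of a sum factors as a product of exponentials: each reachable subtree rooted at $h'\in \mC_{h.a}$ contributes $V_i(h')$ by definition, while every information set $h''$ in the unreachable portion $\mT_h \setminus (\{h\} \cup \bigcup_{h'\in \mC_{h.a}} \mT_{h'})$ contributes only a counting factor $|A_{h''}|$ since its action is free and does not affect $\bar{u}_i(s_i;h)$.

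The main (and really the only) obstacle is careful bookkeeping: one must partition $\mT_h \setminus \{h\}$ cleanly into the reachable subtrees $\bigcup_{h'\in \mC_{h.a}} \mT_{h'}$ (whose action choices get absorbed into the $V_i(h')$ factors) and the remaining unreachable information sets (whose action choices contribute only to the counting product), and use the definition of $\mS_{i,h}$ as one representative per equivalence class on $\mT_h$ to verify that the resulting product enumerates each qualifying strategy exactly once. Given that clean partition, the multiplicative factorization is immediate from the product rule for sums of exponentials of sums over independent indices.
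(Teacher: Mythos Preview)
Your proposal is correct and follows essentially the same route as the paper: unwind the definition of $V_i(h)$ for terminal and internal $h$, then for $U_i(h.a)$ apply Lemma~\ref{lem:efg2}, drop the unreachable children $h'\in\mC_h\setminus\mC_{h.a}$ because $\bar u_i(s_i;h')=0$, and use Lemma~\ref{lem:efg1} to factor the remaining sum into the $V_i(h')$ pieces times a pure counting product over the irrelevant information sets. Your careful identification of the counting set as $\mT_h\setminus\big(\{h\}\cup\bigcup_{h'\in\mC_{h.a}}\mT_{h'}\big)$ is exactly the bookkeeping the paper's ``exchange product and summation'' step is doing.
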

\begin{proof}
For the first claim, if $h$ is made up of terminal nodes, i.e., $h\subseteq \mZ$, then there is only one strategy $s_i$ in $\mS_{i, h}$, and we have $V_i(h) = \exp(\bar{u}_i(s_i; h)) = \exp(\Lambda_i(h))$.
On the other hand, if $h$ is made up of decision nodes, then we have
\begin{align*}
V_i(h) =  \sum_{s_i\in \mS_{i, h}}\exp(\bar{u}_i(s_i, h)) = \sum_{a \in A_h}\sum_{s_i\in \mS_{i, h} \wedge s_i(h) = a}\exp(\bar{u}_i(s_i, h)) = \sum_{a \in A_h}U_i(h.a).
\end{align*}
For the second claim, we have
\begin{align}
U_{i}(h.a) = \sum_{s_i\in \mS_{i, h}\wedge s_i(h) = a}\exp(\bar{u}_i(s_i, h))
=  \sum_{s_i\in \mS_{i, h}\wedge s_i(h) = a}\exp\left(\sum_{h' \in \mC_h}\bar{u}_i(s_i, h')\right).  \label{eq:efg-mwu1}
\end{align}
The first step follows from the definition of partition function $U_{i}(h.a)$, the second step follows from Lemma \ref{lem:efg2}.

For the RHS of Eq.~\eqref{eq:efg-mwu1}, we have
\begin{align}
\sum_{s_i\in S_{i, h}\wedge s_i(h) = a}\exp\left(\sum_{h' \in \mC_h}\bar{u}_i(s_i, h')\right) = &~  \sum_{s_i\in S_{i, h}\wedge s_i(h) = a}\exp\left(\sum_{h'\in \mC_{h.a}}\bar{u}_i(s_i; h')\right)\notag \\
= &~ \sum_{s_i\in S_{i, h}\wedge s_i(h) = a} \prod_{h'\in \mC_{h.a}} \exp(\bar{u}_i(s_i; h'))\notag \\
= &~ \prod_{h'\in \mC_{h.a}} \left(\sum_{s_i \in S_{i, h'}}\exp(\bar{u}_i(s_i; h') \right) \cdot \prod_{h'' \in \mT_h\setminus (\mC_{h.a} \cup \{h\})}|A_{h''}| \notag \\
= &~\prod_{h'\in \mC_{h.a}} V_i(h') \cdot \prod_{h'' \in \mT_h\setminus (\mC_{h.a} \cup \{h\})}|A_{h''}| \label{eq:efg-mwu2} 
\end{align}
The first step holds since for any information set $h' \in \mC_h\setminus \mC_{h.a}$, one has $\bar{u}_i(s_i; h') = 0$. This is because the player $i$ never visits $h'$ given its strategy $s_i$ satisfies $s_i(h) = a$.
%
In the third step, we exchange the product and summation, this is valid due to Lemma \ref{lem:efg1}.
The last step follows from the definition of $V_i(h')$.

Combining Eq.~\eqref{eq:efg-mwu1}\eqref{eq:efg-mwu2}, we complete the proof.
\end{proof}

Lemma \ref{lem:partition-recursive} gives a way of computing the partition function. We next show how to sample from the distribution in Eq.~\eqref{eq:mwu-distribution} using partition functions. 
It is wlog to assume the root of $\mT$ is a decision node of player $i$. Consider the directed tree formed by information sets $\mH_i$, the sampling process assigns actions to information sets in a top-down fashion, from the root to leaves. In particular, consider an arbitrary ordering of information sets $h_1, \ldots, h_{\Phi} \in \mH_i$, such that information sets at higher level come earlier than information sets at lower level, then we have
\begin{lemma}[Sampling with partition function]
\label{lem:efg-sample}
Suppose the distribution $p \in \Delta(\mS_i)$ is given as Eq.~\eqref{eq:mwu-distribution}, then one can sample a strategy $s_i$ from $p$ as follow:
For $t = 1,2,\ldots, \Phi$
\begin{align}
\Pr[s_i(h_t) = a_t]  = \left\{ \begin{matrix}
\frac{U_i(h_t.a_t)}{V_i(h_t)} & h_t \text{ is reachable under } s_i  \label{eq:sample-rule}\\
\frac{1}{|A_{h_t}|} & \text{otherwise}
\end{matrix}
\right. \quad \quad \forall a_t \in A_{h_t}.
\end{align}
\end{lemma}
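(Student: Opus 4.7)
\medskip
\noindent\textbf{Proof proposal for Lemma~\ref{lem:efg-sample}.}
The plan is to show by bottom-up induction on the information set tree of player $i$ that the sampling procedure in Eq.~\eqref{eq:sample-rule} produces a strategy $s_i\in\mS_i$ with probability exactly $\exp(\bar u_i(s_i))/V_i(h_{\text{root}})$. Since the root information set $h_{\text{root}}$ is always reachable, the equivalence classes at the root are singletons, so $V_i(h_{\text{root}}) = \sum_{s_i'\in\mS_i} \exp(\bar u_i(s_i'))$, and this is precisely the normalization constant in Eq.~\eqref{eq:mwu-distribution}.

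For any information set $h\in\mH_i$ and any strategy $s_i\in\mS_i$ that reaches $h$, define $q_h(s_i)$ to be the probability that the procedure assigns $s_i(h'')$ at every $h''\in\mT_h$, using the rule in Eq.~\eqref{eq:sample-rule} on reachable information sets in $\mT_h$ and the uniform rule on unreachable ones. I will prove by induction on the height of $h$ in the information set tree that
\[
q_h(s_i) \;=\; \frac{\exp(\bar u_i(s_i;h))}{V_i(h)}.
\]
The base case is when $h\subseteq \mZ$ (a terminal information set with only the dummy action): $\mT_h=\{h\}$, no action is actually sampled, so $q_h(s_i)=1$, while by definition $V_i(h)=\exp(\Lambda_i(h))=\exp(\bar u_i(s_i;h))$.

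For the inductive step, let $h$ be an internal information set and let $a=s_i(h)$. Because $s_i$ reaches $h$, the children of $h$ that $s_i$ then reaches are exactly $\mC_{h.a}$, while the children in $\mC_h\setminus \mC_{h.a}$ are unreachable, as is everything in their subtrees. Splitting the sample probability into the choice at $h$, the reachable subtrees, and the unreachable sibling subtrees gives
\[
q_h(s_i) \;=\; \frac{U_i(h.a)}{V_i(h)}\cdot \prod_{h'\in\mC_{h.a}} q_{h'}(s_i) \cdot \prod_{h'\in\mC_h\setminus\mC_{h.a}}\prod_{h''\in\mT_{h'}}\frac{1}{|A_{h''}|}.
\]
Plugging in the inductive hypothesis $q_{h'}(s_i)=\exp(\bar u_i(s_i;h'))/V_i(h')$ for $h'\in\mC_{h.a}$ and the recursive formula from Lemma~\ref{lem:partition-recursive} for $U_i(h.a)$, the factors $V_i(h')$ cancel against the numerators from the inductive hypothesis, and the $|A_{h''}|$ factors in $U_i(h.a)$ over the sibling subtrees cancel against the uniform $1/|A_{h''}|$ factors. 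What remains is $\frac{1}{V_i(h)}\prod_{h'\in\mC_{h.a}}\exp(\bar u_i(s_i;h')) = \exp\!\big(\sum_{h'\in\mC_{h.a}}\bar u_i(s_i;h')\big)/V_i(h)$, which equals $\exp(\bar u_i(s_i;h))/V_i(h)$ by Lemma~\ref{lem:efg2} together with the fact that $\bar u_i(s_i;h')=0$ for unreachable $h'\in\mC_h\setminus\mC_{h.a}$.

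Applying the induction at $h=h_{\text{root}}$ completes the equality $\Pr[\text{sample returns }s_i]=\exp(\bar u_i(s_i))/V_i(h_{\text{root}})$, matching Eq.~\eqref{eq:mwu-distribution}. The only subtle step is the bookkeeping in the inductive step --- matching the $|A_{h''}|$ factors that appear inside $U_i(h.a)$ (sibling-subtree factors from Lemma~\ref{lem:partition-recursive}) against the uniform-sampling probabilities at unreachable information sets --- which I expect to be the main obstacle and must be written out carefully. Efficiency of the procedure is immediate: by Lemma~\ref{lem:partition-recursive}, the partition functions $V_i$ and $U_i$ are computed bottom-up in time polynomial in $|\mH_i|$, $n$, and $T$ (using $\Lambda_i(h)$ at terminal information sets), and the sampling itself is then a single top-down pass through the $\Phi$ information sets.
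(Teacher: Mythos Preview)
Your argument is correct, and it takes a genuinely different route from the paper. The paper proceeds by \emph{top-down} induction on the number of information sets already processed: it shows that for every prefix $h_1,\ldots,h_t$ and every choice $a_1,\ldots,a_t$, the marginal $\Pr[s_i(h_1)=a_1,\ldots,s_i(h_t)=a_t]$ equals the ratio $\big(\sum_{s:\,s(h_\tau)=a_\tau}\exp(\bar u_i(s))\big)/\big(\sum_s\exp(\bar u_i(s))\big)$, splitting into the cases ``$h_{t+1}$ reachable'' and ``$h_{t+1}$ unreachable'' and invoking Lemma~\ref{lem:efg1} and Lemma~\ref{lem:efg2} to factor the sums across level-$\ell$ information sets. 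Your proof instead runs a \emph{bottom-up} induction on the height of $h$, directly computing the probability $q_h(s_i)$ that the sampler reproduces $s_i$ on all of $\mT_h$ and showing it equals $\exp(\bar u_i(s_i;h))/V_i(h)$; instantiating at the root gives the full-strategy probability in one shot. Your approach is arguably cleaner---it mirrors the recursive structure of Lemma~\ref{lem:partition-recursive} and avoids tracking marginals over arbitrary prefixes---while the paper's version has the minor advantage of yielding all partial marginals as a byproduct and following the literal order of the sampling loop. The cancellation you flag as the ``main obstacle'' is exactly the right one: the $|A_{h''}|$ factors appearing in $U_i(h.a)$ from Lemma~\ref{lem:partition-recursive} range over the unreachable sibling subtrees $\bigcup_{h'\in\mC_h\setminus\mC_{h.a}}\mT_{h'}$, and these cancel one-for-one with your uniform $1/|A_{h''}|$ factors; writing this out carefully (and noting $\bar u_i(s_i;h')=0$ for unreachable $h'$) completes the inductive step.
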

\begin{proof}
For any $t \in [\Phi]$, and for any action $a_{1} \in A_{h_1}, \ldots, a_{t} \in A_{h_t}$, we prove
\begin{align}
\Pr[s_i(h_{1}) = a_1, \ldots,  s_i(h_{t}) = a_t] = \frac{\sum_{s \in \mS_i \wedge \{s(h_\tau) = a_\tau\}_{\tau \in [t]}}\exp(\bar{u}_i(s))}{\sum_{s \in \mS_i}\exp(\bar{u}_i(s))} \label{eq:sample-goal}
\end{align}
by induction. 

The base case of $t= 0$ holds trivially. Suppose the claim continues to hold up to $t$, then for $t+1$, for any action $a_{1} \in A_{h_1} \ldots, a_{t+1} \in A_{h_{t+1}}$, by the inductive hypothesis, we have
\begin{align}
&~\Pr[s_i(h_{1}) = a_1,\ldots, s_i(h_{t+1}) = a_{t+1}] \notag \\
= &~ \Pr[s_{i}(h_1) = a_{1}, \ldots, s_i(h_t) = a_t]  \cdot \Pr[s_i(h_{t+1}) = a_{t+1} |s_{i}(h_1) = a_{1}, \ldots, s_i(h_t) = a_t] \notag \\
= &~  \frac{\sum_{s \in \mS_i \wedge \{s(h_\tau) = a_\tau\}_{\tau \in [t]}}\exp(\bar{u}_i(s))}{\sum_{s \in \mS_i}\exp(\bar{u}_i(s))} \cdot \Pr[s_i(h_{t+1}) = a_{t+1} |s_{i}(h_1) = a_{1}, \ldots, s_i(h_t) = a_t].
\label{eq:sample-3}
\end{align}

We divide into two cases. 

{\bf Case 1.} Suppose the information set $h_{t+1}$ is not reachable from $s_i$, given $s_i(h_1) = a_1, \ldots, s_i(h_t) = a_t$. Then due to the sampling rule (Eq.~\eqref{eq:sample-rule}) we have
\begin{align}
\label{eq:sample-1}
\Pr[s_i(h_{t+1}) = a_{t+1} |s_{i}(h_1) = a_{1}, \ldots, s_i(h_t) = a_t] = \frac{1}{|A_{h_{t+1}}|}.
\end{align}
Moreover, the choice of $s(h_{t+1}) \in A_{h_{t+1}}$ does not affect the total utility $\bar{u}_i(s)$ given $h_{t+1}$ is not reachable from $s$, then we have
\begin{align}
\frac{\sum_{s \in \mS_i \wedge \{s(h_{\tau}) = a_{\tau}\}_{\tau\in [t+1]}}\exp(\bar{u}_i(s))}{\sum_{s \in \mS_i \wedge \{s(h_{\tau}) = a_{\tau}\}_{\tau\in [t]}}\exp(\bar{u}_i(s))} = \frac{1}{|A_{h_{t+1}}|}. \label{eq:sample-2}
\end{align}
Combining Eq.~\eqref{eq:sample-3}\eqref{eq:sample-1}\eqref{eq:sample-2}, we have proved Eq.~\eqref{eq:sample-goal} for the first case.

{\bf Case 2.} Suppose the information set $h_{t+1}$ is reachable from $s_i$, given $s_i(h_1) = a_1, \ldots, s_i(h_t) = a_t$. 
Then, according to the sampling rule (Eq.~\eqref{eq:sample-rule}), we have
\begin{align}
\Pr[s_i(h_{t+1}) = a_{t+1} |s_{i}(h_1) = a_{1},\ldots, s_{i}(h_t) = a_{t}] = &~ \frac{U_i(h_{t+1}.a_{t+1})}{V_i(h_{t+1})}\notag \\
= &~ \frac{\sum_{s\in \mS_{i, h_{t+1}} \wedge s_i(h_{t+1}) = a_{t+1}}\exp(\bar{u}_i(s; h_{t+1}))}{\sum_{s\in \mS_{i, h_{t+1}} }\exp(\bar{u}_i(s; h_{t+1}))}.
\label{eq:sample-4} 
\end{align}
The second step follows from the definition of partition functions.

Suppose the information set $h_{t+1}$ is at level $\ell$ and let $\mH_{i, \ell}\subseteq \mH_{i}$ contain all information sets at level $\ell$. Let $R_{t} \subseteq \mH_{i,\ell}$ be all information sets that are reachable from strategy $s_i$, given $s_i(h_1) = a_1, \ldots, s_i(h_t) = a_t$. 
For any $h \in \mH_i$, define
\begin{align*}
\mS_{i, h \mid \{h_{\tau},a_{\tau}\}_{\tau \in [t]}} =\left\{ 
\begin{matrix}
\emptyset & h \text{ is not reachable given } \{s(h_{\tau}) = a_{\tau}\}_{\tau \in [t]}\\
\mS_{i, h} & h \text{ is reachable given } \{s(h_{\tau}) = a_{\tau}\}_{\tau \in [t]} \text{ and } h\notin \{h_\tau\}_{\tau \in [t]}\\
\{s \in \mS_{i, h}: s(h) = a_\tau\} &~ h \text{ is reachable given } \{s(h_{\tau}) = a_{\tau}\}_{\tau \in [t]} \text{ and } h= h_\tau
\end{matrix}
\right..
\end{align*}
Then we have
\begin{align}
\frac{\sum_{s \in \mS_i \wedge \{s(h_{\tau}) = a_{\tau}\}_{\tau\in [t+1]}}\exp(\bar{u}_i(s))}{\sum_{s \in \mS_i \wedge \{s(h_{\tau}) = a_{\tau}\}_{\tau\in [t]}}\exp(\bar{u}_i(s))} 
= &~ \frac{\sum_{s \in \mS_i \wedge \{s(h_{\tau}) = a_{\tau}\}_{\tau\in [t+1]}}\exp\left(\sum_{h \in \mH_{i, \ell}}\bar{u}_i(s; h)\right)}{\sum_{s \in \mS_i \wedge \{s(h_{\tau}) = a_{\tau}\}_{\tau\in [t]}}\exp\left(\sum_{h\in \mH_{i,\ell}}\bar{u}_i(s; h)\right)} \notag \\
= &~ \frac{\prod_{h \in \mR_{t}}\left(\sum_{s \in \mS_{i, h \mid \{h_{\tau}, a_{\tau}\}_{\tau\in [t+1]}}}\exp(\bar{u}_i(s; h))\right)  }{\prod_{h \in  \mR_{t}} \left(\sum_{s \in \mS_{i, h \mid \{h_{\tau}, a_{\tau}\}_{\tau \in [t]}}}\exp(\bar{u}_i(s; h))\right)}\notag \\
= &~ \frac{\sum_{s \in \mS_{i, h_{t+1} \mid \{h_{\tau}, a_{\tau}\}_{\tau\in [t+1]}}}\exp(\bar{u}_i(s; h_{t+1}))}{\sum_{s \in \mS_{i, h_{t+1} \mid \{h_{\tau}, a_{\tau}\}_{\tau \in [t]}}}\exp(\bar{u}_i(s; h_{t+1}))}\notag \\
= &~ \frac{\sum_{s \in \mS_{i, h_{t+1}} \wedge s(h_{\tau+1}) = a_{\tau+1}}\exp(\bar{u}_i(s; h_{t+1}))}{\sum_{s \in \mS_{i, h_{t+1}}}\exp(\bar{u}_i(s; h_{t+1}))}. \label{eq:sample-5}
\end{align}
The first step follows from repeatedly applying Lemma \ref{lem:efg2} to information sets at level $1,2,\ldots, \ell-1$. In the second step, we exchange the product and summation, this is valid due to Lemma \ref{lem:efg1}. The third and the fourth step follow from the definition of $\mS_{i, h\mid \{h_\tau, a_{\tau}\}_{\tau \in [t]}}$

Combining Eq.~\eqref{eq:sample-3}\eqref{eq:sample-4}\eqref{eq:sample-5}, we have proved \eqref{eq:sample-goal} for the second case.

We have finished the induction. The correctness of sampling procedure follows directly by plugging $t = \Phi$ to Eq.~\eqref{eq:sample-goal}. We complete the proof here.
\end{proof}

Combing Lemma \ref{lem:partition-recursive} and Lemma \ref{lem:efg-sample}, we complete the proof for Lemma \ref{lem:efg-mwu}.
	\section{Missing proof from Section \ref{sec:lower}}
\label{sec:lower-app}

We first present the missing details of the technical Lemma \ref{lem:tech}.
Let 
\[
R_{i^{*}} = \sum_{h\in [H]}(p_h(1) + p_h(2)) r_{h}(i^*) \quad \text{and} \quad 
R_{\ALG} = \sum_{h\in [H]}\sum_{i \in [2]}p_{h}(i)r_{h}(i)
\]
be the total (weighted) reward of $i^*$ and the total reward of the algorithm. 
Let $\D_1, \D_2$ be the distribution of two coins. For any $h\in [H]$, let $r_{1:h} = (r_1, \ldots, r_h) \sim (\D_1\times\D_2)^h$ be the reward of the first $h$ days.

First, the reward of $i^*$ satisfies
\begin{lemma}
\label{lem:tech-opt}
    We have 
    \[
    \E[R_{i^{*}}] = \left(\frac{1}{2}+\Delta\right) \cdot \sum_{h\in [H]}\E[p_h(1) + p_h(2)].
    \]
\end{lemma}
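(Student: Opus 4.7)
The plan is to apply the tower property of conditional expectation one round at a time. The crucial observation is that, although $p_h$ depends on the past rewards $r_1,\ldots,r_{h-1}$ (which themselves depend on the identity of the biased coin $i^*$), once we condition on the entire history up through the choice of $p_h$, the fresh sample $r_h(i^*)$ is still an independent Bernoulli draw with mean $\tfrac{1}{2}+\Delta$.

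Concretely, I would fix a round $h\in[H]$ and let $\mathcal{F}_{h-1}$ be the $\sigma$-algebra generated by $i^*$, by the past rewards $r_1,\ldots,r_{h-1}$, and by the algorithm's internal randomness used to produce $p_1,\ldots,p_h$. Then $p_h(1)+p_h(2)$ is $\mathcal{F}_{h-1}$-measurable, while $r_h(i^*)$ is drawn from $B_{1/2+\Delta}$ independently of $\mathcal{F}_{h-1}$ by the definition of the two-coin game. Pulling the measurable factor out of the inner expectation yields
\[
\E\bigl[(p_h(1)+p_h(2))\,r_h(i^*)\,\big|\,\mathcal{F}_{h-1}\bigr] \;=\; \bigl(p_h(1)+p_h(2)\bigr)\cdot\bigl(\tfrac{1}{2}+\Delta\bigr).
\]
Taking outer expectation gives the per-round identity
\[
\E\bigl[(p_h(1)+p_h(2))\,r_h(i^*)\bigr] \;=\; \bigl(\tfrac{1}{2}+\Delta\bigr)\cdot\E[p_h(1)+p_h(2)],
\]
and summing over $h\in[H]$ together with linearity of expectation produces the claimed formula for $\E[R_{i^*}]$.

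There is no real technical obstacle here; the only point requiring care is the conditioning. One must include $i^*$ in $\mathcal{F}_{h-1}$ so that the adaptive dependence of $p_h$ on past $i^*$-correlated outcomes can be absorbed into the measurable factor, while simultaneously keeping the fresh coin flip $r_h(i^*)$ independent of $\mathcal{F}_{h-1}$ so that its conditional mean is exactly $\tfrac{1}{2}+\Delta$.
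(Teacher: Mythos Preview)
Your proposal is correct and follows essentially the same approach as the paper: both factor each summand $\E[(p_h(1)+p_h(2))r_h(i^*)]$ using the fact that the fresh draw $r_h(i^*)\sim B_{1/2+\Delta}$ is independent of $p_h$. The paper states this independence directly, while you make the conditioning on $\mathcal{F}_{h-1}$ (including $i^*$) explicit; the underlying argument is the same.
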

\begin{proof}
Telescoping over $h \in [H]$, we have
\begin{align*}
\E[R_{i^{*}}] = &~  \sum_{h\in [H]} \E\left[(p_h(1) + p_h(2))r_h(i^{*})\right]\\
= &~ \sum_{h\in [H]} \E[r_h(i^{*})]\cdot \E[p_h(1) + p_h(2)]\\
= &~ (\frac{1}{2} + \Delta) \sum_{h\in [H]}\E[p_h(1) + p_h(2)].
\end{align*} 
The second step follows from $r_{h}(i^{*})$ is independent of $p_h$ and the third step follows from $\E[r_h(i^{*})] = \frac{1}{2} + \Delta$.
\end{proof}

The following bound on the Bernoulli distribution $B_{1/2}^h$ and $B_{1/2+\Delta}^h$ is standard. 
\begin{lemma}
\label{lem:tv}
Let $\Delta \in (0,1/20]$ and $H = \frac{1}{400\Delta^2}$.
For any $h\in [H]$, we have 
\[
\TV(B_{1/2}^h, B_{1/2+\Delta}^{h}) \leq \frac{1}{10}.
\]
\end{lemma}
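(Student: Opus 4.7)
The plan is to reduce to the standard Pinsker inequality bound for Bernoulli product measures. Since the TV distance between product distributions is monotone in the number of coordinates, it suffices to prove the bound at $h=H$.

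First I would invoke the tensorization (chain rule) of KL divergence to reduce to the single-coordinate case:
\[
\KL(B_{1/2}^{H} \,\|\, B_{1/2+\Delta}^{H}) \;=\; H \cdot \KL(B_{1/2} \,\|\, B_{1/2+\Delta}).
\]
Next I would compute the single-coordinate KL in closed form. A direct calculation gives
\[
\KL(B_{1/2} \,\|\, B_{1/2+\Delta}) \;=\; \tfrac{1}{2}\log\tfrac{1/2}{1/2+\Delta} + \tfrac{1}{2}\log\tfrac{1/2}{1/2-\Delta} \;=\; -\tfrac{1}{2}\log(1 - 4\Delta^{2}).
\]
Using the elementary inequality $-\log(1-x) \le x/(1-x)$ for $x \in (0,1)$ with $x = 4\Delta^{2} \le 4/400 = 1/100$, this is at most $\tfrac{1}{2}\cdot \tfrac{4\Delta^{2}}{1-1/100} \le 3\Delta^{2}$.

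Combining the two, $\KL(B_{1/2}^{H} \,\|\, B_{1/2+\Delta}^{H}) \le 3 H \Delta^{2} = 3/400$. Finally, Pinsker's inequality $\TV(P,Q) \le \sqrt{\KL(P\,\|\,Q)/2}$ yields
\[
\TV(B_{1/2}^{H}, B_{1/2+\Delta}^{H}) \;\le\; \sqrt{3/800} \;<\; 1/10,
\]
and monotonicity in $h$ finishes the proof. The only real ``obstacle'' here is bookkeeping the numerical constants: one has to pick an upper bound on $\KL(B_{1/2}\,\|\,B_{1/2+\Delta})/\Delta^{2}$ that, after multiplying by $H = 1/(400\Delta^{2})$ and passing through Pinsker, lands below $1/10$. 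The choice $\Delta \le 1/20$ and the constant $400$ in $H$ are both calibrated precisely so that this works with margin to spare.
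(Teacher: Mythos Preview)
Your proof is correct and follows essentially the same approach as the paper: Pinsker's inequality combined with tensorization of KL divergence and an elementary upper bound on $\KL(B_{1/2}\,\|\,B_{1/2+\Delta})$. The only cosmetic differences are that you reduce to $h=H$ via monotonicity up front and compute a sharper constant ($3\Delta^2$ versus the paper's $8\Delta^2$), neither of which changes the argument.
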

\begin{proof}
For any $h\in [H]$, we have
\begin{align*}
\TV(B_{1/2}^h, B_{1/2+\Delta}^{h}) \leq &~ \sqrt{\frac{1}{2}\KL(B_{1/2}^h || B_{1/2+\Delta}^{h})}\\
= &~ \sqrt{\frac{h}{2} \KL(B_{1/2}|| B_{1/2+\Delta})}\\
\leq &~ 2\sqrt{h}\Delta \leq \frac{1}{10}.
\end{align*}
The first step follows from Pinsker inequality, the second step follows the independence, the third step follows from $\KL(B_{1/2} || B_{1/2 +\Delta}) \leq 8\Delta^2$ and the last step follows from $H = 1/400\Delta^2$.
\end{proof}

Next, we bound the reward of algorithm.
\begin{lemma}
\label{lem:tech-algo}
For any algorithm, we have 
\[
\E[R_{\ALG}] \leq \left(\frac{1}{2}+\frac{\Delta}{2}\right)\sum_{h \in [H]}\E[p_h(1) + p_{h}(2)] + \frac{3}{20}\Delta H.
\]
\end{lemma}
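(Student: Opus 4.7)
The plan is to condition on the identity of the biased coin $i^*$ and exploit two structural facts: (i) $p_h$ is measurable with respect to the observed history $r_{1:h-1}$ and the algorithm's internal randomness; (ii) conditional on $i^*$, the vector $r_h$ is independent of $p_h$ and has known per-coordinate means. I would introduce the shorthand $a_h := \E[p_h(1)\mid i^*=1]$, $b_h := \E[p_h(1)\mid i^*=2]$, $c_h := \E[p_h(2)\mid i^*=1]$, $d_h := \E[p_h(2)\mid i^*=2]$. Using these facts, the per-round reward decomposes as
\[
\E\bigl[p_h(1) r_h(1) + p_h(2) r_h(2)\bigr] \;=\; \tfrac{1}{2}\E[p_h(1)+p_h(2)] + \tfrac{\Delta}{2}(a_h+d_h),
\]
where the first term is the ``unbiased'' baseline and the second is the algorithm's bonus from leaning toward the biased coin.

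Since $\E[p_h(1)+p_h(2)] = \tfrac{1}{2}(a_h+b_h+c_h+d_h)$, a short algebraic rewrite expresses the bonus as $\tfrac{\Delta}{2}\E[p_h(1)+p_h(2)] + \tfrac{\Delta}{4}[(a_h-b_h)+(d_h-c_h)]$. Summing over $h$, the claim reduces to showing that the ``advantage gap'' obeys $\sum_h [(a_h-b_h)+(d_h-c_h)] \leq \tfrac{3}{5}H$.

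The key step is to bound $|a_h-b_h|$ and $|d_h-c_h|$ by a data-processing / total-variation argument. Since $p_h(1), p_h(2) \in [0,1]$ are deterministic functions of the history together with the algorithm's internal randomness (which is independent of $i^*$), the data-processing inequality yields
\[
|a_h - b_h|,\; |d_h-c_h| \;\leq\; \TV\bigl(\mL(r_{1:h-1}\mid i^*=1),\; \mL(r_{1:h-1}\mid i^*=2)\bigr).
\]
The two conditional laws are the swapped product distributions $B_{1/2+\Delta}^{h-1}\otimes B_{1/2}^{h-1}$ and $B_{1/2}^{h-1}\otimes B_{1/2+\Delta}^{h-1}$, so by the triangle inequality for TV of product measures together with Lemma~\ref{lem:tv}, this TV is at most $2\,\TV(B_{1/2}^{h-1},B_{1/2+\Delta}^{h-1}) \leq \tfrac{2}{10} = \tfrac{1}{5}$ for every $h\leq H$. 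Hence $(a_h-b_h)+(d_h-c_h) \leq \tfrac{2}{5}$, and summing yields $\tfrac{\Delta}{4}\cdot \tfrac{2}{5} H = \tfrac{\Delta H}{10} \leq \tfrac{3\Delta H}{20}$, matching the claim.

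The only potential subtlety is making the data-processing step rigorous: I need to confirm that, in the adversarial online-learning model, the law of $p_h$ given $i^*$ depends on $i^*$ \emph{only} through $r_{1:h-1}$, so the algorithm's internal randomness truly cancels when comparing the two conditional laws. This is built into the problem set-up; once it is noted, no step of the plan presents a real obstacle beyond careful bookkeeping.
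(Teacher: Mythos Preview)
Your proof is correct and follows essentially the same strategy as the paper's: decompose the per-round reward into the $\tfrac{1}{2}$ baseline plus a ``bonus'' term that measures the algorithm's ability to lean toward the biased coin, then bound that bonus via the total-variation distance between the Bernoulli product laws using Lemma~\ref{lem:tv}. The only difference is in the bookkeeping of the TV step: the paper routes both conditional laws through the common reference $(B_{1/2}\times B_{1/2})^{h-1}$ and then converts back to $(\D_1\times \D_2)^{h-1}$ (incurring two error terms for a total of $\tfrac{3}{20}\Delta$ per round), whereas you compare the two conditional laws directly by the triangle inequality, which is slightly cleaner and in fact yields the tighter constant $\tfrac{\Delta H}{10}\le \tfrac{3\Delta H}{20}$.
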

\begin{proof}
First, we telescope $R_{\ALG}$ over $h\in [H]$
\begin{align}
\E[R_{\ALG}]  = &~ \sum_{h\in [H]}\E\left[\sum_{i \in [2]}p_{h}(i)r_{h}(i)\right]\notag \\
= &~ \sum_{h\in [H]}\E\left[\sum_{i \in [2]}p_{h}(i)(r_{h}(i) -1/2)\right] + \frac{1}{2}\sum_{h\in [H]}\E[p_h(1) + p_h(2)]
\label{eq:tech2}
\end{align}
where the first step follows from the linearity of the expectation.

We bound the RHS of Eq.~\eqref{eq:tech2}. For any fixed $h\in [H]$, we have 
\begin{align}
&~ \E\left[\sum_{i \in [2]}p_{h}(i)(r_{h}(i) -1/2)\right] \notag \\
= &~ \frac{1}{2} \E\left[\sum_{i \in [2]}p_{h}(i)(r_{h}(i) -1/2) | i^{*} = 1\right] + \frac{1}{2} \E\left[\sum_{i \in [2]}p_{h}(i)(r_{h}(i) -1/2) | i^{*} = 2\right]\notag \\
= &~ \frac{1}{2}\E_{r_{1:h-1} \sim (B_{1/2 + \Delta} \times B_{1/2})^{h-1}}\left[\E_{r_h \sim B_{1/2+\Delta} \times B_{1/2}}\left[\sum_{i\in [2]}p_h(i)(r_h(i) - 1/2) | r_{1:h-1}\right]\right] \notag\\
&~ +\frac{1}{2}\E_{r_{1:h-1} \sim (B_{1/2} \times B_{1/2 + \Delta})^{h-1}}\left[\E_{r_h \sim B_{1/2} \times B_{1/2 + \Delta}}\left[\sum_{i\in [2]}p_h(i)(r_h(i) - 1/2) | r_{1:h-1}\right]\right] \notag\\
\leq &~ \frac{1}{2}\E_{r_{1:h-1} \sim (B_{1/2} \times B_{1/2})^{h-1}}\left[\E_{r_h \sim B_{1/2+\Delta} \times B_{1/2}}\left[\sum_{i\in [2]}p_h(i)(r_h(i) - 1/2) | r_{1:h-1}\right]\right] \notag\\
&~ +\frac{1}{2}\E_{r_{1:h-1} \sim (B_{1/2} \times B_{1/2})^{h-1}}\left[\E_{r_h \sim B_{1/2} \times B_{1/2+\Delta}}\left[\sum_{i\in [2]}p_h(i)(r_h(i) - 1/2) | r_{1:h-1}\right]\right] + \frac{1}{10}\Delta\notag\\
= &~ \frac{1}{2}\Delta \cdot \E_{r_{1:h-1}\sim (B_{1/2}\times B_{1/2})^{h-1}}\left[\E[p_{h}(1) + p_h(2) | r_{1:h-1}]\right]  + \frac{1}{10}\Delta\notag\\
\leq &~ \frac{1}{2}\Delta\cdot  \E_{r_{1:h-1}\sim (\D_1\times\D_2)^{h-1}}\left[\E[p_h(1) + p_h(2) | r_{1:h-1}]\right]   + \frac{1}{20}\Delta + \frac{1}{10}\Delta\notag\\
= &~ \frac{1}{2}\Delta\cdot  \E[p_h(1) + p_h(2)]  +\frac{3}{20}\Delta \label{eq:tech1}.
\end{align}
The first step holds since $i^{*}$ is chosen uniformly at random from $[2]$. The second step follows from the law of expectation and the fact that $p_h$ is determined by $r_{1:h-1}$.
The third step holds since (1) $\TV(B_{1/2}^{h-1} , B_{1/2+\Delta}^{h-1}) \leq \frac{1}{10}$ (see Lemma \ref{lem:tv}) and (2) for fixed $r_{1:h-1}$,
\begin{align*}
\E_{r_h\sim B_{1/2+\Delta}\times B_{1/2}}\left[\sum_{i\in [2]}p_h(i)(r_h(i) - 1/2) | r_{1:h-1}\right]   \leq \Delta 
\end{align*}
and
\begin{align*}
\E_{r_h\sim B_{1/2}\times B_{1/2+\Delta}}\left[\sum_{i\in [2]}p_h(i)(r_h(i) - 1/2) | r_{1:h-1}\right] \leq \Delta.
\end{align*}
The fourth step holds since $p_h$ is determined by $r_{1:h-1}$. The fifth step holds since $\TV((B_{1/2}\times B_{1/2})^{h-1}, (\D_1\times \D_2)^{h-1}) \leq \frac{1}{10}$ (see Lemma \ref{lem:tv}) and $p_h(1) + p_h(2) \leq 1$ and the last step follows from the law of expectation.

Combining Eq.~\eqref{eq:tech1} and Eq.~\eqref{eq:tech2}, we have
\begin{align*}
\E[R_{\ALG}] = \left(\frac{1}{2} + \frac{1}{2}\Delta\right) \sum_{h\in [H]}\E[p_h(1) + p_h(2)] + \frac{3}{20}\Delta H.
\end{align*}
We complete the proof here.
\end{proof}

Combining Lemma \ref{lem:tech-opt} and \ref{lem:tech-algo}, we conclude the proof of Lemma \ref{lem:tech}.

We next prove Lemma \ref{lem:alternative}
\begin{proof}[Proof of Lemma \ref{lem:alternative}]
We first analyse the LHS of Lemma \ref{lem:alternative}. By the definition of $X_i$, one has
\begin{align}
\sum_{i \in [n]}X_i = &~ \sum_{i \in [n]}\sum_{t \in [S_{a(i)} -1] }p_t(i) = \sum_{a \in [0:n/2-1]} \sum_{t \in[S_{a} - 1]}p_t(2a+1) + p_{t}(2a+2)\notag \\
= &~ \sum_{a \in [0:n/2-1]} \sum_{t \in [S_a: E_a]} \sum_{a' > a}p_t{(2a'+1)} + p_t(2a' + 2).\label{eq:alternative1}
\end{align}

For any node $a, b \in \mT$, we write $b \subseteq a$ if $b$ is a node in the subtree of $a$. 
For the RHS, we have
\begin{align}
\sum_{\ell \in [0:L-1]}\sum_{a \in \mT_{\ell}}M_a = &~ \sum_{\ell \in [0:L-1]}\sum_{a \in \mT_{\ell}} \sum_{t \in [S_a: E_a]}\sum_{i \in \N^{+}(a)}p_t(i) \notag\\
= &~ \sum_{\ell \in [0:L-1]}\sum_{a \in \mT_{\ell}} \sum_{b \in [0:n/2-1], b \subseteq a} \sum_{t\in [S_b: E_b]}\sum_{i\in \N^{+}(a)} p_t(i)\notag \\
= &~ \sum_{b \in [n/2-1]}\sum_{t\in [S_b: E_b]}\sum_{\ell \in [0:L-1]}\sum_{a \in \mT_\ell, b\subseteq a}\sum_{i \in \N^{+} (a)}p_t(i)\notag\\
= &~ \sum_{b \in [n/2-1]}\sum_{t\in [S_b: E_b]}\sum_{\ell\in [0:L-1]}\sum_{i \in \N^{+}(b_{L}\ldots b_{\ell + 1})}p_{t}(i)\notag \\
= &~ \sum_{b \in [n/2-1]}\sum_{t\in [S_b: E_b]} \sum_{b' > b}p_t(2b+1) + p_t(2b+2).\label{eq:alternative2}
\end{align}
The first step follows from the definition of $M_a$. In the second step, we split the interval $[S_a: E_a]$ of node $a$ into intervals of its leaf nodes $\cup_{b\in [0:n/2-1], b\subseteq a} [S_b: E_b]$. We exchange summation in the third step and the last step follows from the definition of $\N^{+}(b)$.

Combining Eq.~\eqref{eq:alternative1}\eqref{eq:alternative2}, we complete the proof.
\end{proof}

Finally, we prove Lemma \ref{lem:basic-expecation}
\begin{proof}[Proof of Lemma \ref{lem:basic-expecation}]
For any node $a \in \mT$, let $T_a$ be the number of days spent over node $a$.
For any level $\ell \in [0:L]$ and node $a \in \mT_\ell$, we prove 
\begin{align*}
\E[T_a | a \in \mV_{\ell}] = H C_K^{\ell} \quad \text{and} \quad E[T_a | a \notin \mV_{\ell}] = 0.
\end{align*}
We prove by induction on $\ell$. The claim holds trivially for $\ell = 0$ because Nature spends $H$ days over any leaf node $a$ it visits. Suppose it holds up to level $\ell-1$, then at level $\ell$, suppose Nature visits the node $a \in \mT_{\ell}$, then we have
\begin{align*}
\E[T_{a}| a\in \mV_{\ell}] = &~ \sum_{k=0}^{K-1}\E[T_{a.k} | a \in \mV_{\ell}] = \sum_{k=0}^{K-1}\left(1-\frac{1}{2K}\right)^{k}\E[T_{a, k} | a.k\in \mV_{\ell-1}]\\
= &~ H C_{K}^{\ell -1} \cdot C_{K} = H C_{K}^{\ell}.
\end{align*}
The second step holds since Nature skips each child node with probability $\frac{1}{2K}$, the third step follows from the inductive hypothesis.

Since Nature always visits the root node, one has
\begin{align*}
\E[T_{\ALG}] = H C_K^{L} \in \left[2^{-L} \cdot \frac{K^{L}}{400\Delta^{2}}, \frac{K^{L}}{400\Delta^{2}} \right].
\end{align*}
The last step follows from $C_K \in (K/2, K)$ and $H = 1/400\Delta^{2}$. This completes the proof.
\end{proof}

\end{document}